\newcommand{\leqnomode}{\tagsleft@true\let\veqno\@@leqno}
\newcounter{algsubstate}
\newtcolorbox{probbox}{arc=6pt,
                      colback=white!100,
                      colframe=black!50,
                      before skip=6pt,
                      after skip=6pt,
                      boxsep=1pt,
                      left=6pt,
                      right=6pt,
                      top=4pt,
                      bottom=4pt}
\newcommand{\decprob}[3]{
   \begin{center}%
    \begin{minipage}{0.96\linewidth}%
      \begin{probbox}
      \textsc{#1}\\[0.2ex]
      \textbf{Input:} #2\\[0.2ex]
      \textbf{Question:} #3
      \end{probbox}
    \end{minipage}%
  \end{center}
}
\theoremstyle{plain}
\newtheorem{theorem}{Theorem}
\newtheorem{lemma}[theorem]{Lemma}
\newtheorem{corollary}[theorem]{Corollary}
\newtheorem{qu}{Question}
\theoremstyle{definition}
\newtheorem{ex}[theorem]{Example}
\newtheorem{remark}[theorem]{Remark}
\newcommand{\cB}{\mathcal{B}}
\newcommand{\cC}{\mathcal{C}}
\newcommand{\cF}{\mathcal{F}}
\newcommand{\cG}{\mathcal{G}}
\newcommand{\cI}{\mathcal{I}}
\newcommand{\cJ}{\mathcal{J}}
\newcommand{\cH}{\mathcal{H}}
\newcommand{\cK}{\mathcal{K}}
\newcommand{\cM}{\mathcal{M}}
\newcommand{\cP}{\mathcal{P}}
\newcommand{\cT}{\mathcal{T}}
\newcommand{\FC}{\mathbb{T}}
\newcommand{\IS}{\mathbb{I}}
\newcommand{\PP}{\mathbb{P}}
\newcommand{\hmk}{\mathcal{K}^{-1}}
\newcommand{\hmt}{\mathcal{M}^{-1}}
\def\ox{\overline{x}}
\def\bx{\mathbf{x}}
\def\final{0}  
\newcommand{\krnote}[1]{{\color{red}[{\tiny Krist\'of: \bf #1}]\marginpar{\color{red}*}}}
\newcommand{\enote}[1]{{\color{blue}[{\tiny Endre: \bf #1}]\marginpar{\color{blue}*}}}
\newcommand{\knote}[1]{{\color{orange}[{\tiny Kaz: \bf #1}]\marginpar{\color{red}*}}}
\newcommand{\krnote}[1]{}
\newcommand{\enote}[1]{}
\newcommand{\knote}[1]{}
\title{Hypergraph Horn Functions}
\author{
	Krist{\'o}f B{\'e}rczi\thanks{MTA-ELTE Momentum Matroid Optimization Research Group and MTA-ELTE Egerv\'ary Research Group, Department of Operations Research, E{\"o}tv{\"o}s Lor{\'a}nd University, Budapest, Hungary. 
	Email: \texttt{kristof.berczi@ttk.elte.hu}.} 
\and
	Endre Boros\thanks{MSIS Department and RUTCOR, Rutgers University, Piscataway, New Jersey, USA. Email: \texttt{endre.boros@rutgers.edu}.}
\and
	Kazuhisa Makino\thanks{Research Institute for Mathematical Sciences (RIMS) Kyoto University, Kyoto, Japan. Email: \texttt{makino@kurims.kyoto.ac.jp}.}
}
\begin{document}
\date{}
\maketitle

\begin{abstract}
Horn functions form a subclass of Boolean functions possessing interesting structural and computational properties. These functions play a fundamental role in algebra, artificial intelligence, combinatorics, computer science, database theory, and logic. 

In the present paper, we introduce the subclass of \emph{hypergraph Horn} functions that generalizes matroids and equivalence relations. We provide multiple characterizations of hypergraph Horn functions in terms of implicate-duality and the closure operator, which are respectively regarded as generalizations of matroid duality and Mac Lane\,--\,Steinitz exchange property of matroid closure. We also study algorithmic issues on hypergraph Horn functions, and show that the recognition problem (i.e., deciding if a given definite Horn CNF represents a hypergraph Horn function) and key realization (i.e., deciding if a given hypergraph is realized as a key set by a hypergraph Horn function) can be done in polynomial time, while implicate sets can be generated with polynomial delay. 
  \bigskip

  \noindent \textbf{Keywords:} Algorithms, Duality, Horn functions, Hypergraphs
\end{abstract}

\section{Introduction}
\label{sec:introduction}

In this paper we introduce and study a special subclass of Horn functions that is rich in mathematical structure and has advantageous algorithmic properties. 

Horn logic is a segment of propositional logic that came to prominence with the advent of computers, relational databases, logic programming, see e.g., 
\cite{McKinsey1943,Horn1951,CLM81,CL73,SDPF81,Fag82,Hod93,IM87,Mai83,MW88,Mak87}.
Horn expressions and functions appear in many equivalent forms in mathematics and computers science, such as directed hypergraphs, implications systems, 
closure operators, functional dependencies in relational databases, formal concept analysis, etc. We refer the reader to \cite{adaricheva_et_al:DR:2014:4619} for a more comprehensive overview of these equivalent areas and related literature. 

In this paper we use conjunctive normal form (CNF) representations for Boolean functions, and view Horn clauses equivalently as simple implications. For instance if $x_i$, $i=1,\dots,n$ are propositional variables (that is logical variables that can only take false/true values), then $\{x_1,x_2\}\to x_3$ is such a simple implication that can also be represented as a clause (an elementary disjunction) of the form $(\ox_1\vee \ox_2\vee x_3)$, where $\ox_i$ denotes the logical negation of $x_i$, $i=1,\dots,n$. Horn functions can be represented by a collection (conjunction) of such implication rules. If for this example we also have $\{x_1,x_3\}\to x_2$ and $\{x_2,x_3\}\to x_1$ as rules of the given Horn system, then we call $\{x_1,x_2\}\to x_3$ circular. The subject of our study is the family of Horn functions that can be represented by a collection of circular implication rules. Since such a circular rule can be represented by the set of the propositional variables appearing in it (since any one is implied by the others), the Horn functions we are interested in can be represented by a hypergraph over the set of variables. For this reason we call this subfamily \textit{hypergraph Horn}. 

The family of hypergraph Horn functions turns out to be highly structured and still very general. They generalize equivalence relations, and more generally matroids. We provide various characterizations for this class, allowing to derive efficient algorithms for several related algorithmic problems. 

We introduce a new type of ``duality'', called implicate-duality or $i$-duality for short. We show that every Boolean function has a unique $i$-dual, which is always hypergraph Horn. Furthermore, $i$-duality is an involution over the set of hypergraph Horn functions. It turns out that $i$-duality generalizes matroid duality. More precisely, if a hypergraph Horn function corresponds to a matroid, then its $i$-dual corresponds to the dual matroid. 
We provide a characterization for self $i$-dual functions, which is surprisingly analogous to the characterization of self dual monotone Boolean functions (see e.g., \cite{bioch1995complexity}).
We also characterize hypergraph Horn functions in terms of the standard closure operator. Our characterization can be regarded as an extension of the Mac Lane\,--\,Steinitz exchange property of matroid closure (see the companion paper \cite{matroid_horn}). 

While satisfiability (SAT) is easy for Horn CNFs, one typical difficulty in this area arises from the fact that Horn functions have many different Horn CNF representations. For instance, an arbitrary Horn function may have several circular rules and several rules that are not circular. It may be representable by a collections of circular rules, but it may also have representations that involve non-circular rules. We provide an efficient algorithm that recognizes if a given Horn CNF defines a hypergraph Horn function or not. If yes, the algorithm outputs a circular representation the size of which is polynomially bounded by the size of the input CNF. We also provide an algorithm that generates all circular rules incrementally efficiently for a Horn function represented by a Horn CNF. Note that there might be an exponential number of such rules.

Keys are another important notion of Horn theory. A subset of the variables is a key of a given Horn function if the truth of these variables imply the truth of all other variables. The set of minimal keys plays important roles in applications, in particular in databases. We provide an efficient algorithm that decides if a given hypergraph is the set of minimal keys of a hypergraph Horn function or not, and if yes, it outputs a circular representation of such a function the size of which is polynomially bounded by the size of the given hypergraph. 
\medskip

The rest of the paper is organized as follows. While we use standard terminology and notation, for clarity we introduce precise definitions in the next section. In Section \ref{s-hypergraph Horn} we introduce the class of hypergraph Horn functions and $i$-duality, and prove a number of structural properties. In Section \ref{sec:char} several equivalent characterizations are shown. In Section \ref{sec-algo1}
we provide the efficient algorithms mentioned above, while in the last section we close with listing a few open problems. 

\section{Preliminaries}
\label{sec:preliminaries}

\paragraph{Hypergraphs.} 
For a finite set $V$, a family $\cH\subseteq 2^V$ of its subsets is called  a {\em hypergraph}, where 
$H \in \cH$ is  called a {\em hyperedge} of $\cH$. A hypergraph $\cH$ is called {\em Sperner} if no distinct hyperedges of $\cH$ contain one another. For a subset $X\subseteq V$ we call $V\setminus X$ its \emph{complement}, and we denote by $\cH^c=\{V\setminus H\mid H\in\cH\}$ the \emph{complementary family} of $\cH$. Note that the operator ``$c$" is an  involution, i.e., $\left(\cH^c\right)^c=\cH$ holds for all hypergraphs. A subset $T\subseteq V$ is called a \emph{transversal} of $\cH$ if $T\cap H\neq\emptyset$ for all hyperedges $H\in\cH$. We denote by $\cH^d$ the \emph{family of minimal transversals} of $\cH$. For simplicity, we omit the parentheses from the notation when applying these operations repeatedly, for example, we write $\cH^{dc}=(\cH^d)^c$ and $\cH^{cd}=(\cH^c)^d$. For a Sperner hypergraph $\cH$, the operator ``$d$'' is also an involution, hence we have $\cH^{dccd}=\cH^{cddc}=\cH$ for such hypergraphs. Note that the family $\cH^{dc}$ is the \emph{family of maximal independent sets} of $\cH$, where a set $I \subseteq V$ is called {\em independent}  of $\cH$ if it contains no hyperedge of $\cH$. The family $\cH^{cd}$ consists of all minimal subsets of $V$ that are not contained in a  hyperedge of $\cH$. Every Sperner hypergraph $\cH$ defines an \emph{independence system} $\cI$ consisting of all independent sets of $\cH$, see e.g.,~\cite{Welsh1976}. In this correspondence,  maximal independent sets of $\cH$ are also called \emph{bases} of the associated independence system $\cI$, while minimal {\em dependent} sets (i.e.,  hyperedges) of $\cH$ are called  \emph{circuits} of $\cI$.
Typically we reserve the notation $\cB$ and $\cC$ for the families of bases and circuits, respectively. Note that for all independence systems the families of bases and circuits are Sperner, and thus we have
\begin{equation*}\label{e-bases-circuits}
	\cB^{cd}=\cC ~~~\text{ and equivalently }~~~ \cC^{dc}=\cB.
\end{equation*}
Independent sets are exactly the subsets of hyperedges of $\cB=\cC^{dc}$, or equivalently, subsets that do not contain a hyperedge of $\cC$. 

For an arbitrary hypergraph $\cH\subseteq 2^V$ we denote by $\cH^\cap$ the \emph{intersection closure} of $\cH$, defined by 
\[
\cH^\cap ~=~ \left\{ \left.\bigcap_{F\in\cF} F ~\right|~ \cF\subseteq \cH \right\}.
\]
We note that the intersection of an empty family is defined as $V$, and thus we have $V\in \cH^\cap$ for all hypergraphs $\cH\subseteq 2^V$. Analogously, we denote by $\cH^\cup$ the \emph{union closure} of $\cH$ defined as
\[
\cH^\cup ~=~ \left\{ \left.\bigcup_{F\in\cF} F ~\right|~ \cF\subseteq \cH \right\}.
\]
We note that the union of an empty family is defined as the empty set, and thus we have $\emptyset\in\cH^\cup$ for all hypergraphs $\cH$. 

\paragraph{Horn functions.} 
We denote by $V$ the set of $n$ Boolean variables $x$ and call these together with their negations $\overline{x}$ as \emph{literals}.  Members of $V$ are called \emph{positive literals} while their negations are \emph{negative literals}. A disjunction of a subset of the literals is called a \emph{clause} if it contains no complementary pair of literals $x$ and $\overline{x}$, and the conjunction of clauses is a called a \emph{conjunctive normal form} (or in short a {\em CNF}). It is well-known (see e.g.,~\cite{CH11}) that every Boolean function $f:\{0,1\}^V\rightarrow\{0,1\}$ can be represented by a CNF (typically not in a unique way). We use Greek letters to denote CNFs, and Latin letters to denote (Boolean) functions. For a CNF $\Psi=\bigwedge_kC_k$, let $|\Psi|$ denote the number of clauses in $\Psi$, and let $\|\Psi\|$ denote the length of $\Psi$, i.e., $\|\Psi\|=\sum_k|C_k|$.

Truth assignments (i.e., Boolean vectors) $\bx=(x_1,\dots,x_n)\in\{0,1\}^V$ can be viewed equivalently as characteristic vectors of subsets. For a subset $Z\subseteq V$ we denote by $\chi_Z\in \{0,1\}^V$ its characteristic vector, i.e., $(\chi_Z)_i=1$ if and only if $i \in Z$. Since we use primarily a combinatorial notation in this paper, we use for a function $f$ and Boolean  expression $\Phi$ the notation $f(Z)$ and $\Phi(Z)$ instead of $f(\chi_Z)$ and $\Phi(\chi_Z)$, to denote the evaluation of $f$ and  $\Phi$ at the binary vector $\chi_Z$, respectively. We say that a set $Z\subseteq V$ is a \emph{true set} of $f$ if $f(Z)=1$, and a \emph{false set} otherwise. We denote by $\cT(f)$ and $\cF(f)$ the \emph{families of true sets} and \emph{false sets} of $f$, respectively. Every Boolean expression defines/represents a unique Boolean function. If $A$ and $B$ denote Boolean functions or Boolean expressions, we write $A=B$ if $A(X)=B(X)$ for all $X\subseteq V$, and  $A\leq B$ if $A(X)\leq B(X)$ for all $X\subseteq V$, where $B$ is called a \emph{majorant} of $A$ in the latter case. We also write $A <B$ if $A\leq B$ and $A\not=B$. 

A clause is called \emph{definite Horn} if it contains exactly one positive literal. 
It is easy to see that definite Horn clauses represent simple implications. Namely, for a proper subset $B\subsetneq V$ and $v\in V\setminus B$  the implication $B\to v$ is equivalent to the definite Horn clause $C=v\vee \left(\bigvee_{u\in B}\overline{u}\right)$: the true sets of both expressions are exactly the sets $T\subseteq V$ such that either $T\not\supseteq B$ or $T\supseteq  B\cup\{v\}$. 
A CNF is called  {\em definite Horn} if it consists of definite Horn clauses, and a Boolean function is called  {\em definite Horn} if it can be represented by a definite Horn CNF. 
The following characterization of definite Horn functions is  well-known. 

\begin{lemma}[see e.g.,
\cite{McKinsey1943,CH11}]
\label{l-E-intersection-closed}
	A Boolean function $f$ is definite Horn if and only if the family $\cT(f)$  of its true sets is closed under intersection and contains $V$.
	\hfill$\Box$
\end{lemma}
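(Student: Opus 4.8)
The plan is to prove the two implications separately. The forward direction is routine and uses only that intersection preserves implication semantics, whereas the converse carries all the substance and rests on a closure operator that is well defined precisely under the stated hypotheses.

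For the forward direction, suppose $f$ is represented by a definite Horn CNF $\Phi=\bigwedge_k C_k$, where each clause $C_k$ encodes an implication $B_k\to v_k$. Evaluating at $V$ satisfies every such implication, since the consequent $v_k$ lies in $V$; hence $V\in\cT(f)$. For closure under intersection, take $T_1,T_2\in\cT(f)$ and any clause $B_k\to v_k$. If $B_k\subseteq T_1\cap T_2$, then $B_k\subseteq T_1$ and $B_k\subseteq T_2$, so the implication semantics force $v_k\in T_1$ and $v_k\in T_2$, giving $v_k\in T_1\cap T_2$. Thus $T_1\cap T_2$ satisfies every clause of $\Phi$, so $T_1\cap T_2\in\cT(f)$.

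For the converse, assume $\cT(f)$ is closed under intersection and $V\in\cT(f)$. For each $X\subseteq V$ set $\cl(X)=\bigcap\{T\in\cT(f)\mid X\subseteq T\}$. The family being intersected is nonempty because $V$ belongs to it, and it is closed under intersection by hypothesis, so $\cl(X)$ is itself a true set, indeed the unique smallest true set containing $X$; in particular $\cl$ is monotone and extensive, and $X\in\cT(f)$ if and only if $\cl(X)=X$. I then define the definite Horn CNF
\[
\Phi=\bigwedge\{X\to v\mid X\subsetneq V,\ v\in\cl(X)\setminus X\},
\]
each listed implication being a legitimate definite Horn clause since $v\in V\setminus X$, and claim $\cT(\Phi)=\cT(f)$. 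To show $\cT(f)\subseteq\cT(\Phi)$, let $T\in\cT(f)$ and consider a clause $X\to v$ with $X\subseteq T$; then $v\in\cl(X)\subseteq\cl(T)=T$ by monotonicity, so $T$ satisfies it. For the reverse inclusion, if $T\in\cT(\Phi)$ but $T\notin\cT(f)$, then $\cl(T)\neq T$, so some $v\in\cl(T)\setminus T$ exists; the clause $T\to v$ lies in $\Phi$ yet is violated by $T$, a contradiction. Hence $\Phi$ represents $f$, and $f$ is definite Horn.

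I expect the main obstacle to be the converse direction, and within it the single conceptual point that intersection-closure together with $V\in\cT(f)$ is exactly what guarantees $\cl(X)$ is well defined and lands in $\cT(f)$. Once this closure operator is in hand, taking $\Phi$ to consist of all closure-forced implications makes both inclusions essentially immediate, so the remaining work is purely bookkeeping.
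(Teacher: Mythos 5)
Your proof is correct. The paper states this lemma without proof, citing it as a classical fact from the references, and your two-direction argument---clause-by-clause verification that intersection and $V$ preserve all implications for the forward direction, and the closure operator $\cl(X)=\bigcap\{T\in\cT(f)\mid X\subseteq T\}$ together with the CNF of all closure-forced implications $X\to v$, $v\in\cl(X)\setminus X$, for the converse---is precisely the standard proof found in those references, with the edge cases (nonemptiness of the intersected family via $V\in\cT(f)$, and $T\subsetneq V$ in the converse) handled correctly.
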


Lemma~\ref{l-E-intersection-closed} implies that for any  set $Z\subseteq V$ there exists a unique minimal true set containing $Z$, the so-called \emph{closure} of $Z$ denoted by $\FC_h(Z)$. In fact such a closure can be computed efficiently from  any definite Horn CNF representation $\Phi$ of $h$ by the so-called \emph{forward chaining procedure} (see e.g.,~\cite{CH11}): Let $A\subseteq V$ denote the set of all variables $v\in V \setminus Z$ for which there exists a clause $B\to v$ in $\Phi$ with $B\subseteq Z$ and $v \in V \setminus Z$, and define $\FC^1_{\Phi}(Z)\coloneqq Z\cup A$. For $i\geq 2$ we define $\FC^i_{\Phi}(Z)=\FC^1_{\Phi}(\FC^{i-1}_{\Phi}(Z))$. Since $ Z\subseteq \FC^{1}_{\Phi}(Z) \subseteq  \FC^{2}_{\Phi}(Z) \subseteq \dots \subseteq V$, $\FC^{i+1}_{\Phi}(Z)=\FC^i_{\Phi}(Z)$ holds for some integer $i\leq n$. Let $i^*$ be the smallest such index $i$. Then we have $\FC^t_{\Phi}(Z)=\FC^{i^*}_{\Phi}(Z)$ for all $t\geq i^*$, and $\FC^{i^*}_{\Phi}(Z)$ is the minimal true set of $\Phi$ that contains $Z$. Thus we can define $\FC_{\Phi}(Z)=\FC^{i^*}_{\Phi}(Z)$, and we say that  {\em $Z$ can be closed by $\Phi$ in $i^*$ steps}. While we may have $\FC^1_{\Phi}(Z)\neq \FC^1_{\Psi}(Z)$ for different definite Horn CNFs $\Phi$ and $\Psi$ representing the same function $h$, it can be shown (see e.g.,~\cite{CH11}) that the resulting set $\FC_\Phi(Z)$ does not depend on the particular choice of the representation $\Phi$ of $h$, but only on the underlying function $h$, that is, we can write $\FC_h(Z)=\FC_\Phi(Z)$ for this uniquely defined closure of $Z$.  Note that $\FC$ is in fact a closure operator in finite set theory,  and hence we call a subset $Z\subseteq V$ \emph{closed} (with respect to $h$) if $\FC_h(Z)=Z$. It is not difficult to check that a set is closed with respect to $h$ if and only if it is a true set of $h$. 

For a definite Horn function $h$, a clause $B\to v$ is called an \emph{implicate} of $h$ if it is a majorant of $h$, that is,  if $(B\to v) \geq h$. An implicate $B\to v$ of $h$ is \emph{prime} if $h$ has no other implicate $B'\rightarrow v$ with $B'\subsetneq B$. The following lemma characterizes  implicates of definite Horn functions in terms of the closure operator.  

\begin{lemma}[see e.g.,~\cite{CH11}]\label{l-E-implicate}
A clause $B\to v$ is an implicate of a definite Horn function $h$ if and only if $v\in \FC_h(B)\setminus B$. 
\hfill$\Box$
\end{lemma}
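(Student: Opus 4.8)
The plan is to unfold both sides of the equivalence into statements about the true sets of $h$, and then to connect them through the minimality of the closure $\FC_h(B)$. First I would recall the combinatorial meaning of ``$B\to v$ is an implicate of $h$'': by definition this says $(B\to v)\geq h$, i.e. every true set $T\in\cT(h)$ satisfies the clause, which in set terms reads ``$B\subseteq T$ implies $v\in T$''. The second ingredient is the fact, guaranteed by Lemma~\ref{l-E-intersection-closed} together with the forward-chaining discussion, that $\FC_h(B)$ is itself a true set of $h$, that it contains $B$, and that it is the \emph{minimal} such true set; consequently any $T\in\cT(h)$ with $B\subseteq T$ also satisfies $\FC_h(B)\subseteq T$. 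Throughout I would keep in mind that the notation $B\to v$ presupposes $v\in V\setminus B$, so the ``$\setminus B$'' in the statement is automatically in force.

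For the direction ($\Leftarrow$), I would assume $v\in\FC_h(B)\setminus B$ and take an arbitrary true set $T\in\cT(h)$ with $B\subseteq T$. By minimality of the closure we have $\FC_h(B)\subseteq T$, and since $v\in\FC_h(B)$ this yields $v\in T$. Hence every true set containing $B$ contains $v$, which is exactly the assertion that $B\to v$ is satisfied by all true sets, i.e. $(B\to v)\geq h$.

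For the direction ($\Rightarrow$), I would assume $B\to v$ is an implicate and then evaluate it on the single true set $\FC_h(B)$. Since $\FC_h(B)\in\cT(h)$ and $(B\to v)\geq h$, the clause must take value $1$ on $\FC_h(B)$. As $B\subseteq\FC_h(B)$, the only way the clause can be satisfied there is $v\in\FC_h(B)$, and together with $v\notin B$ this gives $v\in\FC_h(B)\setminus B$.

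I do not expect a genuine obstacle: the argument is essentially a matter of translating ``majorant of $h$'' into ``satisfied by every true set'' and then exploiting minimality of the closure in one direction and evaluating at the specific true set $\FC_h(B)$ in the other. The one point that deserves care is using the clause semantics correctly — namely that $(B\to v)(T)=1$ is equivalent to $B\not\subseteq T$ or $v\in T$ — which is precisely the equivalence recorded when $B\to v$ was introduced as the clause $v\vee\bigvee_{u\in B}\overline u$.
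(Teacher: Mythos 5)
Your proof is correct. The paper states this lemma without proof, citing it as folklore from the Horn-function literature, so there is no in-paper argument to compare against; your derivation — translating ``implicate'' into ``satisfied by every true set,'' using that $\FC_h(B)$ is the minimum true set containing $B$ (so $\FC_h(B)\subseteq T$ for every true set $T\supseteq B$) for one direction, and evaluating the clause at the true set $\FC_h(B)$ itself for the other — is exactly the standard argument the citation points to, and you correctly handle the side condition $v\notin B$ built into the notation $B\to v$.
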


Given a definite Horn CNF $\Phi$, a subset $A\subseteq V$ and a variable $v\in V\setminus A$, we write $A\overset{\Phi}{\to}v$ to indicate that $A\to v$ is an implicate of $\Phi$, or equivalently that $v\in \FC_{\Phi}(A)$. To indicate the opposite, that is that $A\to v$ is not an implicate of $\Phi$, or equivalently, that $v\not\in \FC_{\Phi}(A)$ we may simply write $A\overset{\Phi}{\nrightarrow}v$.

A subset $K\subseteq V$ is called a \emph{key} of the definite Horn function $h$ if $\FC_h(K)=V$. We denote by 
\begin{equation*}
	\cK(h)=\{K\subseteq V\mid \FC_h(K)=V\ \text{and}\ \FC_h(K')\neq V\ \text{for all $K'\subsetneq K$}\},
\end{equation*}
the family of \emph{minimal keys} of $h$, and $\cK(h)$ is called  the {\em key set} of $h$.

A true set $T\in\cT(h)$ is called \emph{nontrivial} if $T\neq V$. We denote by 
\begin{equation*}
	\cM(h)=\{T\subsetneq V\mid h(T)=1\ \text{and}\ h(T')=0\ \text{for all $T\subsetneq T'\subsetneq V$}\},
\end{equation*}
the family of \emph{maximal nontrivial true sets} of $h$. 
Note that $\cM(h)$ is a subfamily of the so-called \emph{characteristic models} of $h$~\cite{KhardonRoth1996}.

Both families, $\cK(h)$ and $\cM(h)$ are Sperner hypergraphs for all definite Horn functions $h$. 
It is easy to verify that maximal nontrivial true sets form the family of maximal independent sets of the family of minimal keys, and minimal keys are exactly the minimal sets that are not contained in a maximal nontrivial true set.

\begin{lemma}[see e.g.,~\cite{CH11}]\label{l-E-M(h)-K(h)}
For a definite Horn function $h$ we have $\cM(h)=\cK(h)^{dc}$ and $\cK(h)=\cM(h)^{cd}$.
\hfill$\Box$
\end{lemma}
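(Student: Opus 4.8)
The plan is to establish the first identity $\cM(h)=\cK(h)^{dc}$ directly via the closure operator, and then deduce the second from the involution properties of the operators $c$ and $d$ recorded in the preliminaries. Recall that $\cK(h)^{dc}$ is precisely the family of maximal independent sets of $\cK(h)$, i.e.\ the maximal subsets of $V$ that contain no minimal key. The whole argument rests on two elementary facts about $\FC_h$, namely monotonicity ($A\subseteq B\Rightarrow \FC_h(A)\subseteq \FC_h(B)$) and idempotence ($\FC_h(\FC_h(A))=\FC_h(A)$), together with the observation that a set contains a key if and only if it contains a minimal key, since every key contains a minimal one.

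First I would record the basic equivalence that a proper true set contains no key: if $T\subsetneq V$ satisfies $\FC_h(T)=T$ and $K\subseteq T$, then $\FC_h(K)\subseteq \FC_h(T)=T\neq V$ by monotonicity, so $K$ is not a key; in particular $T$ is independent in $\cK(h)$. For the reverse direction, the key step is to show that every maximal independent set $T$ of $\cK(h)$ is closed. Supposing not, pick $v\in \FC_h(T)\setminus T$; since $T$ contains no key we have $\FC_h(T)\neq V$, and idempotence gives $\FC_h(T\cup\{v\})=\FC_h(T)\neq V$, whence no subset of $T\cup\{v\}$ is a key. Thus $T\cup\{v\}$ is independent and strictly larger than $T$, contradicting maximality. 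Therefore such a $T$ is a proper true set, and a short maximality comparison (any strictly larger nontrivial true set would itself be independent, hence extend $T$) shows that the maximal independent sets of $\cK(h)$ are exactly the maximal nontrivial true sets. This yields $\cM(h)=\cK(h)^{dc}$.

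For the second identity, I would invoke the involution identity $\cH^{dccd}=\cH$, valid for every Sperner hypergraph, which applies here since $\cK(h)$ is Sperner. Applying the operator $cd$ to $\cM(h)=\cK(h)^{dc}$ then gives $\cM(h)^{cd}=\cK(h)^{dccd}=\cK(h)$, as desired. Alternatively, one can argue directly that a set is a key precisely when it is not contained in any maximal nontrivial true set, so that the minimal keys are exactly the minimal sets not contained in a hyperedge of $\cM(h)$, which is the stated description of $\cM(h)^{cd}$.

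I expect the main obstacle to be the closedness step in the forward direction: verifying that a set maximal among those containing no minimal key is in fact closed. The subtlety is that ``contains no minimal key'' directly forbids only keys among the subsets of $T$, whereas closedness is a statement about $\FC_h(T)$; bridging the two requires the idempotence argument above to ensure that enlarging $T$ within its own closure cannot create a key. Once this is in place, the Sperner property of the two families and the involution bookkeeping are routine.
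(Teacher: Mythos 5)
Your proof is correct, and it follows exactly the route the paper has in mind: the paper states this lemma without proof (citing the literature, with only the informal remark beforehand that maximal nontrivial true sets are the maximal independent sets of the minimal keys, and that minimal keys are the minimal sets not contained in any maximal nontrivial true set), and your argument is the natural fleshing-out of precisely those two facts. In particular, your key step --- using monotonicity and idempotence of $\FC_h$ to show that a maximal independent set of $\cK(h)$ is closed --- and your derivation of $\cK(h)=\cM(h)^{cd}$ from the involution identity $\cK(h)^{dccd}=\cK(h)$ for Sperner hypergraphs are both sound and consistent with the conventions set up in the paper's preliminaries.
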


For a Sperner hypergraph $\cH$ let us denote by $\hmk(\cH)$ and $\hmt(\cH)$ the sets of definite Horn functions $h$ such that $\cK(h)=\cH$ and $\cM(h)=\cH$, respectively. The following lemma will be used in the subsequent sections.

\begin{lemma}\label{l-minkey-maxtrue}
Given a hypergraph $\cH\subseteq 2^V$, the following hold.
\begin{enumerate}[label={\rm (\alph*)}]\itemsep0em
	\item \label{maxtrue:i} $\hmk(\cH)=\hmt(\cH^{dc})$.
	\item \label{maxtrue:ii} If $f,g\in \hmk(\cH)$ then $f\wedge g\in\hmk(\cH)$.
	\item \label{maxtrue:iii} If $f,g\in\hmt(\cH)$ then $f\wedge g\in\hmt(\cH)$.
\end{enumerate}
\end{lemma}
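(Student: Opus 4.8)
The plan is to establish part~\ref{maxtrue:i} directly from the correspondence $\cM(h)=\cK(h)^{dc}$, $\cK(h)=\cM(h)^{cd}$ of Lemma~\ref{l-E-M(h)-K(h)} together with the fact that $d$ and $c$ are involutions on Sperner families; then to prove part~\ref{maxtrue:iii} from scratch, as it carries the real content; and finally to deduce part~\ref{maxtrue:ii} by combining the two. Throughout I treat $\cH$ as Sperner, since $\hmk$ and $\hmt$ are only meaningfully defined on Sperner hypergraphs (and $\cK(h)$, $\cM(h)$ are always Sperner).

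For part~\ref{maxtrue:i}, I would argue the two inclusions by chasing the closure of a single function. If $h\in\hmk(\cH)$, then $\cK(h)=\cH$, so $\cM(h)=\cK(h)^{dc}=\cH^{dc}$ by Lemma~\ref{l-E-M(h)-K(h)}, i.e.\ $h\in\hmt(\cH^{dc})$. Conversely, if $h\in\hmt(\cH^{dc})$, then $\cM(h)=\cH^{dc}$, and Lemma~\ref{l-E-M(h)-K(h)} gives $\cK(h)=\cM(h)^{cd}=\cH^{dccd}=\cH$, using that $\cH^{dccd}=\cH$ for Sperner $\cH$; hence $h\in\hmk(\cH)$. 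This yields $\hmk(\cH)=\hmt(\cH^{dc})$.

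The core of the lemma is part~\ref{maxtrue:iii}. First I would record that $f\wedge g$ is again definite Horn: its true sets are exactly $\cT(f\wedge g)=\cT(f)\cap\cT(g)$, which contains $V$ and is closed under intersection, so Lemma~\ref{l-E-intersection-closed} applies. It then remains to identify the maximal proper true sets of $f\wedge g$. The inclusion $\cH\subseteq\cM(f\wedge g)$ is the easy direction: any $M\in\cH=\cM(f)=\cM(g)$ lies in $\cT(f)\cap\cT(g)$ with $M\subsetneq V$, and any proper true set of $f\wedge g$ strictly above $M$ would be a proper true set of $f$ above $M$, contradicting $M\in\cM(f)$; hence $M$ is maximal. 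For the reverse inclusion $\cM(f\wedge g)\subseteq\cH$, take $M\in\cM(f\wedge g)$; since $M$ is a proper true set of $f$, it is contained in some $M'\in\cM(f)=\cH$. The crucial observation is that $M'\in\cH=\cM(g)$ as well, so $M'$ is a true set of both $f$ and $g$, hence a proper true set of $f\wedge g$ containing $M$; maximality of $M$ forces $M=M'\in\cH$.

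I expect the main obstacle to be exactly this last step, because $\cM(f)=\cM(g)$ does \emph{not} imply $\cT(f)=\cT(g)$: true sets are closed under intersection but not under taking subsets, so the two functions may differ strictly below their common maximal true sets. The argument must therefore route through the maximal sets themselves, exploiting that each element of the common family $\cH$ is simultaneously a true set of $f$ and of $g$, rather than attempting to compare the full true-set families. Finally, part~\ref{maxtrue:ii} follows formally: if $f,g\in\hmk(\cH)$, then $f,g\in\hmt(\cH^{dc})$ by part~\ref{maxtrue:i}, so $f\wedge g\in\hmt(\cH^{dc})$ by part~\ref{maxtrue:iii}, and applying part~\ref{maxtrue:i} once more returns $f\wedge g\in\hmk(\cH)$.
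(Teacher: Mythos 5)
Your proof is correct and takes essentially the same route as the paper's: part~\ref{maxtrue:iii} is proved directly from $\cT(f\wedge g)=\cT(f)\cap\cT(g)$, while parts~\ref{maxtrue:i} and~\ref{maxtrue:ii} are reduced to Lemma~\ref{l-E-M(h)-K(h)} (the paper derives \ref{maxtrue:ii} via the equivalence of \ref{maxtrue:ii} and \ref{maxtrue:iii}, which is the same reduction you perform through \ref{maxtrue:i}). The only difference is one of detail: you spell out the two-sided maximality argument for \ref{maxtrue:iii} --- including the genuine subtlety that $\cM(f)=\cM(g)$ does not imply $\cT(f)=\cT(g)$, so one must route through the common maximal sets --- which the paper's one-line justification leaves implicit.
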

\begin{proof}
Lemma~\ref{l-E-M(h)-K(h)} implies \ref{maxtrue:i} and the equivalence of \ref{maxtrue:ii} and \ref{maxtrue:iii}. We prove here \ref{maxtrue:iii}. Note that $X$ is a true set of $f\wedge g$ if and only if it is a true set of both $f$ and $g$. Thus, if $\cM(f)=\cM(g)=\cH$, then we must have $\cM(f\wedge g)=\cH$, too. 
\end{proof}

\paragraph{Matroids.}

Matroids were introduced by Whitney~\cite{whitney1992abstract} and independently by Nakasawa~\cite{nishimura2009lost} as abstract generalizations of linear independence in vector spaces. Matroids are special independence systems satisfying stronger properties. There are several equivalent, well-known axiomatizations, see e.g.,~\cite{Welsh1976,Tutte1971}. We recall here one of those that will be used in our analysis. A nontrivial Sperner hypergraph $\cC\subseteq 2^V$ is the \emph{family of circuits} of a matroid if and only if
\begin{equation}\leqnomode
\forall C_1,C_2\in\cC,\ C_1\neq C_2 \text{ and } u\in C_1\cap C_2 ~~\exists~ C_3\in\cC 
\text{ s.t. } C_3\subseteq (C_1\cup C_2) - u.
\tag{\textbf{C}}\label{CC}
\end{equation}

\section{Hypergraph Horn functions}
\label{s-hypergraph Horn}

To simplify our notation, we respectively write  $H+v$ and $H-v$ instead of $H\cup\{v\}$ and $H\setminus\{v\}$ for $H\subseteq V$ and $v \in V$. Given a definite Horn function $h:2^V\to \{0,1\}$, we call an implicate $A\to v$ of it \emph{circular} if $((A+v)-u)\to u$ is also an implicate of $h$ for every $u\in A$. To a hypergraph $\cH\subseteq 2^V$ we associate the definite Horn CNF $\Phi_\cH$ defined as
\begin{equation*}\label{e-circulant}
	\Phi_\cH ~=~ \bigwedge_{H\in\cH} \left(\bigwedge_{v\in H} \left((H-v)\to v\right)\right)
\end{equation*}
and call $\Phi_{\cH}$ the \emph{circular CNF} associated to the hypergraph $\cH$. Note that circular CNFs generalize equivalence relations. Namely, if all hyperedges of $\cH$ are of size two, then $\Phi_{\cH}$ represents an equivalence relation, and all equivalence relations can be represented in this way. For example, if $\cH=\{\{1,2\}, \{2,3\}, \{4,5\}\}$, then $\Phi_\cH$ represents two equivalence classes: $\{1,2,3\}$ and $\{4,5\}$. We remark that even equivalence relations can be represented by different hypergraphs. For example, $\cH=\{\{1,2\},\{2,3\}, \{4,5\}\}$ and $\cG=\{\{1,2\},\{1,3\}, \{4,5\}\}$ satisfy $\Phi_\cH=\Phi_{\cG}$.

We say that a definite Horn function $h:2^V\to \{0,1\}$ is \emph{hypergraph Horn} if it has a circular CNF representation, that is, if there exists a hypergraph $\cH\subseteq 2^V$ such that $h ~=~ \Phi_\cH$. The next example shows that $\Phi_\cH=\Phi_{\cH'}$ can hold even if $\cH$ is Sperner while $\cH'$ is not, and that hypergraph Horn functions might have non-circular implicates.

\begin{ex}\label{ex-1}
Consider the set $V=\{1,2,\dots,8\}$, the subsets $H_1=\{1,2,5\}$, $H_2=\{4,5,6\}$, $H_3=\{3,4,7\}$, $H_4=\{2,7,8\}$, $H_5=\{1,2,3,4\}$, $H_6=\{1,2,3,5,6\}$, and $H_7=\{1,3,4,7,8\}$, and the hypergraphs
\begin{align*}
	\cH &=~ \{H_1,H_2,H_3,H_4,H_5\}, \text{ and}\\
	\cH' &=~ \{H_1,H_2,H_3,H_4,H_6,H_7\}.
\end{align*}
We claim that $\Phi_{\cH}=\Phi_{\cH'}$. To see this, we show that $\Phi_{\{H_5\}}\geq \Phi_{\cH'}$ and that $\Phi_{\{H_6,H_7\}}\geq \Phi_{\cH}$. These relations then imply $\Phi_{\cH}\leq \Phi_{\cH'}$ and $\Phi_{\cH'}\leq \Phi_{\cH}$ from which $\Phi_{\cH}=\Phi_{\cH'}$ follows. 

To see e.g., that $\Phi_{\{H_5\}}$ is a majorant of $\Phi_{\cH'}$ one can show that $\FC_{\Phi_{\cH'}}(H_5-u)\ni u$ for all $u\in H_5$. For instance, for $u=1$ and $S=H_5-u=\{2,3,4\}$ we can see that $\FC^1_{\cH'}(S)=S+7$ because of the clause $H_3-7\to 7$ of $\cH'$. Then we get $\FC^1_{\cH'}(S+7)=S+7+8$ because of the clause $H_4-8\to 8$ of $\cH'$. Finally, we get $\FC_{\cH'}^1(S+7+8)=S+7+8+1$ due to the clause $H_7-1\to 1$. Thus, we have $1\in \FC_{\cH'}(\{2,3,4\})$. By Lemma~\ref{l-E-implicate}, this implies that $H_5-1\to 1$ is an implicate of $\Phi_{\cH'}$. Using similar arguments for the other elements of $H_5$, we can show that $H_5-2\to 2$, $H_5-3\to3$ and $H_5-4\to 4$ are all implicates of $\Phi_{\cH'}$, implying the inequality $\Phi_{\cH}\geq \Phi_{\cH'}$. Applying similar arguments to $H_6$ and $H_7$, we can also prove the reverse inequality, as claimed above.

Note that in this example the hypergraph $\cH$ is Sperner, while $\cH'$ is not. Furthermore, in this example there exists a non-circular implicate of $\Phi_\cH$. Namely,  $\{1,5,6\}\to 3$ is an implicate of $\Phi_\cH$ while $\{3,5,6\}\to 1$ is not. 
\end{ex}

We say that a subset $I\subseteq V$ is an \emph{implicate set} of a definite Horn function $h$ if $(I-v) \to v$ is an implicate of $h$ for all $v\in I$. For a Boolean function $f$, we denote by $\cI(f)$ its \emph{family of implicate sets}.  By definition, we have $\emptyset\in\cI(f)$ for all Boolean functions $f$. 

\begin{lemma}\label{l-implicate-sets-are-union-closed}
For every Boolean function $f:2^V\to \{0,1\}$ the hypergraph $\cI(f)$ is union closed, that is,
$I,J\in\cI(f)$ implies $I\cup J\in\cI(f)$.
\end{lemma}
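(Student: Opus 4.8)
The plan is to first translate the defining condition of an implicate set into a statement purely about the true sets of $f$, and then read off union closure almost immediately. Recall that for $v\in I$, the clause $(I-v)\to v$ is an implicate of $f$, i.e.\ $((I-v)\to v)\ge f$, exactly when every true set $T\in\cT(f)$ that contains $I-v$ also contains $v$. The only true sets that could violate this are those with $I-v\subseteq T$ and $v\notin T$; since $v\in I$, such a $T$ satisfies $I\setminus T=\{v\}$, and conversely any $T$ with $I\setminus T=\{v\}$ violates it. Quantifying over all $v\in I$, I would obtain the clean characterization
\[
I\in\cI(f)\quad\Longleftrightarrow\quad \text{there is no } T\in\cT(f) \text{ with } |I\setminus T|=1.
\]
Establishing this equivalence is the only real content of the proof; it requires nothing beyond unwinding the definitions of implicate and majorant, and it is valid for an arbitrary Boolean function $f$ since it refers only to $\cT(f)$.

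With this reformulation in hand, union closure follows by contradiction. Suppose $I,J\in\cI(f)$ but $I\cup J\notin\cI(f)$. By the characterization there is a true set $T\in\cT(f)$ with $|(I\cup J)\setminus T|=1$, say $(I\cup J)\setminus T=\{w\}$. Since $w\in I\cup J$, we may assume without loss of generality that $w\in I$. Because $I\subseteq I\cup J$, we have $I\setminus T\subseteq (I\cup J)\setminus T=\{w\}$, and $w\in I\setminus T$ as $w\notin T$; hence $I\setminus T=\{w\}$, so $|I\setminus T|=1$. This contradicts $I\in\cI(f)$ via the characterization, completing the argument. (If instead $w\in J$, the same reasoning applied to $J$ gives the contradiction.)

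The main obstacle, such as it is, lies entirely in the first step: recognizing that ``$(I-v)\to v$ fails to be an implicate'' is equivalent to the rigid condition that some true set $T$ satisfies $I\setminus T=\{v\}$, so that membership in $\cI(f)$ is governed solely by whether $I$ can differ from a true set in exactly one element. Once the problem is recast this way, union closure is immediate: for any $T$, both $I\setminus T$ and $J\setminus T$ are subsets of $(I\cup J)\setminus T$, so if the latter is a singleton $\{w\}$, then whichever of $I$ or $J$ contains $w$ already has singleton difference $\{w\}$ from $T$, which is the forbidden configuration.
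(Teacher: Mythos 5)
Your proof is correct. It differs from the paper's in organization rather than substance: the paper proves the lemma directly at the level of clauses, observing that for $v\in I\cup J$ with, say, $v\in I$, the implicate $(I-v)\to v$ yields the implicate $((I\cup J)-v)\to v$ because $I-v\subseteq (I\cup J)-v$ --- a two-line antecedent-weakening argument. You instead first establish the semantic characterization $I\in\cI(f)\iff\nexists\, T\in\cT(f)$ with $|I\setminus T|=1$, and then obtain union closure by contradiction. That characterization is not new to the paper: it is exactly \eqref{e-implicate-set-of-h} of Lemma~\ref{l-E-true-set-implicate-set-basics} (stated there for hypergraph Horn functions, but, as the remark following Corollary~\ref{c-imp-true} notes, valid for arbitrary Boolean functions), so your route amounts to pre-proving that later fact and deriving union closure from it. Your singleton-difference step ($w\in I\setminus T\subseteq (I\cup J)\setminus T=\{w\}$, hence $I\setminus T=\{w\}$) is precisely the contrapositive of the paper's weakening step, so the two arguments are logically equivalent; what your version buys is that the reusable true-set criterion is made explicit up front, at the cost of being longer than the paper's direct proof.
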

\begin{proof}
Take an arbitrary element $v\in I\cup J$. Without loss of generality, we may assume that $v\in I$. As $I$ is an implicate set of $f$, we have $(I-v)\to v$. This implies that $((I\cup J)-v)\to v$ holds.
\end{proof}

Given two families $\cG\subseteq \cF\subseteq 2^V$, we call $\cG$ a \emph{generator} of $\cF$ if its union closure is $\cF$, that is, $\cG^\cup=\cF$.  For every Boolean function $f:2^V\to\{0,1\}$, there exists a unique minimal generator of $\cI(f)$. Indeed, let $\cG(f)$ consist of those implicate sets that are not the union of others. These sets are clearly contained in any subfamily whose union closure is $\cI(f)$, and every other set in $\cI(f)$ can be obtained as the union of such sets. We call $\cG(f)$ the \emph{standard generator} of $\cI(f)$. Note that $\cG(f)$ may not form a Sperner hypergraph.

\begin{ex}\label{ex-is-beyond-union}
Let us note that the set $\cI(h)$ can be much larger than the input representation. Furthermore, it can happen that $\cI(h)$ cannot be obtained by taking unions of the input, even for hypergraph Horn functions. 

For a set $V=\{i,i'\mid i=1,\dots ,n\}$ of $2n$ elements, consider the hypergraph $\cH=\{  \{i,i'\} \mid i=1,\dots ,n \} \cup \{ \{1,2, \dots ,n\}\}$ of $n+1$ hyperedges and the corresponding hypergraph Horn function $h=\Phi_{\cH}$. It is not difficult to see that 
\[
\cI(h)=(\cH \cup \left\{Z\subseteq V\mid Z\cap \{i,i'\}\neq \emptyset ~\forall~ i=1,\dots,n\right\})^{\cup}. 
\]
This family of implicate sets contains exponentially many hyperedges that are not in the union closure of the input. 
\end{ex}

As pointed out in Lemma~\ref{l-implicate-sets-are-union-closed}, implicate sets of a Boolean function $f$ are closed under taking unions. This implies that every subset $S\subseteq V$ has a unique maximal implicate set as its subset. We denote by $\IS_f(S)$ this unique maximal implicate subset of $S$ and call it the \emph{core} of $S$ (or the $f$-\emph{core} of $S$, if the function is not clear from the context). Recall that $\cT(f)$ denotes the family of true sets of $f$. 

\begin{lemma}\label{c-E-truesets-implicatesets}
For a definite Horn function $f$, we have the following equivalences:
\begin{subequations}
\begin{equation}\label{e-truesets=FC}
T\in\cT(f) \,\,\iff\,\, \FC_f(T)=T,
\end{equation}
\begin{equation}\label{e-implicate-set-of-haa}
I\in \cI(f) \,\, \iff\,\, \IS_f(I)=I.
\end{equation}
\end{subequations}
\end{lemma}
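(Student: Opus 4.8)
The plan is to treat the two equivalences in parallel, since they are mirror images of one another: \eqref{e-truesets=FC} characterizes the fixed points of the closure operator $\FC_f$ (built from intersection-closure of $\cT(f)$), while \eqref{e-implicate-set-of-haa} characterizes the fixed points of the core operator $\IS_f$ (built from union-closure of $\cI(f)$). In both cases the defining feature is that the operator returns the unique extremal member of a family that sandwiches its argument, and the equivalence merely says that the argument is a fixed point exactly when it already belongs to that family. I would prove each direction by unwinding two facts: (i) the operator always outputs a member of the relevant family, and (ii) that output is extremal (minimal containing, respectively maximal contained) among members of the family.

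For \eqref{e-truesets=FC}, recall that by Lemma~\ref{l-E-intersection-closed} the family $\cT(f)$ is intersection-closed and contains $V$, so $\FC_f(T)$ is well defined as the minimal true set containing $T$; in particular $\FC_f(T)\in\cT(f)$ and $T\subseteq\FC_f(T)$ always hold. For the forward direction, if $T\in\cT(f)$ then $T$ is itself a true set containing $T$, so minimality forces $\FC_f(T)\subseteq T$, and combined with $T\subseteq\FC_f(T)$ this yields $\FC_f(T)=T$. For the converse, if $\FC_f(T)=T$ then $T$ equals a true set, hence $T\in\cT(f)$.

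The argument for \eqref{e-implicate-set-of-haa} is entirely symmetric, with ``minimal true set containing'' replaced by ``maximal implicate set contained in.'' By Lemma~\ref{l-implicate-sets-are-union-closed} the family $\cI(f)$ is union-closed, which is precisely what guarantees that every $I\subseteq V$ has a unique maximal implicate subset $\IS_f(I)$; in particular $\IS_f(I)\in\cI(f)$ and $\IS_f(I)\subseteq I$ always hold. If $I\in\cI(f)$ then $I$ is an implicate subset of itself, so maximality gives $I\subseteq\IS_f(I)$, and with $\IS_f(I)\subseteq I$ we obtain $\IS_f(I)=I$. Conversely, $\IS_f(I)=I$ exhibits $I$ as an implicate set, so $I\in\cI(f)$.

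I do not expect a genuine obstacle here: once the two operators are in hand, both equivalences are immediate from the ``extremal member of a closed family'' description. The only point requiring care is invoking well-definedness correctly, namely that existence of a unique minimal true set rests on intersection-closure (Lemma~\ref{l-E-intersection-closed}) and existence of a unique maximal implicate subset rests on union-closure (Lemma~\ref{l-implicate-sets-are-union-closed}). These two structural facts are exactly the dual ingredients that make the two halves of the lemma parallel, so I would emphasize that symmetry rather than repeating the bookkeeping twice.
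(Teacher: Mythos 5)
Your proof is correct and follows exactly the paper's route: the paper's one-line proof likewise observes that $\FC_f(S)$ is the unique smallest true set containing $S$ and $\IS_f(S)$ is the unique maximal implicate set contained in $S$, from which both fixed-point equivalences are immediate. You simply make explicit the well-definedness bookkeeping (intersection-closure via Lemma~\ref{l-E-intersection-closed}, union-closure via Lemma~\ref{l-implicate-sets-are-union-closed}) that the paper leaves implicit.
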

\begin{proof}
The lemma follows by the definitions, as for a subset $S\subseteq V$ the closure $\FC_f(S)$ is the unique smallest true set containing $S$, while the core $\IS_f(S)$ is the unique maximal implicate set contained in $S$.
\end{proof}

The next technical lemma, together with its corollary,  describes a certain duality relation between true sets and implicate sets specifically for hypergraph Horn functions, which  will be useful in our proofs later. 

\begin{lemma}\label{l-E-true-set-implicate-set-basics}
For a hypergraph Horn function $h$ and subsets $I,T\subseteq V$,  we have the following equivalences:
\begin{subequations}
\begin{equation}\label{e-true-set-of-h}
T\in\cT(h) \,\,\iff \,\,\nexists \, J\in\cI(h) \text{ with } ~ |J\setminus T|=1,
\end{equation}
\begin{equation}\label{e-implicate-set-of-h}
I\in \cI(h)  \,\,\iff\,\, \nexists \,  S\in\cT(h) \text{ with } ~ |I\setminus S|=1. 
\end{equation}
\end{subequations}
\end{lemma}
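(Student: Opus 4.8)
The plan is to prove the two ``$\Rightarrow$'' implications together, then the two ``$\Leftarrow$'' implications separately, since only one of the four directions will actually require the hypergraph Horn hypothesis. Throughout I would lean on three facts already available: a true set is exactly a closed set (Lemma~\ref{l-E-intersection-closed} together with \eqref{e-truesets=FC}); by Lemma~\ref{l-E-implicate}, a set $J$ is an implicate set precisely when $v\in\FC_h(J-v)$ for every $v\in J$; and monotonicity of the closure, namely $A\subseteq B$ with $\FC_h(B)=B$ forces $\FC_h(A)\subseteq B$.

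For the ``$\Rightarrow$'' direction of \eqref{e-true-set-of-h} I would argue by contradiction. Assume $T\in\cT(h)$, so $T$ is closed, and suppose some $J\in\cI(h)$ has $J\setminus T=\{v\}$. Then $J-v\subseteq T$ and, since $J$ is an implicate set, $v\in\FC_h(J-v)\subseteq\FC_h(T)=T$, contradicting $v\notin T$. The ``$\Rightarrow$'' direction of \eqref{e-implicate-set-of-h} is entirely symmetric: if $I\in\cI(h)$ and $S\in\cT(h)$ with $I\setminus S=\{v\}$, then $I-v\subseteq S$ gives $v\in\FC_h(I-v)\subseteq\FC_h(S)=S$, again a contradiction. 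Neither of these uses circularity.

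For the easy reverse direction, that of \eqref{e-implicate-set-of-h}, I would suppose $I\notin\cI(h)$. By the implicate characterization there is some $v\in I$ with $v\notin\FC_h(I-v)$. Setting $S\coloneqq\FC_h(I-v)$ gives a closed set, hence $S\in\cT(h)$; since $I-v\subseteq S$ while $v\notin S$, we obtain $I\setminus S=\{v\}$, the required witness. This reverse direction holds for every definite Horn function.

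The hard part is the ``$\Leftarrow$'' direction of \eqref{e-true-set-of-h}, and this is the only place where I expect to invoke the hypergraph Horn hypothesis. Fix a hypergraph $\cH$ with $h=\Phi_\cH$ and suppose $T\notin\cT(h)$, i.e.\ $\FC_h(T)\neq T$. Then the first round of forward chaining applied to $\Phi_\cH$ from $T$ already produces a new element, so some clause $(H-v)\to v$ of $\Phi_\cH$ fires, with $H\in\cH$, $v\in H$, $H-v\subseteq T$, and $v\notin T$. The key point is that $\Phi_\cH$ contains all rotations $(H-u)\to u$ for $u\in H$, so every one of them is an implicate of $h$; hence $H$ is itself an implicate set, $H\in\cI(h)$, and by construction $H\setminus T=\{v\}$, giving the desired $J=H$. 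The obstacle lives exactly here: for a general definite Horn CNF the firing rule $(H-v)\to v$ need not extend to an implicate set $H$ (non-circular implicates genuinely occur, cf.\ Example~\ref{ex-1}), and indeed the single rule $\{1,2\}\to 3$ shows \eqref{e-true-set-of-h} can fail outside the hypergraph Horn class. Circularity of the representation is precisely what guarantees that the rule witnessing non-closure arrives already packaged as a member of $\cI(h)$.
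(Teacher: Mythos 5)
Your proof is correct and follows essentially the same route as the paper's: the paper handles \eqref{e-true-set-of-h} by noting that the right-hand side says exactly $\FC^1_{\Phi_{\cI(h)}}(T)=T$ for the canonical circular representation $h=\Phi_{\cI(h)}$, which is your ``falsified clause of a circular CNF yields a witness $J\in\cI(h)$'' argument in compressed form, and it proves \eqref{e-implicate-set-of-h} by the same closure-based equivalence you spell out (valid for any definite Horn function, as you correctly observe). Your only deviations --- splitting into four implications, using an arbitrary circular representation $\cH\subseteq\cI(h)$ instead of $\cI(h)$ itself, and the counterexample $\{1,2\}\to 3$ locating where circularity is needed --- are presentational, not substantive.
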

\begin{proof}
For \eqref{e-true-set-of-h}, note that the right hand side is equivalent with $\FC^1_{\Phi_{\cI(h)}}(T)=T$, which is indeed equivalent with $T\in\cT(h)$ since $h=\Phi_{\cI(h)}$ is hypergraph Horn.

For \eqref{e-implicate-set-of-h},  note that the right hand side is equivalent to the claim that for every $u\in I$ and for every $S\in\cT(h)$ the inclusion $I-u\subseteq S$ implies $u\in S$. This is  equivalent to $\FC_h(I-u)\ni u$ for all $u\in I$, which is the definition of $I$ being an implicate set of $h$. 
\end{proof}

As a direct consequence, we get the following characterizations of true and implicate sets for hypergraph Horn functions.

\begin{corollary}\label{c-imp-true}
For a hypergraph Horn function $h$,  we have the following equalities:
\begin{subequations}
\begin{equation}\label{e2-true-set-of-h}
\cT(h) ~=~ \{ T\subseteq V \mid    \nexists ~ I\in\cI(h) \text{ with } ~ |I\setminus T|=1  \},
\end{equation}
\begin{equation}\label{e2-implicate-set-of-h}
\cI(h) ~=~ \{ I\subseteq V \mid \nexists ~ T\in\cT(h) \text{ with } ~ |I\setminus T|=1 \}.
\end{equation}
\end{subequations}
\hfill$\Box$
\end{corollary}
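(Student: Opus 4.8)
The plan is to derive Corollary~\ref{c-imp-true} as an immediate packaging of Lemma~\ref{l-E-true-set-implicate-set-basics}, which already supplies the two biconditionals in membership form. The only work is to convert each biconditional into the corresponding set equality, and the two directions of each biconditional are exactly what guarantees that the set on the right of \eqref{e2-true-set-of-h} (resp. \eqref{e2-implicate-set-of-h}) equals $\cT(h)$ (resp. $\cI(h)$).

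\begin{proof}
We prove \eqref{e2-true-set-of-h}; the proof of \eqref{e2-implicate-set-of-h} is entirely analogous, using \eqref{e-implicate-set-of-h} in place of \eqref{e-true-set-of-h}. Denote by $\cR$ the right-hand side of \eqref{e2-true-set-of-h}, that is,
\[
\cR ~=~ \{\, T\subseteq V \mid \nexists\, I\in\cI(h) \text{ with } |I\setminus T|=1\,\}.
\]
For the inclusion $\cT(h)\subseteq\cR$, take any $T\in\cT(h)$. By the forward direction of the equivalence \eqref{e-true-set-of-h} in Lemma~\ref{l-E-true-set-implicate-set-basics}, there is no $J\in\cI(h)$ with $|J\setminus T|=1$, which is precisely the condition defining membership in $\cR$; hence $T\in\cR$. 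For the reverse inclusion $\cR\subseteq\cT(h)$, take any $T\in\cR$. Then by definition of $\cR$ there is no $J\in\cI(h)$ with $|J\setminus T|=1$, and the backward direction of \eqref{e-true-set-of-h} yields $T\in\cT(h)$. Combining the two inclusions gives $\cT(h)=\cR$, which is \eqref{e2-true-set-of-h}.
\end{proof}

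Since both equalities follow by simply reading off the two directions of the biconditionals in Lemma~\ref{l-E-true-set-implicate-set-basics}, I do not anticipate any real obstacle here; the statement is a restatement of the lemma in the language of set equalities rather than pointwise membership. The one point worth being careful about is that each equality requires \emph{both} directions of the corresponding biconditional (the forward direction for one inclusion, the backward direction for the other), so it is essential that Lemma~\ref{l-E-true-set-implicate-set-basics} is stated as an ``if and only if'' and not merely as an implication. If I wanted to be fully self-contained rather than invoking the lemma as a black box, the substantive content would be exactly the two facts already established there: that the absence of an implicate set $J$ with $|J\setminus T|=1$ is equivalent to $\FC^1_{\Phi_{\cI(h)}}(T)=T$ (and hence to $T\in\cT(h)$ using $h=\Phi_{\cI(h)}$), and the dual statement for implicate sets. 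Given the excerpt, however, the clean route is to cite Lemma~\ref{l-E-true-set-implicate-set-basics} directly.
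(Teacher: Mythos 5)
Your proof is correct and matches the paper's treatment exactly: the paper states Corollary~\ref{c-imp-true} as a direct consequence of Lemma~\ref{l-E-true-set-implicate-set-basics}, and your argument is precisely the routine unpacking of those two biconditionals into set equalities. Nothing is missing.
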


Let us remark that \eqref{e-implicate-set-of-h} and \eqref{e2-implicate-set-of-h} hold for arbitrary Boolean functions $h$, not only for hypergraph Horn ones. Moreover, later we show that both \eqref{e-true-set-of-h} and  \eqref{e2-true-set-of-h} provide necessary and sufficient conditions for $h$ being hypergraph Horn. 

Observe that the conjunction of two hypergraph Horn functions $h_1$ and $h_2$ is also hypergraph Horn. Indeed, if $\cH_1$ and $\cH_2$ are hypergraph such that $h_i=\Phi_{\cH_i}$ for $i=1,2$, then $h_1\wedge h_2=\Phi_{\cH_1\cup\cH_2}$. Therefore any Boolean function $f$ admits a unique minimal hypergraph Horn majorant which we denote by $f^{\circ}=\Phi_{\cI(f)}$. Note that $f^{\circ}=1$ may hold for a Boolean function $f$. Note also that $f$ is a hypergraph Horn function if and only if $f=f^{\circ}$. 

The natural ``duality'' between the families of true sets and implicate sets pointed out above motivates the following definition. To a given Boolean function $f$ we associate another Boolean function, called its \emph{implicate-dual} and denoted by $f^i$, defined uniquely by the equality
\begin{equation}\label{e-f->f^i}
\cT(f^i) ~=~ \cI(f)^c.
\end{equation}
Since $\cI(f)=\cI(f^\circ)$ for every Boolean function $f$, we have $f^i=(f^\circ)^i=f^{\circ i}$. Furthermore, since $\emptyset\in \cI(f)$ we have $V\in\cT(f^i)$ and by Lemma~\ref{l-implicate-sets-are-union-closed} the family $\cI(f)^c$ is closed under intersections. Consequently, by Lemma~\ref{l-E-intersection-closed}, the function $f^i=f^{\circ i}$ is definite Horn for every Boolean function $f$.

\section{Characterizations of hypergraph Horn functions}
\label{sec:char}

In this section we provide multiple characterizations of hypergraph Horn functions. Recall that for a Boolean function $f$, $\cG(f)$ denotes the standard generator of $\cI(f)$. Our main theorem below summarizes these equivalences.

\begin{theorem}\label{t-MAIN-hH}
For a definite Horn function $f:2^V\to \{0,1\}$, the following claims are equivalent.
\begin{enumerate}[label={\rm (\roman*)}]\itemsep0em
\item \label{it:a} $f$ is hypergraph Horn.
\item \label{it:b} $f^\circ=f$.
\item \label{it:c} $f^{ii}=f$.
\item \label{it:d} For every false set  $F$ of $f$, there exists an implicate set $I\in\cI(f)$ such that $|I\setminus F|=1$. 
\item \label{it:e} For every false set  $F$ of $f$, there exists an implicate set $I\in\cG(f)$ such that $|I\setminus F|=1$.
\item \label{it:f} For every false set $F$ of $f$, there exists a variable $u\in \FC_f(F)\setminus F$ such that $v\in \FC_f(F-v+u)$ holds for every $v\in F$ with $f(F-v)=1$. 
\end{enumerate}
\end{theorem}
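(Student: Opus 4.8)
The plan is to prove the six claims equivalent by establishing the cycle \ref{it:a}$\Rightarrow$\ref{it:f}$\Rightarrow$\ref{it:d}$\Rightarrow$\ref{it:b}$\Rightarrow$\ref{it:a} together with the two side-equivalences \ref{it:b}$\Leftrightarrow$\ref{it:c} and \ref{it:d}$\Leftrightarrow$\ref{it:e}. Two of these are essentially immediate. The equivalence \ref{it:a}$\Leftrightarrow$\ref{it:b} holds by the very definition of $f^\circ=\Phi_{\cI(f)}$ as the minimal hypergraph Horn majorant of $f$: if $f$ is hypergraph Horn then it is its own such majorant, and conversely $f=f^\circ$ exhibits a circular representation. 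For \ref{it:d}$\Leftrightarrow$\ref{it:e}, one direction is trivial since $\cG(f)\subseteq\cI(f)$; for the other, if $I\in\cI(f)$ satisfies $I\setminus F=\{u\}$, then writing $I$ as a union of members of $\cG(f)$ (Lemma~\ref{l-implicate-sets-are-union-closed}) and picking a generator $G\in\cG(f)$ with $u\in G\subseteq I$ gives $G\setminus F=\{u\}$ as well.

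For \ref{it:b}$\Leftrightarrow$\ref{it:d} I would use that $f\le f^\circ$ always holds, so $\cT(f)\subseteq\cT(f^\circ)$. Since $\cI(f^\circ)=\cI(f)$ and $f^\circ$ is hypergraph Horn, Corollary~\ref{c-imp-true} gives $\cT(f^\circ)=\{T\mid \nexists\,I\in\cI(f),\ |I\setminus T|=1\}$. Claim~\ref{it:d} states precisely that every false set of $f$ violates the defining condition of this family, i.e. $\cT(f^\circ)\subseteq\cT(f)$; combined with the reverse inclusion this is $f=f^\circ$. The equivalence \ref{it:b}$\Leftrightarrow$\ref{it:c} follows once I show $f^{ii}=f^\circ$ for \emph{every} Boolean function $f$. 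Using \eqref{e-f->f^i}, the general identity \eqref{e-implicate-set-of-h}, and $\cT(f^i)=\cI(f)^c$, I would compute $\cI(f^i)=\{I\mid \nexists\,J\in\cI(f),\ |I\cap J|=1\}$, since for $S=V\setminus J$ one has $|I\setminus S|=|I\cap J|$. Taking complements, $\cT(f^{ii})=\cI(f^i)^c=\{T\mid \nexists\,J\in\cI(f),\ |J\setminus T|=1\}$, which by \eqref{e2-true-set-of-h} applied to $f^\circ$ equals $\cT(f^\circ)$. Hence $f^{ii}=f^\circ$, and \ref{it:c} reads $f=f^\circ$, i.e. \ref{it:b}.

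It remains to tie in \ref{it:f}, which is the crux. For \ref{it:a}$\Rightarrow$\ref{it:f}: given a false set $F$, forward chaining from $F$ with a circular representation of $f$ fires some rule, i.e. there is an implicate set $H$ with $H\setminus F=\{u\}$ and $u\in\FC_f(F)\setminus F$. If $v\in F$ is critical ($f(F-v)=1$, so $F-v$ is closed) but $v\notin H$, then $H-u\subseteq F-v$, whence $u\in\FC_f(H-u)\subseteq\FC_f(F-v)=F-v\subseteq F$, contradicting $u\notin F$; so every critical $v$ lies in $H$, and then $H-v\subseteq (F-v)+u$ yields $v\in\FC_f(F-v+u)$ by Lemma~\ref{l-E-implicate}, giving \ref{it:f}.

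The hard direction is \ref{it:f}$\Rightarrow$\ref{it:d}, which I would prove by taking a false set $F$ of minimum size admitting no implicate set $I$ with $|I\setminus F|=1$. Apply \ref{it:f} to obtain $u\in\FC_f(F)\setminus F$, and split $F$ into its critical part $C$ and non-critical part $N$. For $w\in N$ the set $F-w$ is false and smaller, so by minimality it admits an implicate set $I'$ with $|I'\setminus(F-w)|=1$; if the extra element differs from $w$ it already certifies \ref{it:d} for $F$ (contradiction), so the extra element is $w$, giving $I'\subseteq F$ with $w\in I'$. One then checks that $F+u$ is itself an implicate set: the element $u$ is handled by $u\in\FC_f(F)$, each critical $v\in C$ by \ref{it:f}, and each non-critical $w\in N$ by $w\in\FC_f(I'-w)\subseteq\FC_f(F-w+u)$. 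Thus $F+u\in\cI(f)$ with $|(F+u)\setminus F|=1$, contradicting the choice of $F$. I expect this minimal-counterexample step — in particular, isolating the critical/non-critical dichotomy so that \ref{it:f} and the inductive hypothesis combine to witness $F+u\in\cI(f)$ — to be the main obstacle.
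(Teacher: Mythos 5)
Your proof is correct, and its crux coincides with the paper's, but you genuinely streamline two of the auxiliary steps. The core is the same: your \ref{it:a}$\Rightarrow$\ref{it:f} argument is the forward direction of the paper's Lemma~\ref{t-hH-characterization} (an implicate set $H$ with $H\setminus F=\{u\}$ must contain every critical $v$, since otherwise $H-u\subseteq F-v$ would force $u$ into the closed set $F-v$), and your minimal-counterexample proof of \ref{it:f}$\Rightarrow$\ref{it:d} is the paper's reverse direction in disguise: a false set of $f$ violating \ref{it:d} is precisely a set with $f(F)=0$ and $f^\circ(F)=1$, so your minimum-size $F$ and its critical/non-critical split $C\cup N$ correspond exactly to the paper's minimal $F$ and its sets $Q$ and $R$, and both proofs conclude by certifying $F+u\in\cI(f)$, reaching the same contradiction. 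You differ in two places, both in the direction of economy. First, for \ref{it:d}$\Leftrightarrow$\ref{it:e} you decompose $I=\bigcup_j G_j$ with $G_j\in\cG(f)$ and pick a generator $G\ni u$, so $G\setminus F\subseteq I\setminus F=\{u\}$ and $u\in G\setminus F$ give $G\setminus F=\{u\}$; this one-line argument replaces the paper's Lemma~\ref{lem:minrep}, a strictly stronger replacement statement about minimal representations (proved by intersecting two false sets $F_1\cap F_2$) that the paper retains for its independent corollary on minimal representations inside $\cG(h)$. (One small citation slip: the fact you need is $\cG(f)^\cup=\cI(f)$, which is the defining property of the standard generator, not Lemma~\ref{l-implicate-sets-are-union-closed}, which only gives union-closedness of $\cI(f)$.) Second, your proof of $f^{ii}=f^\circ$ is a direct complementation computation, using $|I\setminus(V\setminus J)|=|I\cap J|$ together with the general validity of \eqref{e-implicate-set-of-h} for arbitrary Boolean functions (the paper's remark after Corollary~\ref{c-imp-true}) and \eqref{e2-true-set-of-h} applied to the hypergraph Horn function $f^\circ$; this bypasses Lemma~\ref{t-f^i-is-hH} ($f^i$ is hypergraph Horn), the most technical ingredient the paper feeds into its Lemma~\ref{t-ii=o}. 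What the paper's longer route buys is those two standalone structural facts; what your route buys is a self-contained proof of the theorem with strictly less machinery, and Lemma~\ref{t-f^i-is-hH} can even be recovered afterwards, since your identity gives $(f^i)^{ii}=(f^i)^\circ$ and also $f^{iii}=(f^\circ)^i=f^i$, whence $f^i$ satisfies \ref{it:c} and is hypergraph Horn by \ref{it:c}$\Rightarrow$\ref{it:a} (using that $f^i$ is definite Horn, which follows from intersection-closedness of $\cI(f)^c$ as in the paper).
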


Let us note that property \ref{it:c} is a generalization of \emph{matroid duality}. Property \ref{it:d} above is the same as  \eqref{e-true-set-of-h} and  \eqref{e2-true-set-of-h} and characterizes hypergraph Horn functions. This property will be used in Section~\ref{sec-recog} for an efficient algorithm to recognize hypergraph Horn functions from definite Horn CNFs. Property \ref{it:f} is a generalization of the so-called {\em Mac Lane\,--\,Steinitz exchange property} for closure operator of matroids~\cite{Welsh1976}.

In what follows, we show several simpler claims and we use those put together at the end of this section to prove Theorem~\ref{t-MAIN-hH}. Our first result shows the equivalence of \ref{it:b} and \ref{it:d} in Theorem~\ref{t-MAIN-hH}.

\begin{lemma}\label{t-(b)<=>(c)}
For a Boolean function $f:2^V\to\{0,1\}$, we have $f^\circ=f$ if and only if for every false set $F$ of $f$ there exists an implicate set $I\in \cI(f)$ such that $|I\setminus F|=1$.
\end{lemma}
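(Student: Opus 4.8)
The plan is to reduce the identity $f^\circ=f$ to a statement purely about false sets, and then to read off the required condition from the characterization of true sets of hypergraph Horn functions in Corollary~\ref{c-imp-true}.

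First I would record the trivial inclusion. By construction $f^\circ=\Phi_{\cI(f)}$ is a hypergraph Horn \emph{majorant} of $f$, so $f\le f^\circ$, which gives $\cT(f)\subseteq\cT(f^\circ)$ and equivalently $\cF(f^\circ)\subseteq\cF(f)$. Consequently the two functions coincide if and only if the reverse inclusion $\cF(f)\subseteq\cF(f^\circ)$ holds; in words, $f^\circ=f$ is equivalent to the assertion that every false set of $f$ is also a false set of $f^\circ$.

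Second, I would invoke that $f^\circ$ is hypergraph Horn together with the (previously noted) identity $\cI(f^\circ)=\cI(f)$. Applying \eqref{e2-true-set-of-h} of Corollary~\ref{c-imp-true} to $f^\circ$ and substituting $\cI(f^\circ)=\cI(f)$ shows that a set $F$ lies outside $\cT(f^\circ)$, that is, $F\in\cF(f^\circ)$, exactly when there exists an implicate set $I\in\cI(f)$ with $|I\setminus F|=1$. Combining the two steps, $f^\circ=f$ holds if and only if for every false set $F$ of $f$ there is some $I\in\cI(f)$ with $|I\setminus F|=1$, which is the claim.

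I do not expect a substantive obstacle: once Corollary~\ref{c-imp-true} is available, the argument is pure bookkeeping with true and false sets and their complements. The only point that needs care is the orientation of the majorant inequality, namely that $f\le f^\circ$ (so that $\cF(f^\circ)\subseteq\cF(f)$ is the automatic inclusion and $\cF(f)\subseteq\cF(f^\circ)$ is the one equivalent to equality); getting this backwards would interchange the two directions of the biconditional.
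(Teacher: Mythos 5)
Your proof is correct and takes essentially the same route as the paper's: both rest on the majorant inequality $f\le f^\circ$, the identity $\cI(f)=\cI(f^\circ)$, and the characterization \eqref{e-true-set-of-h} (equivalently \eqref{e2-true-set-of-h}) applied to the hypergraph Horn function $f^\circ$. The only difference is presentational---you merge the two directions into a single chain of equivalences via $f=f^\circ \iff \cF(f)\subseteq\cF(f^\circ)$, while the paper proves the ``if'' direction by contraposition---so nothing further is needed.
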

\begin{proof}
Since $f=f^\circ$ is equivalent to the condition that $f$ is hypergraph Horn, the only-if part follows from \eqref{e-true-set-of-h}. For the if part, 
assume that $f\neq f^\circ$. Since $f\leq f^\circ$ holds for all Boolean functions, there exists a subset $F\subseteq V$ for which $f(F)=0$ and $f^\circ(F)=1$. By applying \eqref{e-true-set-of-h} to $f^\circ$, no implicate set $I\in \cI(f)$ satisfies  $|I\setminus F|=1$, which completes the proof of the if part. 
\end{proof}

Next we provide a characterization of hypergraph Horn functions that proves the equivalence of \ref{it:a} and \ref{it:f} of Theorem~\ref{t-MAIN-hH}. 

\begin{lemma}\label{t-hH-characterization}
A definite Horn function $h$ is hypergraph Horn if and only if for every false set $F$ of $h$,  there exists a variable $u\in V\setminus F$ such that
\begin{enumerate}[label={\rm (\roman*)}]\itemsep0em
\item \label{char:i} $F\overset{h}{\to} u$, and 
\item \label{char:ii} $F-v+u\overset{h}{\to}v$ for all variables $v\in F$ with $h(F-v)=1$. 
\end{enumerate}
\end{lemma}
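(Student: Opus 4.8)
The plan is to prove the two implications separately, leaning on the already-established equivalence of Lemma~\ref{t-(b)<=>(c)} (namely that $f^\circ=f$ holds iff every false set $F$ admits an $I\in\cI(f)$ with $|I\setminus F|=1$) together with the fact that a definite Horn function $h$ is hypergraph Horn exactly when $h=h^\circ$. Throughout I would read the two conditions through the closure operator: $F\overset{h}{\to}u$ means $u\in\FC_h(F)$, and $h(F-v)=1$ means $F-v$ is closed, i.e.\ $\FC_h(F-v)=F-v$.

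For the ``only if'' direction I would start from an implicate set. Assuming $h$ is hypergraph Horn, Lemma~\ref{t-(b)<=>(c)} supplies, for each false set $F$, an $I\in\cI(h)$ with $I\setminus F=\{u\}$; set $A:=I\cap F$. Condition \ref{char:i} is immediate, since $I$ being an implicate set gives $u\in\FC_h(I-u)=\FC_h(A)\subseteq\FC_h(F)$. For \ref{char:ii}, fix $v\in F$ with $F-v$ closed and split on whether $v\in A$. If $v\in A$, then $v\in\FC_h(I-v)=\FC_h((A-v)+u)\subseteq\FC_h(F-v+u)$, as wanted. If $v\notin A$, then $A\subseteq F-v$, so $u\in\FC_h(A)\subseteq\FC_h(F-v)=F-v$, contradicting $u\notin F$; hence this case does not arise. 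I expect no real difficulty here.

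The substantive direction is ``if'', which I would argue by contradiction through a minimal counterexample. Suppose the exchange condition holds at every false set but $h$ is not hypergraph Horn; by Lemma~\ref{t-(b)<=>(c)} some false set admits no implicate set protruding by exactly one element, and I would choose such an $F$ minimal with respect to inclusion. Let $u\in V\setminus F$ be the variable provided by the hypothesis at $F$. The target is to show that $F+u$ is itself an implicate set, which contradicts the choice of $F$ because $|(F+u)\setminus F|=1$. The instance $w=u$ of the implicate-set condition for $F+u$ is exactly \ref{char:i} ($F\overset{h}{\to}u$). For $v\in F$ I must verify $F-v+u\overset{h}{\to}v$, splitting on $F-v$: when $h(F-v)=1$ this is precisely \ref{char:ii}, and the delicate case is $h(F-v)=0$.

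The main obstacle is this last case, and it is where minimality does the work. When $h(F-v)=0$ the set $F-v$ is a strictly smaller false set, so by minimality it fails the defining property of $F$: there is some $I\in\cI(h)$ with $|I\setminus(F-v)|=1$, say $I\setminus(F-v)=\{w\}$. I would then rule out $w\notin F$, since otherwise $I\subseteq(F-v)+w$ would give $|I\setminus F|=1$, contradicting the choice of $F$; thus $w=v$, forcing $I\subseteq F$ with $v\in I$. Consequently $v\in\FC_h(I-v)\subseteq\FC_h(F-v)\subseteq\FC_h(F-v+u)$, which completes the verification that $F+u\in\cI(h)$ and produces the desired contradiction. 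Hence the exchange condition forces the implicate-set condition of Lemma~\ref{t-(b)<=>(c)}, whence $h=h^\circ$ and $h$ is hypergraph Horn.
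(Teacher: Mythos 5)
Your proof is correct and takes essentially the same route as the paper's: the only-if direction extracts an implicate set $I$ with $I\setminus F=\{u\}$ and shows $v\in I$ for the relevant $v$, while the if direction picks an inclusion-minimal false set admitting no implicate set protruding by one element and derives a contradiction by showing $F+u\in\cI(h)$. The only cosmetic difference is that where the paper runs the minimality argument through $h^\circ$ (splitting $F$ into the sets $Q$ and $R$), you unpack the same case analysis directly at the level of implicate sets.
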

\begin{proof}
Consider a hypergraph Horn function $h$, i.e., $h=\Phi_{\cI(h)}$. Let us consider an arbitrary false set $F\subseteq V$ of $h$. Then  \eqref{e-true-set-of-h} implies that there exists an implicate set $I\in\cI(h)$ with  $I\setminus F=\{u\}$ for some $u\in V$. We claim that $u$ satisfies \ref{char:i} and \ref{char:ii} in Lemma~\ref{t-hH-characterization}. Indeed,  $I\setminus F=\{u\}$ implies $F\overset{h}{\to} u$. Moreover, for any $v \in F$ with $h(F-v)=1$, we have $I\setminus (F-v)=\{v,u\}$, as otherwise  $I\setminus (F-v)=\{u\}$ holds, implying that $F-v$ is a false set of $h$, a contradiction. Therefore, we have $(F-v+u)\overset{h}{\to}v$, which proves \ref{char:ii}. 

For the reverse implication, assume that $h < h^\circ$ (i.e., $h \not=h^\circ$), and choose a minimal subset $F\subseteq V$ such that $h(F)=0$ and $h^\circ(F)=1$. By the minimality of $F$ and by $h\leq h^\circ$, for every variable $v\in F$ we have either $h(F-v)=1$ or $h^\circ(F-v)=0$. Let us define $Q=\{v\in F\mid h(F-v)=1\}$ and $R= F\setminus Q=\{v\in F\mid h^\circ(F-v)=0\}$. Since $h^\circ(F)=1$ and  $h^\circ(F-v)=0$ for a variable $v\in R$,  there exists an implicate set $I\in\cI(h)=\cI(h^\circ)$ such that $v\in I\subseteq F$. This implies that $(F-v)\overset{h}{\to}v$ for all $v\in R$. If we assume that some $u \in V \setminus F$ satisfies \ref{char:i} and \ref{char:ii}, then  $F+u$ is an implicate set of $h$. This however contradicts $h^\circ(F)=1$, which completes the proof of the if part of the theorem.
\end{proof}

We say that a hypergraph $\cH$ is a \emph{minimal representation} of a hypergraph Horn function $h$ if $\Phi_\cH=h$ but $\Phi_{\cH'}\neq h$ for every $\cH'\subsetneq\cH$. The next lemma, that is of independent combinatorial interest on its own, shows that hypergraph Horn functions can be represented by their standard generators; this in turn implies the equivalence of \ref{it:d} and \ref{it:e}.

\begin{lemma} \label{lem:minrep}
If $\cH\subseteq\cI(h)$ is a minimal representation of the hypergraph Horn function $h$ and $H\in\cH\setminus\cG(h)$, then $H$ can be replaced by a member $H'$ of $\cG(h)$ such that for the resulting hypergraph $\cH'=\cH\setminus \{H\}\cup\{H'\}$ we have $\Phi_{\cH'}=h$.
\end{lemma}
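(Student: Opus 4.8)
The plan is to locate a single generator $G^*\in\cG(h)$ with $G^*\subseteq H$ that captures exactly the implications lost when $H$ is deleted, and to show that replacing $H$ by this $G^*$ restores $h$. Throughout I would write $g=\Phi_{\cH\setminus\{H\}}$ for the function obtained by removing $H$. Since $\cH\subseteq\cI(h)$, every clause of $g$ is an implicate of $h$, so $g\geq h$, and minimality of $\cH$ forces $g>h$. Call a set $W$ a \emph{gap} if $g(W)=1$ but $h(W)=0$, and let $\cW$ be the (nonempty, since $g>h$) family of gaps. The first, routine, step is to note that each $W\in\cW$ satisfies $|H\setminus W|=1$: as $W$ is closed under $\cH\setminus\{H\}$ but not under $\Phi_\cH=h$, some clause of $H$ must fire on $W$, which forces $H\setminus W$ to consist of a single variable $v_W$. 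I then set $U=\{v_W\mid W\in\cW\}\subseteq H$, which is nonempty.

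Next I would record the replacement criterion: for any $I\in\cI(h)$ with $I\subseteq H$, the hypergraph $\cH'=(\cH\setminus\{H\})\cup\{I\}$ satisfies $\Phi_{\cH'}=h$ whenever $U\subseteq I$. The inequality $\Phi_{\cH'}\geq h$ is automatic because $I\in\cI(h)$; for the reverse one checks that any $X$ closed under both $\cH\setminus\{H\}$ and $I$ with $h(X)=0$ would be a gap, so $H\setminus X=\{v_X\}$ with $v_X\in U\subseteq I$, whence the clause $(I-v_X)\to v_X$ fires on $X$ (since $I\setminus X\subseteq H\setminus X=\{v_X\}$), contradicting closedness. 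Because $H\notin\cG(h)$, we have $H=\bigcup\{G\in\cG(h)\mid G\subseteq H\}$, so it suffices to produce one generator $G^*\subseteq H$ with $U\subseteq G^*$.

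The hard part is precisely this last step: each $v_W$ lies in some generator below $H$, but a priori the critical elements could be spread across several generators. To overcome this I would exploit the true/implicate duality of Corollary~\ref{c-imp-true}. For $v\in U$ the set $W_v=\FC_g(H-v)$ is a gap with $H\setminus W_v=\{v\}$. Given two elements $a,b\in U$, the intersection $W_a\cap W_b$ is again closed under $\cH\setminus\{H\}$ (true sets of the definite Horn function $g$ are intersection-closed) and satisfies $H\setminus(W_a\cap W_b)=\{a,b\}$; since $H$ cannot fire on a set from which it misses two variables, $W_a\cap W_b$ is a true set of $h$. Applying Corollary~\ref{c-imp-true} with this true set to an arbitrary implicate set $I\subseteq H$ gives $|I\cap\{a,b\}|=|I\setminus(W_a\cap W_b)|\neq 1$. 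Thus every implicate set $I\subseteq H$ contains both or neither of $a,b$; equivalently, each implicate set inside $H$ either contains all of $U$ or is disjoint from $U$.

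Finally I would combine these facts. Since $\bigcup\{G\in\cG(h)\mid G\subseteq H\}=H\supseteq U\neq\emptyset$, at least one generator $G^*\subseteq H$ meets $U$, and by the inseparability just proved $U\subseteq G^*$. (When $|U|=1$ the pairing argument is vacuous, and one simply takes any generator below $H$ containing the unique critical element.) By the replacement criterion, $\cH'=(\cH\setminus\{H\})\cup\{G^*\}$ satisfies $\Phi_{\cH'}=h$ with $G^*\in\cG(h)$, which is the assertion of the lemma. I expect the only genuinely delicate point to be the pairwise argument showing $U$ is inseparable within $H$; the remaining steps are bookkeeping with the closure operator and the characterizations already established.
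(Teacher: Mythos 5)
Your proof is correct, and it takes a genuinely different route from the paper's. The paper picks an inclusionwise minimal family $\cJ=\{J_1,\dots,J_k\}\subseteq\cG(h)$ whose union is $H$, attempts to replace $H$ by $H_j=\bigcup_{\ell\neq j}J_\ell$, and argues by contradiction: if every such replacement fails, the witnessing false sets $F_1,F_2$ (each with $H\setminus F_j=\{x_j\}$) intersect to a set $F$ that is false for $h$ because $|J_1\setminus F|=1$, yet satisfies every clause of $\Phi_\cH$ since $|H\setminus F|=2$ and no other hyperedge fires --- contradicting $h=\Phi_\cH$; a single generator is then reached only by ``repeated application'' of this step, a descent the paper leaves somewhat informal (one must track that the intermediate hypergraphs can still play the role of minimal representations). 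You instead give a one-shot construction: you isolate the critical set $U$ of elements $v$ with $H\setminus W=\{v\}$ for some gap $W$, prove an explicit replacement criterion ($\Phi_{(\cH\setminus\{H\})\cup\{I\}}=h$ whenever $I\in\cI(h)$, $I\subseteq H$, $U\subseteq I$), and show $U$ is inseparable within $H$: for distinct $a,b\in U$ the intersection $W_a\cap W_b$ of the minimal gaps is a true set of $h$ with $H\setminus(W_a\cap W_b)=\{a,b\}$, so \eqref{e-implicate-set-of-h} forces every implicate subset of $H$ to contain both of $a,b$ or neither; any generator below $H$ meeting $U$ therefore contains all of $U$ and serves as $H'$. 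The technical kernel is shared --- both proofs intersect two sets closed under $\cH\setminus\{H\}$, each missing exactly one element of $H$, and exploit $|H\setminus(\cdot)|=2$ together with Lemma~\ref{l-E-true-set-implicate-set-basics} --- but the logical arrangement differs: the paper uses the intersection to manufacture a false set with no firing hyperedge (a contradiction), while you use it to produce a true set and extract a positive structural fact. Your route buys the elimination of the iteration entirely, plus sharper information: $U$ behaves as a single block, so the generators $G\subseteq H$ with $U\subseteq G$ are exactly the valid replacements (your sufficient criterion is in fact also necessary, since for $v\in U\setminus I$ the gap $W_v$ stays true for the modified CNF). The paper's version, in exchange, needs less bookkeeping, requiring neither the replacement criterion nor any analysis of $U$. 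All the individual steps you flag as routine do check out: $g>h$ follows from minimality, each gap misses exactly one element of $H$, $W_v=\FC_g(H-v)$ is indeed a gap for $v\in U$, and $H=\bigcup\{G\in\cG(h)\mid G\subseteq H\}$ is exactly the paper's observation that $\cG(h)$ generates $\cI(h)$.
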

\begin{proof}
Take an arbitrary hyperedge $H\in\cH\setminus\cG(h)$, and let $\cJ=\{J_1,\dots,J_k\}$ be an inclusionwise minimal family of sets from $\cG(h)$ whose union is $H$. Let $H_j= \bigcup_{\ell\neq j} J_\ell$. By the minimality of $\cJ$, $H_j$ is a proper subset of $H$ for every $j$. 

We claim that $H$ can be replaced by $H_j$ for some $j$; the repeated application of such a step immediately proves the theorem. Suppose to the contrary that the claim does not hold. Then for each $j$, by Lemma~\ref{l-E-true-set-implicate-set-basics} and the minimality of $\cH$, there exists an element $x_j\in H\setminus H_j$ and a false set $F_j\in\cF(h)$ such that $H\setminus F_j=\{x_j\}$ and $H$ is the unique hyperedge of $\cH$ with that property. Let $F= F_1\cap F_2$. As $|J_1\setminus F|=1$, $F$ is a false set by Lemma~\ref{l-E-true-set-implicate-set-basics}. Since $H$ is the only set in $\cH$ such that $|H\setminus F_j|=1$ for both $j=1$ and $2$, there exists no hyperedge $K\in\cH$ with $|K\setminus F|=1$. This contradicts our assumption that $\Phi_\cH=h$. 
\end{proof}

The equivalence of \ref{it:d} and \ref{it:e} then follows: \ref{it:d} is clearly a strengthening of \ref{it:e}, while the other direction holds as $h=\Phi_{\cG(h)}$ by Lemma~\ref{lem:minrep}. It is worth formulating the latter observation as a corollary.

\begin{corollary}
Any hypergraph Horn function $h$ admits a minimal representation $\cH$ such that $\cH\subseteq\cG(h)$. \hfill$\Box$
\end{corollary}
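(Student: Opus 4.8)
The plan is to start from any minimal representation of $h$ and use Lemma~\ref{lem:minrep} as a rewriting rule, driving down a potential that counts the hyperedges lying outside $\cG(h)$. Since $h$ is hypergraph Horn it has at least one circular representation, and discarding redundant hyperedges one by one yields a \emph{minimal} representation $\cH$. Observe first that any representation automatically satisfies $\cH\subseteq\cI(h)$: if $\Phi_\cH=h$, then for each $H\in\cH$ and each $v\in H$ the clause $(H-v)\to v$ is a clause of $\Phi_\cH=h$, hence an implicate of $h$, so $H$ is an implicate set of $h$. Thus $\cH$ is an eligible input for Lemma~\ref{lem:minrep}.

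Next I would introduce the potential $\mu(\cH)=|\cH\setminus\cG(h)|$ and argue by induction on $\mu$. If $\mu(\cH)=0$ then $\cH\subseteq\cG(h)$ and we are done. Otherwise pick a hyperedge $H\in\cH\setminus\cG(h)$; Lemma~\ref{lem:minrep} replaces it by some $H'\in\cG(h)$, producing $\widehat{\cH}=\cH\setminus\{H\}\cup\{H'\}$ with $\Phi_{\widehat{\cH}}=h$ and $\widehat{\cH}\subseteq\cI(h)$. The edge $H$ that left the family was counted by $\mu$, while the edge $H'$ that entered is not (and the degenerate cases $H'\in\cH$ or removing nothing new are harmless), so $\mu(\widehat{\cH})\le\mu(\cH)-1$. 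The hypergraph $\widehat{\cH}$ need not be minimal, so I would then pass to a minimal sub-representation $\cH'\subseteq\widehat{\cH}$ by deleting redundant hyperedges; since deletion only removes edges, $\mu(\cH')\le\mu(\widehat{\cH})\le\mu(\cH)-1$, and $\cH'\subseteq\cI(h)$ is again a minimal representation eligible for the next round. The potential strictly decreases each round, so after finitely many steps it reaches $0$, leaving a minimal representation contained in $\cG(h)$.

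The only point that needs care is precisely this re-minimization step: after a single swap the family $\widehat{\cH}$ may acquire new redundancies, so minimality must be restored before Lemma~\ref{lem:minrep} can be reapplied. The observation that makes this harmless is monotonicity: re-minimization only deletes hyperedges, hence cannot create a hyperedge outside $\cG(h)$ and cannot increase $\mu$. An alternative and perhaps slicker route avoids the iteration entirely: one first shows $\Phi_{\cG(h)}=h$. Indeed, every $G\in\cG(h)$ is trivially an implicate set of $\Phi_{\cG(h)}$, and since implicate sets are union closed (Lemma~\ref{l-implicate-sets-are-union-closed}) every member of $\cG(h)^\cup=\cI(h)$ is an implicate set of $\Phi_{\cG(h)}$; thus every clause of $\Phi_{\cI(h)}$ is an implicate of $\Phi_{\cG(h)}$, giving $\Phi_{\cG(h)}\le\Phi_{\cI(h)}=h^\circ=h$, while $\cG(h)\subseteq\cI(h)$ gives the reverse inequality $\Phi_{\cG(h)}\ge\Phi_{\cI(h)}=h$. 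Hence $\cG(h)$ is itself a representation of $h$, and any minimal sub-representation of $\cG(h)$ serves as the required $\cH$.
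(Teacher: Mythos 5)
Your proof is correct, and it actually contains two arguments. Your first route (iterating Lemma~\ref{lem:minrep} with the potential $\mu(\cH)=|\cH\setminus\cG(h)|$) is essentially the paper's approach: the paper justifies the corollary by observing that repeated application of the swap in Lemma~\ref{lem:minrep} yields $h=\Phi_{\cG(h)}$. You are more careful than the paper on one genuine point: Lemma~\ref{lem:minrep} is stated only for \emph{minimal} representations, and a single swap can destroy minimality, so one must re-minimize before reapplying the lemma; your observation that deletion of hyperedges cannot increase $\mu$ is exactly what makes the iteration terminate, and the paper leaves this implicit. Your second route is genuinely different and slicker: it bypasses Lemma~\ref{lem:minrep} entirely by proving $\Phi_{\cG(h)}=h$ directly. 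Since each $G\in\cG(h)$ is an implicate set of $g=\Phi_{\cG(h)}$ and $\cI(g)$ is union closed by Lemma~\ref{l-implicate-sets-are-union-closed}, you get $\cI(h)=\cG(h)^\cup\subseteq\cI(g)$, hence $g\leq\Phi_{\cI(h)}=h^\circ=h$, while $\cG(h)\subseteq\cI(h)$ gives $g\geq\Phi_{\cI(h)}=h$; any minimal subfamily of $\cG(h)$ still representing $h$ is then a minimal representation in the paper's sense. What each approach buys: the paper's (and your first) route yields the stronger structural statement of Lemma~\ref{lem:minrep} itself, namely that \emph{any} minimal representation can be transformed edge-by-edge into one inside $\cG(h)$, whereas your direct route is shorter, needs only union-closedness and the identity $h=\Phi_{\cI(h)}$, and would be the cleaner way to prove the corollary in isolation.
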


We show next that implicate-duality associates a hypergraph Horn function to every Boolean function. 

\begin{lemma}\label{t-f^i-is-hH}
The function $f^i$ is hypergraph Horn for every Boolean function $f$.
\end{lemma}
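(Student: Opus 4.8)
The plan is to verify, for the function $g = f^i$, the criterion of Lemma~\ref{t-(b)<=>(c)}: since being hypergraph Horn is equivalent to $g^\circ = g$, it suffices to show that for every false set $F$ of $f^i$ there is an implicate set $I \in \cI(f^i)$ with $|I \setminus F| = 1$. The whole argument rests on uncovering a symmetric counterpart to the defining identity $\cT(f^i) = \cI(f)^c$, namely a description of the implicate sets of $f^i$. Note first that $\cI(f^i) = \cI(f)$-based reasoning is available because $f^i = f^{\circ i}$ and $\cI(f) = \cI(f^\circ)$, so throughout I may freely invoke that $f^\circ$ is hypergraph Horn and apply to it the bidirectional characterizations of Lemma~\ref{l-E-true-set-implicate-set-basics} and Corollary~\ref{c-imp-true}.

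The key step is to establish the identity $\cI(f^i) = \cT(f^\circ)^c$. To see this, recall that \eqref{e2-implicate-set-of-h} of Corollary~\ref{c-imp-true} holds for arbitrary Boolean functions, so $I \in \cI(f^i)$ iff there is no $T \in \cT(f^i)$ with $|I \setminus T| = 1$. Writing $T = V \setminus J$ with $J \in \cI(f)$ (using $\cT(f^i) = \cI(f)^c$) and noting $I \setminus T = I \cap J$, this says $I \in \cI(f^i)$ iff no $J \in \cI(f)$ satisfies $|I \cap J| = 1$. On the other hand, applying \eqref{e-true-set-of-h} to the hypergraph Horn function $f^\circ$ (whose implicate sets are exactly $\cI(f)$) and using $J \setminus (V \setminus I) = J \cap I$, we get $V \setminus I \in \cT(f^\circ)$ iff no $J \in \cI(f)$ satisfies $|J \cap I| = 1$. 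The two conditions coincide, yielding $I \in \cI(f^i) \iff V \setminus I \in \cT(f^\circ)$, i.e. $\cI(f^i) = \cT(f^\circ)^c$.

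With both $\cT(f^i) = \cI(f)^c$ and $\cI(f^i) = \cT(f^\circ)^c$ in hand, the verification is immediate. Let $F$ be a false set of $f^i$; then $G := V \setminus F \notin \cI(f) = \cI(f^\circ)$. Since $f^\circ$ is hypergraph Horn, Corollary~\ref{c-imp-true} (equation \eqref{e2-implicate-set-of-h} for $f^\circ$) shows that $G \notin \cI(f^\circ)$ forces the existence of some $T \in \cT(f^\circ)$ with $|G \setminus T| = 1$. Setting $I := V \setminus T$, we have $I \in \cT(f^\circ)^c = \cI(f^i)$ and $I \setminus F = (V \setminus T) \cap G = G \setminus T$, whence $|I \setminus F| = 1$. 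This is exactly the condition of Lemma~\ref{t-(b)<=>(c)} applied to $f^i$, so $(f^i)^\circ = f^i$ and $f^i$ is hypergraph Horn.

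I expect the main obstacle to be the bookkeeping of which characterizations may be used on which function: the identity $\cT(f^i) = \cI(f)^c$ is a definition valid for all $f$, and \eqref{e2-implicate-set-of-h} holds universally, but \eqref{e-true-set-of-h} is only available for hypergraph Horn functions, which is precisely why it must be invoked on $f^\circ$ rather than on $f$. Once the symmetric identity $\cI(f^i) = \cT(f^\circ)^c$ is correctly derived, translating complements into cardinality-one intersection conditions, the remainder is a short substitution.
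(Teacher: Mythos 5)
Your proof is correct, but it takes a genuinely different route from the paper's. The paper verifies directly that every false set $F$ of $f^i$ admits a variable $u$ satisfying the exchange-type characterization of Lemma~\ref{t-hH-characterization}: writing $\bar F=V\setminus F$ and $U=\FC_{f^i}(F)\setminus F$, it uses $\IS_f(\bar F)=\bar F\setminus U$ together with $\bar F\notin\cI(f)$ to find $u\in U$ with $\bar F-u\overset{f}{\nrightarrow}u$, and then refutes any violation of condition \ref{char:ii} by an element-level argument with the implicate sets $I_w=\bar F+w$ and $I_{uw}=V\setminus\FC_{f^i}(F-w+u)$. You instead first prove the set-level identity $\cI(f^i)=\cT(f^\circ)^c$ --- which is precisely the content of the paper's Lemma~\ref{t-ii=o}, i.e., $f^{ii}=f^\circ$ --- and then read off the criterion of Lemma~\ref{t-(b)<=>(c)} by a one-line complementation. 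The delicate point, which you handle correctly, is avoiding circularity: the paper proves Lemma~\ref{t-ii=o} \emph{after} and \emph{using} Lemma~\ref{t-f^i-is-hH}, whereas you derive the identity by applying \eqref{e-true-set-of-h} only to $f^\circ$ (hypergraph Horn by construction) and \eqref{e2-implicate-set-of-h} to $f^i$, which is legitimate since that equivalence holds for arbitrary Boolean functions, as the paper's own remark after Corollary~\ref{c-imp-true} confirms. In effect you show that the two lemmas can be established in the reverse order, with Lemma~\ref{t-ii=o} coming first and Lemma~\ref{t-f^i-is-hH} as a corollary; your route is shorter, unifies two results, and makes clear that $f^{ii}=f^\circ$ needs no prior knowledge that $f^i$ is hypergraph Horn, while the paper's route is constructive at the element level (it exhibits the actual exchange variable $u$) and foreshadows the exchange-property viewpoint developed for matroids. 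One small caution: your opening phrase ``$\cI(f^i)=\cI(f)$-based reasoning'' could be misread as asserting the false identity $\cI(f^i)=\cI(f)$; what you actually use, correctly, is only $\cI(f)=\cI(f^\circ)$ and $f^i=f^{\circ i}$, so that sentence should be reworded.
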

\begin{proof}
By the definition of the $i$ operator in \eqref{e-f->f^i}, $f^i$ is always definite Horn. Thus, it suffices to show that every false set $F$ of $f^i$ has a variable $v \in V\setminus F$ that satisfies \ref{char:i} and \ref{char:ii} of Lemma ~\ref{t-hH-characterization}, since this proves that $f^i$ is hypergraph Horn by Lemma~\ref{t-hH-characterization}. 

Let $U =\FC_{f^i}(F) \setminus F$. Since $F$ is a false set of $f^i$, we have $U \not=\emptyset$ and  
\begin{equation*}\label{e-F->u}
F\overset{f^i}{\to}u ~~~\mbox{ for all}~ u\in U,
\end{equation*}
implying that \ref{char:i} of Theorem~\ref{t-hH-characterization} holds for any $u\in U$. Let us denote by 
\[
W ~=~ \{ w\in F\mid F-w\in\cT(f^i)\}.
\]
If $W=\emptyset$, then \ref{char:ii} of Theorem~\ref{t-hH-characterization} is vacuous for this false set $F$, and we have nothing else to show. Thus, let us assume for the rest of the proof that $W\neq \emptyset$. Let $\bar{F}= V\setminus F$,  $I_0= \bar{F}\setminus U$, and $I_w= \bar{F}+w$ for $w \in W$. Then, by the definition of the $i$ operator in \eqref{e-f->f^i}, we have $\bar{F}\notin \cI(f)$, $I_0 \in \IS_f(\bar{F})$, and $I_w \in\cI(f)$  for every $w \in W$. Let us observe that $\bar{F}-u\overset{f}{\to}u$ holds for all $u\in I_0$, since $\bar{F}-u\supseteq I_0-u$ and $I_0$ is an implicate set of $f$. Therefore, by $\bar{F}\notin \cI(f)$, there exists a variable $u\in U$ such that
\begin{equation}
\label{eq-aege1}
\bar{F}-u\overset{f}{\nrightarrow} u.
\end{equation}
We claim that this $u \in U$ satisfies \ref{char:ii} of Lemma~\ref{t-hH-characterization}, which complete the proof of the lemma.  

To prove the claim, let us assume that there exists a variable $w\in W$ such that $F-w+u\overset{f^i}{\nrightarrow} w$. Let $I_{uw}$ be an implicate set  of $f$ defined by $I_{uw}=V\setminus \FC_{f^i}(F-w+u)$, where we note that  $\FC_{f^i}(F-w+u)$ is a true set of $f_i$. Since $F-w+u\overset{f^i}{\nrightarrow} w$, we have $w \not\in \FC_{f^i}(F-w+u)$, i.e., $w\in I_{uw}$. 
Thus, $I_{uw}\setminus (\bar{F}-u)=\{w\}$ holds. This, together with $I_w= \bar{F}+w$, implies  that $\FC_f(\bar{F}-u) \supseteq \bar{F}+w$. This means by Lemma~\ref{l-E-implicate} that $\bar{F}-u\to u$ is an implicate of $f$, which contradicts \eqref{eq-aege1}. 
\end{proof}

Next we show that implicate-duality is an involution when restricted to the family of hypergraph Horn functions. More precisely, we prove the following statement. 

\begin{lemma}\label{t-ii=o}
We have $f^{ii}=f^\circ$  for every Boolean function $f$. 
\end{lemma}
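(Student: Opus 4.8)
The plan is to prove the equality of true-set families $\cT(f^{ii})=\cT(f^\circ)$, which suffices since a Boolean function is uniquely determined by its family of true sets. I would first record a convenient description of each side. For the right-hand side, recall that $\cI(f^\circ)=\cI(f)$ and that $f^\circ$ is hypergraph Horn by construction, so equation~\eqref{e2-true-set-of-h} of Corollary~\ref{c-imp-true} applied to $h=f^\circ$ gives
\[
\cT(f^\circ)=\{\,T\subseteq V \mid \nexists\, I\in\cI(f) \text{ with } |I\setminus T|=1\,\}.
\]
My goal is therefore to show that $\cT(f^{ii})$ admits exactly the same description.

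For the left-hand side, I would apply the definition~\eqref{e-f->f^i} of the $i$-operator to the function $g=f^i$, which yields $\cT(f^{ii})=\cI(g)^c=\cI(f^i)^c$. To compute $\cI(f^i)$ I would invoke equation~\eqref{e2-implicate-set-of-h}, which (as remarked after Corollary~\ref{c-imp-true}) holds for arbitrary Boolean functions, in the form
\[
\cI(f^i)=\{\,S\subseteq V \mid \nexists\, W\in\cT(f^i) \text{ with } |S\setminus W|=1\,\},
\]
and then substitute $\cT(f^i)=\cI(f)^c$. Unwinding the two complements is the crux: a set $T'$ lies in $\cI(f^i)^c$ precisely when $V\setminus T'\in\cI(f^i)$, i.e.\ when no $W\in\cI(f)^c$ satisfies $|(V\setminus T')\setminus W|=1$. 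Writing $U=V\setminus W$, so that $W\in\cI(f)^c \iff U\in\cI(f)$, and using the elementary identity $(V\setminus T')\setminus(V\setminus U)=U\setminus T'$, this condition becomes: no $U\in\cI(f)$ satisfies $|U\setminus T'|=1$. Hence
\[
\cT(f^{ii})=\{\,T'\subseteq V \mid \nexists\, U\in\cI(f) \text{ with } |U\setminus T'|=1\,\},
\]
which coincides with the description of $\cT(f^\circ)$ obtained above, completing the argument.

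The only genuine obstacle is keeping the complement operator ``$c$'' aligned with the asymmetric cardinality condition $|\,\cdot\setminus\cdot\,|=1$ through the two nested applications of the $i$-operator; the single identity $(V\setminus T')\setminus(V\setminus U)=U\setminus T'$ is what makes everything line up, and no case analysis is needed. I note that this route is self-contained given the material already developed, relying only on the definition~\eqref{e-f->f^i}, the identity $\cI(f^\circ)=\cI(f)$, and the two characterizations in Corollary~\ref{c-imp-true} (of which \eqref{e2-implicate-set-of-h} is valid for all Boolean functions). In particular it does not formally require Lemma~\ref{t-f^i-is-hH}, though that lemma reassuringly confirms $f^{ii}$ is hypergraph Horn, consistently with $f^{ii}=f^\circ$.
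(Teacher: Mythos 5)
Your proof is correct and follows essentially the same route as the paper: both reduce the claim to the equality $\cT(f^\circ)=\cI(f^i)^c$ and establish it by chaining \eqref{e2-true-set-of-h} applied to the hypergraph Horn function $f^\circ$, the definitional substitution $\cT(f^i)=\cI(f)^c$, and \eqref{e2-implicate-set-of-h} applied to $f^i$, with the same complementation identity doing the bookkeeping. The only (valid) nuance is your observation that Lemma~\ref{t-f^i-is-hH} is dispensable because \eqref{e2-implicate-set-of-h} holds for arbitrary Boolean functions, whereas the paper invokes that lemma to treat $f^i$ as hypergraph Horn before applying Lemma~\ref{l-E-true-set-implicate-set-basics}.
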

\begin{proof}
Since a Boolean function is uniquely determined by its family of true sets, the statement is equivalent to the equality
\begin{equation}\label{e-Ifi=Tfcirc}
\cT(f^\circ)  ~=~ \cI(f^i)^c. 
\end{equation}
by the definition of the implicate-dual of $f^i$. Note that $f^\circ$ and $f^i$ are both  hypergraph Horn by the definition of $f^\circ$ and Lemma~\ref{t-f^i-is-hH}. Thus, we shall show \eqref{e-Ifi=Tfcirc} by applying Lemma~\ref{l-E-true-set-implicate-set-basics} for both $f^\circ$ and $f^i$. By applying \eqref{e-true-set-of-h} of Lemma~\ref{l-E-true-set-implicate-set-basics} to $f^\circ$,  we get the equivalence
\[
T\in \cT(f^\circ) 
\,\,\iff \,\,
\nexists \,J\in\cI(f^\circ) \text{ s.t. } |J\setminus T|=1. 
\]
Since the true sets of $f^i$ are exactly the complements of implicate sets of $f^\circ$ by our definition of the implicate-dual, we also have (for an arbitrary set $T$) the equivalence 
\[
\nexists J\in\cI(f^\circ) \text{ s.t. } |J\setminus T|=1 
\,\,\iff\,\,
\nexists S\in \cT(f^i) \text{ s.t. } |\bar{T}\setminus S|=1. 
\]
Finally, by applying \eqref{e-implicate-set-of-h} of Lemma~\ref{l-E-true-set-implicate-set-basics} to $f^i$, we get the equivalence of
\[
\nexists S\in \cT(f^i) \text{ s.t. } |\bar{T}\setminus S|=1
\iff
\bar{T}\in \cI(f^i).
\]
Putting together the above three equivalences, we obtain \eqref{e-Ifi=Tfcirc}, and the lemma follows.
\end{proof}

Now we are ready to prove the main theorem of this section.

\begin{proof}[\textbf{Proof of Theorem~\ref{t-MAIN-hH}}:]
Note first that the equivalence of \ref{it:a} and \ref{it:b} follows by our definitions. The equivalence of \ref{it:b} and \ref{it:d} follows by Lemma~\ref{t-(b)<=>(c)}. The equivalence of \ref{it:d} and \ref{it:e} follows by Lemma~\ref{lem:minrep}. The equivalence of \ref{it:a} and \ref{it:f} is implied by Lemma~\ref{t-hH-characterization}. Finally, the equivalence of \ref{it:b} and \ref{it:c} follows by Lemma~\ref{t-ii=o}. 
\end{proof}

\begin{remark}
Let us note that the equivalence between \ref{it:a}, \ref{it:b}, \ref{it:c}, \ref{it:d}, and \ref{it:e} in Theorem~\ref{t-MAIN-hH} follows for an arbitrary Boolean function $f$. Property \ref{it:f} makes sense only for definite Horn functions, since the closure operator $\FC$ is defined only when $f$ is definite Horn.  
\end{remark}

We also remark that Lemmas~\ref{t-f^i-is-hH} and~\ref{t-ii=o} imply the following property on hypergraph Horn functions. 

\begin{corollary}\label{c-E-hH-is-idual}
Operator $i$ provides a bijection on hypergraph Horn functions, and  hence all hypergraph Horn functions arise as implicate-duals of hypergraph Horn functions. 
\end{corollary}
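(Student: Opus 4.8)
The plan is to establish Corollary~\ref{c-E-hH-is-idual} as a direct consequence of Lemmas~\ref{t-f^i-is-hH} and~\ref{t-ii=o}, which together already encode everything needed: the former guarantees that the $i$ operator maps into the class of hypergraph Horn functions, while the latter identifies $f^{ii}$ with $f^\circ$, and the two facts combine to show that $i$ is an involution on this class. First I would observe that by Lemma~\ref{t-f^i-is-hH}, for \emph{every} Boolean function $f$ the function $f^i$ is hypergraph Horn; in particular, restricting attention to hypergraph Horn functions as inputs, the operator $i$ sends hypergraph Horn functions to hypergraph Horn functions, so $i$ is a well-defined self-map on this class.

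Next I would use Lemma~\ref{t-ii=o}, which states $f^{ii}=f^\circ$ for all Boolean $f$. The key step is to specialize this to a hypergraph Horn function $h$: by the equivalence of \ref{it:a} and \ref{it:b} in Theorem~\ref{t-MAIN-hH} (or directly by the remark that $f$ is hypergraph Horn iff $f=f^\circ$), we have $h^\circ=h$, and therefore $h^{ii}=h^\circ=h$. This shows that applying $i$ twice to a hypergraph Horn function returns the original function, i.e.\ $i$ restricted to the hypergraph Horn class is its own inverse. An involution is in particular a bijection, so $i$ is a bijection on the set of hypergraph Horn functions.

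Finally I would draw the stated conclusion about surjectivity explicitly. Since $i$ is a bijection on hypergraph Horn functions, every hypergraph Horn function $h$ is the image under $i$ of some hypergraph Horn function; concretely, $h=(h^i)^i=h^{ii}$ with $h^i$ itself hypergraph Horn by Lemma~\ref{t-f^i-is-hH}, so $h$ arises as the implicate-dual of the hypergraph Horn function $h^i$. This is exactly the assertion that all hypergraph Horn functions arise as implicate-duals of hypergraph Horn functions.

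Since both ingredients are already proved, I do not anticipate a genuine obstacle here; the only point requiring minor care is the logical bookkeeping of combining the two lemmas---namely invoking $h=h^\circ$ for a hypergraph Horn $h$ to convert the general identity $f^{ii}=f^\circ$ into the involution statement $h^{ii}=h$---and confirming that closure under $i$ (Lemma~\ref{t-f^i-is-hH}) makes the self-map well-defined before declaring it a bijection. The proof is thus essentially a two-line deduction.
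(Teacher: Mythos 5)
Your proposal is correct and follows essentially the same route as the paper: the paper's proof invokes Theorem~\ref{t-MAIN-hH} (whose equivalence of \ref{it:b} and \ref{it:c} packages exactly your step $h^{ii}=h^\circ=h$ via Lemma~\ref{t-ii=o}), with Lemma~\ref{t-f^i-is-hH} supplying, as in your argument, that $i$ maps into the hypergraph Horn class. You merely spell out the bookkeeping (well-definedness of the self-map, involution implies bijection, $h=(h^i)^i$ for surjectivity) that the paper leaves implicit.
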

\begin{proof}
By Theorem~\ref{t-MAIN-hH}, we have  $h^{ii}=h$ for every hypergraph Horn function $h$. This implies the statement. \end{proof}

Corollary~\ref{c-E-hH-is-idual} naturally raises the following question: how to characterize dual pairs $h,h^i$ of hypergraph Horn functions in terms of their implicate sets, for example? In Section~\ref{sec-algo1}, we discuss the related topics from an algorithmic point of view.   

Let us call a hypergraph Horn function $h$  \emph{self implicate-dual} if $h^i=h$. Even the characterization of hypergraphs $\cH\subseteq 2^V$ for which $h=\Phi_\cH$ is self implicate-dual is open. We shall see in an accompanying paper~\cite{matroid_horn} that the families of circuits of self-dual matroids are such hypergraphs. However, there are self implicate-dual functions that do not correspond to matroids, as shown by the following example.

\begin{ex}\label{ex-fi=f}
For $V=\{1,2,3,4,5\}$, consider a hypergraph Horn function $f$ represented by $\Phi_\cH$, where 
\[
\cH=\left\{ \emptyset, \{1,2,3\}, \{4,5\}, \{1,2,4,5\}, \{1,3,4,5\}, \{2,3,4,5\}, V \right\}. 
\]
It is not difficult  to verify that \[
\cI(h)=\cH ~~~\text{ and }~~~ \cT(h)=\cH^c.
\]
implying that  $h^i=h$, i.e., $h$ is self implicate-dual. On the other hand, minimal nontrival sets in $\cH$ do not satisfy the circuit axiom \eqref{CC}, implying that $f$ is not matorid-Horn; for further details, we refer the interested reader to~\cite{matroid_horn}.
\end{ex}

\section{Computational problems for hypergraph Horn functions}
\label{sec-algo1}

In this section, we consider computational problems for hypergraph Horn functions. We first show that the core $\IS_\Psi(S)$ can be computed in polynomial time from  a  definite Horn CNF $\Psi$ and an arbitrary set $S\subseteq V$. Note that operators $\FC$ and $\IS$ play a dual role in hypergraph Horn functions. While $\FC_\Psi(S)$ is possible to compute in linear time,  $\IS_\Psi(S)$ seems to be difficult to compute in linear time. We then show that the recognition problem (i.e., the problem of deciding if a given definite Horn CNF represents a hypergraph Horn function) and the key realization problem (i.e.,  the problem of deciding if a given hypergraph is realized as the set of minimal keys of a hypergraph Horn function $h$) can be solved in polynomial time, and implicate sets can be generated with polynomial delay. We also discuss problems related to implicate-duality.  

\subsection{Core computation}

Let us start with the core computation, i.e., finding the largest implicate set contained in a given set. For a definite Horn CNF $\Psi$, we define 
\[\IS^1_\Psi(S) ~=~ \{v\in S\mid v \in \FC_\Psi(S-v)\}. 
\]

\begin{algorithm}[ht!]
\caption{\textsc{Core Computation}\label{alg:core}}
\begin{algorithmic}[1]
\Statex \textbf{Input}: A definite Horn CNF $\Psi$ and a set $S \subseteq V$.
\Statex \textbf{Output}: Core  $\IS_\Psi(S)$ of $S$ .
\State \textbf{Initialize}: $\IS^0_\Psi(S)\coloneqq S$
\For{$k=1,2, \dots , |S|$}
\State Compute $\IS^k_\Psi(S)=\IS^1_\Psi(\IS^{k-1}_\Psi(S))$ 
\EndFor
\State \textbf{return} $\IS^{|S|}_\Psi(S)$
\end{algorithmic}
\end{algorithm}

The following theorem shows that Algorithm {\sc Core Computation} computes the core in polynomial time. 

\begin{theorem}\label{l-IS-computation}
For a definite Horn CNF $\Psi$  and a subset $S\subseteq V$,  core $\IS_f(S)$ of $S$ can be obtained in $O(|S|^2{\cdot}\|\Psi\|)$ time. 
\end{theorem}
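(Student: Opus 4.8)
The plan is to prove two things about Algorithm~\ref{alg:core}: that the returned set $\IS^{|S|}_\Psi(S)$ is indeed the core $\IS_\Psi(S)$, and that the computation fits in the claimed time bound. Since the closure $\FC_\Psi$ depends only on the underlying function $f$, so does the operator $\IS^1_\Psi$ and hence the core; I will freely write $\IS_\Psi(S)=\IS_f(S)$.

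First I would record the structural facts driving the iteration. By definition $\IS^1_\Psi(T)\subseteq T$ for every $T\subseteq V$, so the sequence $S=\IS^0_\Psi(S)\supseteq\IS^1_\Psi(S)\supseteq\IS^2_\Psi(S)\supseteq\cdots$ is nonincreasing; each strict inclusion drops at least one element, so it stabilizes after at most $|S|$ steps, and $T^*\coloneqq\IS^{|S|}_\Psi(S)$ is a fixed point, $\IS^1_\Psi(T^*)=T^*$. Next I would identify fixed points with implicate sets: a set $T$ satisfies $\IS^1_\Psi(T)=T$ exactly when $v\in\FC_\Psi(T-v)$ for every $v\in T$, which by definition says $(T-v)\to v$ is an implicate for all $v\in T$, i.e. $T\in\cI(f)$. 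Thus $T^*\in\cI(f)$ and $T^*\subseteq S$, so $T^*\subseteq\IS_\Psi(S)$ because the core is the largest implicate set inside $S$ (using that $\cI(f)$ is union-closed, Lemma~\ref{l-implicate-sets-are-union-closed}).

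For the reverse inclusion I would prove that a single application of $\IS^1_\Psi$ never discards an element belonging to an implicate subset: if $I\in\cI(f)$ and $I\subseteq T$, then $I\subseteq\IS^1_\Psi(T)$. Indeed, for $v\in I$ we have $v\in\FC_\Psi(I-v)$ since $I$ is an implicate set, and monotonicity of the closure operator with $I-v\subseteq T-v$ gives $v\in\FC_\Psi(T-v)$, hence $v\in\IS^1_\Psi(T)$. Taking $I=\IS_\Psi(S)$ and iterating yields $\IS_\Psi(S)\subseteq\IS^k_\Psi(S)$ for all $k$, in particular $\IS_\Psi(S)\subseteq T^*$. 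Together with the previous paragraph this gives $T^*=\IS_\Psi(S)$, establishing correctness. For the time bound, each evaluation of $\IS^1_\Psi(T)$ performs, for each of the at most $|S|$ elements $v\in T$, one closure computation $\FC_\Psi(T-v)$ plus a membership test; since a closure is computable in $O(\|\Psi\|)$ time by forward chaining, one application costs $O(|S|\cdot\|\Psi\|)$, and the $|S|$ iterations of the loop give the stated $O(|S|^2\cdot\|\Psi\|)$.

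I expect the core-preservation step (the third paragraph) to be the substantive point: it is what forces convergence to the \emph{maximal} implicate set contained in $S$ rather than to some arbitrary fixed point, and it rests squarely on the monotonicity of $\FC$. The monotonicity and termination arguments are routine once the fixed-point characterization of implicate sets is in hand.
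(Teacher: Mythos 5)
Your proof is correct and follows essentially the same route as the paper: both arguments rest on the fixed-point characterization of implicate sets via $\IS^1_\Psi$, the observation that any implicate set $I\subseteq T$ survives one application of $\IS^1_\Psi$ because $I-v\subseteq T-v$ and $\FC_\Psi$ is monotone, and the $O(\|\Psi\|)$ cost per closure test. Your write-up merely makes explicit what the paper compresses (the iteration of the preservation step and the identification of fixed points with implicate sets), so there is nothing to add.
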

\begin{proof}
By definition, we have $\IS^1_\Psi(S)\subseteq S$, and note that equality holds if and only if $S$ is an implicate set. We also note that  $\IS^k_\Psi(S)=\IS^{k-1}_\Psi(S)$ implies $\IS^\ell_\Psi(S)=\IS^{k-1}_\Psi(S)$ for all $\ell \geq k-1$. Since $\emptyset$ is an implicate set for any Boolean function, Algorithm~\ref{alg:core} always outputs an implicate set of $\Psi$ contained in $S$. Moreover, for any implicate set  $I\subseteq S$  of $\Psi$, all variables $v\in I$ satisfy  $v \in \FC_f(S-v)$, since $I-v\subseteq S-v$. Thus, the set generated by the algorithm contains all implicate sets $I$ contained in $S$. This implies that the algorithm computes the core of $S$ correctly. Since for a subset $X\subseteq V$ and variable $v\in X$ we can test the condition $v \in \FC_\Psi(X-v)$ in $O(\|\Psi\|)$ time, the theorem follows. 
\end{proof}

\subsection{Hypergraph Horn recognition problem}
\label{sec-recog}

We now move to the recognition problem of hypergraph Horn functions. The following algorithm uses Algorithm~\ref{alg:core} as a subroutine polynomially many times.

\begin{algorithm}[ht!]
\caption{\textsc{Hypergraph Horn Recognition}\label{alg:Horn-detection}}
\begin{algorithmic}[1]
\Statex \textbf{Input}: A definite Horn CNF $\Psi$.
\Statex \textbf{Output}: A hypergraph $\cH\subseteq 2^V$ such that $\Psi=\Phi_{\cH}$ if $\Psi$ represents a hypergraph Horn function, and  ``\textsc{No}" otherwise.
\State \textbf{Initialize}: $\cH\coloneqq \emptyset$
\For{\textbf{each} clause $A\to v$ of $\Psi$}
\While{$A\overset{\Phi_\cH}{\nrightarrow}v$}\label{line4}
\State $T\coloneqq  \FC_{\Phi_\cH}(A)$\label{line5}
\For{\textbf{each} $u \in V\setminus T$}
\If{$u\in  \IS_\Psi(T+u)$}\label{line6}
\State $\cH\coloneqq  \cH\cup \{ \IS_\Psi(T+u)\}$\label{line7}
\EndIf
\EndFor
\If{$T=\FC_{\Phi_\cH}(A)$}\label{algo-line-1a}
\State \textbf{return} ``No"\label{line10}
\EndIf
\EndWhile
\EndFor
\State \textbf{return} $\cH$\label{line14}
\end{algorithmic}
\end{algorithm}

\begin{theorem}\label{t-Q1-solution}
For a definite Horn CNF $\Psi$, we can decide in $O(|V|^4{\cdot}|\Psi|{\cdot}\|\Psi\|)$ time if $\Psi$ represents a hypergraph Horn function, and if yes, construct a hypergraph $\cH\subseteq 2^V$ such that $h=\Phi_\cH$ and $|\cH|\leq |V|{\cdot}|\Psi|$. 
\end{theorem}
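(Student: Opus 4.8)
The plan is to verify four aspects of Algorithm~\ref{alg:Horn-detection}: a loop invariant, the correctness of both possible outputs, termination together with the size bound $|\cH|\le |V|\cdot|\Psi|$, and the running time. The conceptual engine throughout is the equivalence of~\ref{it:a} and~\ref{it:d} in Theorem~\ref{t-MAIN-hH}: the function represented by $\Psi$ is hypergraph Horn exactly when every false set $F$ admits an implicate set $I\in\cI(\Psi)$ with $|I\setminus F|=1$.

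First I would record the invariant $\Phi_\cH\ge\Psi$, valid throughout the execution. It holds because $\cH$ is only ever enlarged by sets $\IS_\Psi(T+u)$, which are cores and hence implicate sets of $\Psi$ by~\eqref{e-implicate-set-of-haa}; every clause of $\Phi_{\{\IS_\Psi(T+u)\}}$ is thus an implicate of $\Psi$, so adding it preserves $\Phi_\cH\ge\Psi$ (starting from $\Phi_\emptyset=1$). From this I would extract $\FC_{\Phi_\cH}(X)\subseteq\FC_\Psi(X)$ for all $X$. Consequently, whenever the while-condition $A\overset{\Phi_\cH}{\nrightarrow}v$ holds, the set $T=\FC_{\Phi_\cH}(A)$ is a false set of $\Psi$: we have $A\subseteq T\subseteq\FC_\Psi(A)$ while $v\in\FC_\Psi(A)\setminus T$, so $\FC_\Psi(T)\ni v\notin T$ and $T$ is not $\Psi$-closed.

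The heart of the argument is the progress step. Writing $J=\IS_\Psi(T+u)$ for a qualifying $u$ (one with $u\in\IS_\Psi(T+u)$), we have $u\in J$ and $J-u\subseteq T$, so the clause $(J-u)\to u$ of $\Phi_{\{J\}}$ fires during forward chaining from $T$, placing $u$ in the new closure; hence adding $J$ strictly enlarges $\FC_{\Phi_\cH}(A)$. This gives correctness of the ``\textsc{No}'' output through the contrapositive of~\ref{it:d}: if the step fails to enlarge the closure then no qualifying $u$ exists, which means no implicate set $I$ of $\Psi$ satisfies $|I\setminus T|=1$ (such an $I$ would lie inside $T+u$ and contain $u$, forcing $u\in\IS_\Psi(T+u)$); since $T$ is a false set, property~\ref{it:d} of Theorem~\ref{t-MAIN-hH} certifies that $\Psi$ is not hypergraph Horn. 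Conversely, if $\Psi$ is hypergraph Horn, then~\ref{it:d} supplies a qualifying $u$ at every false set $T$, so $\FC_{\Phi_\cH}(A)$ strictly grows each iteration until it reaches $v$; the algorithm never answers ``\textsc{No}'' and each clause's while loop terminates within $|V|$ iterations. At termination every clause $A\to v$ of $\Psi$ satisfies $v\in\FC_{\Phi_\cH}(A)$, which with the invariant $\Phi_\cH\ge\Psi$ yields $\Phi_\cH=\Psi$, confirming the ``yes'' output.

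For the bounds I would amortize the closure growth. Within one while-iteration the qualifying elements $u$ are distinct members of $V\setminus T$, and each contributes one hyperedge and, by the firing argument, one fresh element $u$ to $\FC_{\Phi_\cH}(A)$; since this closure is monotone across iterations and bounded by $|V|$, at most $|V|$ hyperedges are added per clause, giving $|\cH|\le |V|\cdot|\Psi|$. For the time bound, the total number of while-iterations is $O(|V|\cdot|\Psi|)$, and each is dominated by the $O(|V|)$ core computations $\IS_\Psi(T+u)$, each costing $O(|V|^2\|\Psi\|)$ by Theorem~\ref{l-IS-computation}; after checking that the forward-chaining closures are cheaper (since $\|\Phi_\cH\|=O(|V|^3|\Psi|)$), this yields $O(|V|^4\cdot|\Psi|\cdot\|\Psi\|)$ overall. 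The step I expect to be most delicate is the size bound: a naive count permits $O(|V|)$ hyperedges in each of $O(|V|)$ iterations per clause, a factor $|V|$ too many, and it is precisely the amortization charging each new hyperedge to a distinct fresh element of the growing closure that tightens this to $|V|\cdot|\Psi|$.
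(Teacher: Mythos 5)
Your proof is correct and follows essentially the same route as the paper: the same analysis of Algorithm~\ref{alg:Horn-detection} via the invariant $\Phi_\cH\geq\Psi$, the observation that $T$ is a false set, characterization~\ref{it:d} of Theorem~\ref{t-MAIN-hH} for both the ``\textsc{No}'' branch and progress, and Theorem~\ref{l-IS-computation} for the cost of core computations. Your amortized charging of each added hyperedge to a distinct fresh element of $\FC_{\Phi_\cH}(A)$ is a welcome elaboration of the bound $|\cH|\leq|V|{\cdot}|\Psi|$, which the paper asserts only tersely (``$\cH$ is updated at most $|V|{\cdot}|\Psi|$ times'') even though the inner for-loop may add several hyperedges per while-iteration.
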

\begin{proof}
We show that Algorithm~\ref{alg:Horn-detection} satisfies the statement of the theorem. The algorithm makes use of  Characterization \ref{it:d} of Theorem~\ref{t-MAIN-hH} and the efficient computation of the cores by Algorithm~\ref{alg:core}. 

Let us observe first that the algorithm  collects in $\cH$ implicate sets of $\Psi$ (see lines~\ref{line6}-\ref{line7}), and thus  $\Psi\leq \Phi$ always holds during the iterations of the algorithm. We also note that $T$ computed in line~\ref{line5} is a false set of $\Psi$, since $A\overset{\Psi}{\to} v$, $A \subseteq T$, and $v \not\in T$. Thus, if the condition in line~\ref{line6} holds for none of the elements $u \in V \setminus T$, then by Theorem~\ref{t-MAIN-hH} \ref{it:d}, $\Psi$ does not represent a hypergraph Horn function, and the algorithm halts correctly in line~\ref{line10}. 

On the other hand, if the condition in line~\ref{line6} holds, then we get an implicate set $I$ of $\Psi$, and by adding this implicate set to $\cH$ changes $\Phi_\cH$ such that the set $T$ recomputed on line~\ref{line5} strictly increases in size. Thus the while-loop will be executed for every clause of $\Psi$ at most $|V|$ times, and if the while-loop for clause $A\to v$ does not stop in line~\ref{line10}, then the clause becomes an implicate of $\Phi_\cH$ at the end. Consequently,  we have $\Psi=\Phi_\cH$, if the algorithm halts in line~\ref{line14}. Moreover, we have $|\cH|\leq |V|{\cdot}|\Psi|$ upon termination, since $\cH$ is updated at most $|V|{\cdot}|\Psi|$ times.  
	
As for the time complexity, we can see that the core $\IS_\Psi$ in line~\ref{line6} and the closure $\FC_{\Phi_\cH}$ in lines~\ref{line5} and~\ref{algo-line-1a} can be computed at  most $|V|^2{\cdot}|\Psi|$ and $2{\cdot}|V|{\cdot}|\Psi|$ times, respectively. Therefore, the algorithm requires $O(|V|^4{\cdot}|\Psi|{\cdot}\|\Psi\|)$ time by Theorem~\ref{l-IS-computation} and $\|\Phi\|\leq |V|^2{\cdot}|\Psi|$. 
\end{proof}

\subsection{Key realization}

In the key realization problem we are given a hypergraph $\cK \subseteq 2^V$, and the goal is to decide if $\cK$ can be realized as the set of minimal keys of a hypergraph Horn function $h$, i.e., $\cK=\cK(h)$. Note that the \emph{key set} (i.e., family of all minimal keys) of  a definite Horn function is Sperner and any Sperner hypergraph is realized as the key set of a definite Horn function. Thus we may assume that $\cK$ is Sperner. 

Let us first review basic properties of minimal keys and maximal nontrivial true sets of definite Horn functions. For a Sperner hypergraph $\cK\subseteq 2^V$, let  $\cK^+$ denote the family of supersets of the hyperedges of $\cK$, i.e., 
\[\cK^+=\{ W \subseteq V \mid W \supseteq K \mbox{ for some } K \in \cK\}, 
\]
and we denote by $\cM=\cK^{dc}$ the associated family of maximal independent sets of $\cK$. Recall that if $\cK=\cK(h)$ for a definite Horn function $h$, then we have $\cM=\cM(h)$ by Lemma~\ref{l-E-M(h)-K(h)}.

\begin{lemma}
\label{lemma-xop1}
A Sperner hypergraph $\cK$ is realized as a key set of a definite Horn function $h$ if and only if 
\begin{enumerate}[label={\rm (\roman*)}]\itemsep0em
\item \label{xop1:i} $\cM \subseteq \cT(h)$, and 
\item \label{xop1:ii} $\cK^+\setminus \{V\}  \subseteq \cF(h)$. 
\end{enumerate}
\end{lemma}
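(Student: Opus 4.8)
The plan is to translate the statement about key sets into an equivalent statement about maximal nontrivial true sets, and to prove that reformulation directly. Writing $\cM = \cK^{dc}$ as in the text, the Sperner double-dual identity $\cK = \cK^{dccd} = \cM^{cd}$, combined with Lemma~\ref{l-E-M(h)-K(h)} (which gives $\cK(h) = \cM(h)^{cd}$ and $\cM(h) = \cK(h)^{dc}$), shows that $\cK = \cK(h)$ holds if and only if $\cM = \cM(h)$. So it will suffice to prove that $\cM = \cM(h)$ is equivalent to conditions \ref{xop1:i} and \ref{xop1:ii}. Throughout I would use the standard facts that a set is a key exactly when its closure is $V$, equivalently when it is contained in no nontrivial true set, and that a proper superset of a maximal independent set of $\cK$ must contain some hyperedge of $\cK$. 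I would also dispose at the outset of the degenerate cases ($\cK=\emptyset$, which is not realizable since $V$ is always a key, and $\cK=\{\emptyset\}$) by a one-line check, so that in the main argument every member of $\cM$ may be assumed nontrivial.

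For the only-if direction, I assume $\cK = \cK(h)$, equivalently $\cM = \cM(h)$. Then \ref{xop1:i} is immediate, since every maximal nontrivial true set is a true set, whence $\cM = \cM(h) \subseteq \cT(h)$. For \ref{xop1:ii}, any $W \in \cK^+ \setminus \{V\}$ contains a minimal key $K \in \cK = \cK(h)$; by monotonicity of the closure, $\FC_h(W) \supseteq \FC_h(K) = V$, so $\FC_h(W) = V \neq W$, meaning $W$ is not closed and therefore $W \in \cF(h)$.

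For the if direction, I assume \ref{xop1:i} and \ref{xop1:ii} and establish $\cM = \cM(h)$ by two inclusions. For $\cM \subseteq \cM(h)$, take $M \in \cM$: by \ref{xop1:i} it is a true set, and it is nontrivial (being independent, $M$ cannot equal $V$, which contains every hyperedge of $\cK$). To see it is maximal among nontrivial true sets, observe that any true set $T \supsetneq M$, as a proper superset of a maximal independent set, contains some $K \in \cK$, so $T \in \cK^+$; if $T \neq V$ then \ref{xop1:ii} would force $T \in \cF(h)$, contradicting $T \in \cT(h)$, hence $T = V$. Thus $M \in \cM(h)$. For the reverse inclusion $\cM(h) \subseteq \cM$, take $M' \in \cM(h)$: it is independent in $\cK$, for otherwise $M' \supseteq K$ for some $K \in \cK$ and, being nontrivial, $M' \in \cK^+ \setminus \{V\}$, so \ref{xop1:ii} gives $M' \in \cF(h)$, contradicting that $M'$ is a true set. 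Being independent, $M'$ extends to some maximal independent set $M'' \in \cM$, which by the first inclusion lies in $\cM(h)$; since $\cM(h)$ is Sperner and $M' \subseteq M''$ are both its members, I conclude $M' = M'' \in \cM$.

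The main obstacle I anticipate is precisely this last inclusion $\cM(h) \subseteq \cM$: showing that an abstractly given maximal nontrivial true set of $h$ is in fact a maximal independent set of the prescribed $\cK$. The argument has to use both hypotheses simultaneously — \ref{xop1:ii} to rule out that $M'$ swallows a hyperedge of $\cK$, and then \ref{xop1:i} (through the already-established $\cM \subseteq \cM(h)$) together with the Sperner property of $\cM(h)$ to upgrade ``independent'' to ``maximal independent''. Everything else reduces to monotonicity of $\FC_h$ and the routine duality bookkeeping encapsulated in Lemma~\ref{l-E-M(h)-K(h)} and the Sperner involution.
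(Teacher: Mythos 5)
Your proof is correct and takes essentially the same route as the paper, whose entire proof is the one-line remark that the statement follows from the definitions of $\cK(h)$ and $\cM(h)$ and the relation $\cM(h)=\cK(h)^{dc}$ of Lemma~\ref{l-E-M(h)-K(h)}; you have simply spelled out that reduction (via the Sperner involution $\cK=\cM^{cd}$) and verified the two inclusions $\cM\subseteq\cM(h)$ and $\cM(h)\subseteq\cM$ in detail. Your explicit dispatch of the degenerate cases $\cK=\emptyset$ and $\cK=\{\emptyset\}$ is a careful touch that the paper's terse proof glosses over.
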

\begin{proof}
The statement follows from the definitions of $\cK(h)$ and $\cM(h)$ and their relation $\cM(h) = \cK(h)^{dc}$ by Lemma~\ref{l-E-M(h)-K(h)}.
\end{proof}

We restate this lemma for the key realization problem in more constructive terms. We call a subset $I\subseteq V$ a \emph{potential implicate set} for a hypergraph $\cK$ if $\cM\subseteq \cT(\Phi_{\{ I\}})$. By Lemma~\ref{lemma-xop1}\ref{xop1:i},  any implicate set of a definite Horn function $h$ with $\cK(h)=\cK$ must be a potential implicate set, since $\cM=\cK(h)^{dc}=\cM(h)\subseteq \cT(h)$. Let $\cP(\cK) \subseteq 2^V$ denote the \emph{family of all potential implicate sets} for a Sperner hypergraph $\cK$. As $\cT(\Phi_I)\cup\cT(\Phi_J)\subseteq \cT(\Phi_{I\cup J})$ holds for any pair $I,J\subseteq V$ of subsets, $\cP(\cK)$ is closed under taking union.

\begin{lemma}
\label{lemma-xop2}
A Sperner hypergraph $\cK$ is realized as a key set of a hypergraph Horn function $h$ if and only if 
\begin{enumerate}[label={\rm (\roman*)}]\itemsep0em
\item $\cI(h) \subseteq \cP(\cK)$, and 
\item for any $F \in \cK^+\setminus \{V\}$ there exists $I\in \cI(h)$ with $|I \setminus F|=1$. 
\end{enumerate}
\end{lemma}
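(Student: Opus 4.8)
The plan is to derive this characterization directly from Lemma~\ref{lemma-xop1} by rewriting its two conditions in terms of implicate sets, using the symmetric description of true and false sets of hypergraph Horn functions from Corollary~\ref{c-imp-true}. Since a hypergraph Horn function is in particular definite Horn, Lemma~\ref{lemma-xop1} tells us that $\cK$ is the key set of $h$ if and only if $\cM\subseteq\cT(h)$ and $\cK^+\setminus\{V\}\subseteq\cF(h)$. It therefore suffices to show, for hypergraph Horn $h$, that $\cM\subseteq\cT(h)$ is equivalent to (i) and that $\cK^+\setminus\{V\}\subseteq\cF(h)$ is equivalent to (ii).

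For (ii) I would simply apply \eqref{e2-true-set-of-h} of Corollary~\ref{c-imp-true}: a set $F$ is a false set of $h$ exactly when some $I\in\cI(h)$ satisfies $|I\setminus F|=1$. Quantifying this over all $F\in\cK^+\setminus\{V\}$ gives precisely condition (ii). For (i) I would unfold both inclusions into the same symmetric statement. Again by \eqref{e2-true-set-of-h}, the inclusion $\cM\subseteq\cT(h)$ says that $|I\setminus M|\neq 1$ for all $I\in\cI(h)$ and all $M\in\cM$. On the other hand, the single-hyperedge circular CNF $\Phi_{\{I\}}$ has as true sets exactly the sets $T$ with $|I\setminus T|\neq 1$, so by the definition of a potential implicate set, $I\in\cP(\cK)$ holds iff $|I\setminus M|\neq 1$ for every $M\in\cM$; quantifying over $I\in\cI(h)$ shows that $\cI(h)\subseteq\cP(\cK)$ is the very same condition. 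Hence $\cM\subseteq\cT(h)$ is equivalent to (i).

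No step here is genuinely difficult once the earlier machinery is in place; the argument is essentially a translation between the true/false-set formulation of Lemma~\ref{lemma-xop1} and the implicate-set formulation. The only point needing care---and the single place where the hypergraph Horn hypothesis is actually used---is verifying that $\cT(\Phi_{\{I\}})=\{T\mid |I\setminus T|\neq 1\}$, so that the definition of a potential implicate set aligns exactly with the characterization of $\cT(h)$ in \eqref{e2-true-set-of-h}. It is precisely this matching of the two ``$|I\setminus\cdot|\neq 1$'' conditions that makes (i) and (ii) mirror the two conditions of Lemma~\ref{lemma-xop1}.
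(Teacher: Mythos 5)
Your proof is correct and follows essentially the same route as the paper, whose own proof is just the one-line observation that the statement follows from \eqref{e-true-set-of-h} and Lemma~\ref{lemma-xop1}; you have simply spelled out that translation. One small slip in your commentary: the identity $\cT(\Phi_{\{I\}})=\{T\mid |I\setminus T|\neq 1\}$ is pure definition-chasing valid for every set $I$, whereas the hypergraph Horn hypothesis is genuinely used where you apply \eqref{e2-true-set-of-h} to $h$ itself, since the direction ``$F$ false $\Rightarrow$ some $I\in\cI(h)$ with $|I\setminus F|=1$'' fails for general definite Horn functions.
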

\begin{proof}
The statement follows from ~\eqref{e-true-set-of-h} and Lemma~\ref{lemma-xop1}.
\end{proof}

Clearly, having more implicate sets makes it easier to satisfy Lemma~\ref{lemma-xop2} (ii), which implies the following lemma. 

\begin{lemma}\label{l-key-Horn-iff}
A Sperner hypergraph $\cK$ is realized as the  key set of a definite Horn function  if and only if the hypergraph Horn function $h=\Phi_{\cP(\cK)}$ realizes $\cK$ as its key set. Furthermore, $\Phi_{\cP(\cK)}$ is the unique maximal circular minorant of the hypergraph Horn functions that have $\cK$ as their key sets, i.e., any hypergraph Horn function $f$ such that $\cK=\cK(f)$ satisfies $f \geq \Phi_{\cP(\cK)}$. 
\end{lemma}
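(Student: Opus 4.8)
The plan is to deduce both assertions from Lemma~\ref{lemma-xop2} together with the order-reversing correspondence between a circular function and its family of implicate sets. The one fact that drives everything is the identity
\begin{equation*}
\cI(\Phi_{\cP(\cK)}) ~=~ \cP(\cK),
\end{equation*}
i.e.\ the implicate sets of $\Phi_{\cP(\cK)}$ are exactly the potential implicate sets for $\cK$. The inclusion $\cP(\cK)\subseteq\cI(\Phi_{\cP(\cK)})$ is immediate, since every hyperedge of the hypergraph defining a circular CNF is an implicate set of that CNF. For the reverse inclusion I would first note that $\cM\subseteq\cT(\Phi_{\{I\}})$ for each $I\in\cP(\cK)$ by definition of a potential implicate set, and that $\Phi_{\cP(\cK)}$ is the conjunction of the $\Phi_{\{I\}}$, so its true sets are the intersection of the families $\cT(\Phi_{\{I\}})$; hence $\cM\subseteq\cT(\Phi_{\cP(\cK)})$. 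Then for any $J\in\cI(\Phi_{\cP(\cK)})$, Equation~\eqref{e-implicate-set-of-h} (which holds for arbitrary Boolean functions) forbids a true set $S$ with $|J\setminus S|=1$; applying this to the members $M\in\cM\subseteq\cT(\Phi_{\cP(\cK)})$ gives $|J\setminus M|\neq 1$ for all $M\in\cM$, that is $J\in\cP(\cK)$.

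For the ``furthermore'' part I would record the order-reversing correspondence: for hypergraph Horn functions $f_1,f_2$ one has $f_1\le f_2$ if and only if $\cI(f_1)\supseteq\cI(f_2)$. The forward direction holds for all functions (a majorant of $f_2$ is a majorant of $f_1$), and the backward direction uses $f=\Phi_{\cI(f)}$ for hypergraph Horn $f$ together with the observation that enlarging the defining hypergraph only adds clauses and hence can only shrink the represented function. Now let $f$ be any hypergraph Horn function with $\cK(f)=\cK$. By Lemma~\ref{l-E-M(h)-K(h)} we have $\cM(f)=\cK(f)^{dc}=\cK^{dc}=\cM$, so $\cM\subseteq\cT(f)$; the argument of the previous paragraph then yields $\cI(f)\subseteq\cP(\cK)=\cI(\Phi_{\cP(\cK)})$, whence $f\ge\Phi_{\cP(\cK)}$ by the correspondence. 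As $\Phi_{\cP(\cK)}$ is itself circular, it is the unique maximal circular minorant among the functions in question.

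For the equivalence itself, the backward direction is immediate: $\Phi_{\cP(\cK)}$ is hypergraph Horn (in particular definite Horn), so if it realizes $\cK$ then $\cK$ is realized as claimed. For the forward direction I would make the monotonicity of Lemma~\ref{lemma-xop2}(ii) explicit. Suppose $\cK$ is realized as a key set, say $\cK=\cK(f)$ for a hypergraph Horn function $f$. By Lemma~\ref{lemma-xop2} we have $\cI(f)\subseteq\cP(\cK)$, and for every $F\in\cK^+\setminus\{V\}$ there is some $I\in\cI(f)$ with $|I\setminus F|=1$. Since $\cI(f)\subseteq\cP(\cK)=\cI(\Phi_{\cP(\cK)})$, the very same sets $I$ serve as witnesses for $\Phi_{\cP(\cK)}$, so condition (ii) of Lemma~\ref{lemma-xop2} holds for $\Phi_{\cP(\cK)}$, while condition (i) holds with equality. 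Lemma~\ref{lemma-xop2} then shows that $\Phi_{\cP(\cK)}$ realizes $\cK$.

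The step I expect to be the main obstacle is the identity $\cI(\Phi_{\cP(\cK)})=\cP(\cK)$: the easy inclusion is purely formal, but the reverse inclusion is exactly where one must use that $\cM$ consists of true sets of $\Phi_{\cP(\cK)}$ and translate ``being an implicate set'' into the forbidden-difference form of~\eqref{e-implicate-set-of-h}. Once this identity and the order-reversing correspondence are established, both halves of the lemma reduce to routine bookkeeping with Lemma~\ref{lemma-xop2}, the point being that $\Phi_{\cP(\cK)}$ has the largest possible implicate-set family $\cP(\cK)$ and hence is the most favourable candidate for satisfying condition (ii).
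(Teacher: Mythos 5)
Your proposal is correct and follows essentially the same route as the paper: both arguments rest on the fact that $\cP(\cK)$ is the largest implicate-set family compatible with $\cM\subseteq\cT(\cdot)$, so that $\Phi_{\cP(\cK)}$ is a minorant of every hypergraph Horn function realizing $\cK$ (your order-reversing correspondence is exactly the paper's ``$h\leq f$ for any $f$ satisfying Lemma~\ref{lemma-xop1}\ref{xop1:i}''), after which the realization conditions transfer to $\Phi_{\cP(\cK)}$. The only difference is one of bookkeeping: you make the identity $\cI(\Phi_{\cP(\cK)})=\cP(\cK)$ explicit and transfer condition (ii) via the witness sets of Lemma~\ref{lemma-xop2}, whereas the paper transfers condition \ref{xop1:ii} of Lemma~\ref{lemma-xop1} directly through the pointwise inequality $\Phi_{\cP(\cK)}\le f$ (which forces $\cF(f)\subseteq\cF(\Phi_{\cP(\cK)})$); both are valid, and like the paper's own proof you correctly read the first ``definite Horn'' in the statement as ``hypergraph Horn''.
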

\begin{proof}
Let $h$ be a hypergraph Horn function represented by $\Phi_{\cP(\cK)}$. Then $h$ satisfies \ref{xop1:i} of Lemma~\ref{lemma-xop1} by the definition of potential implicate sets.  Furthermore, we have  $h \leq f$ for any hypergraph Horn function $f$ satisfying Lemma~\ref{lemma-xop1}\ref{xop1:i}. This together with Lemma~\ref{lemma-xop1}\ref{xop1:ii} completes the proof. 
\end{proof}

In order to compute a hypergraph Horn function that realizes $\cK$ as its key set, let us first consider potential implicate sets. The following characterization shows that it can be checked in polynomial time whether a given a subset is in $\cP(\cK)$ or not. 
	
\begin{lemma}\label{l-pimp-iff}
A subset $I \subseteq V$ is a potential implicate set for a hypergraph $\cK$ if and only if 
\begin{equation}\label{e-pimp}
\text{for every}\ K\in\cK\ \text{and}\ u\in K\cap I\ \text{there exists}\ K'\in\cK \text{ such that } K'\subseteq (K\cup I) - u.
\end{equation}
\end{lemma}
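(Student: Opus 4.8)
The plan is to reduce the statement to a criterion about the maximal independent sets of $\cK$ and then translate that criterion into \eqref{e-pimp}. The starting point is the elementary description of the true sets of a single-hyperedge circular CNF: for any $T\subseteq V$ one has
\[
T\in\cT(\Phi_{\{I\}}) \iff |I\setminus T|\neq 1 .
\]
This is immediate from $\Phi_{\{I\}}=\bigwedge_{v\in I}\big((I-v)\to v\big)$, since the clause $(I-v)\to v$ fails at $T$ exactly when $I\setminus T=\{v\}$, whereas it holds trivially when $I\subseteq T$ and vacuously when $|I\setminus T|\geq 2$; this is just \eqref{e-true-set-of-h} specialized to one hyperedge, but it is short enough to verify directly. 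Because $\cM=\cK^{dc}$ is precisely the family of maximal independent sets of $\cK$, the defining condition $\cM\subseteq\cT(\Phi_{\{I\}})$ of a potential implicate set rewrites as: $I$ is a potential implicate set if and only if no maximal independent set $M$ of $\cK$ satisfies $|I\setminus M|=1$.

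With this reformulation in hand, I would prove the two directions of the equivalence with \eqref{e-pimp}. For the contrapositive of one direction, suppose \eqref{e-pimp} fails, so there are $K\in\cK$ and $u\in K\cap I$ with $K'\not\subseteq (K\cup I)-u$ for every $K'\in\cK$; equivalently, $(K\cup I)-u$ is independent in $\cK$. I would extend it to a maximal independent set $M\supseteq (K\cup I)-u$ and check that $u\notin M$ (otherwise $K=(K-u)+u\subseteq M$ would make $M$ dependent) while $I-u\subseteq M$, so that $I\setminus M=\{u\}$; hence $I$ violates the reformulated criterion and is not a potential implicate set. Conversely, suppose $I$ is not a potential implicate set and pick a maximal independent set $M$ with $I\setminus M=\{u\}$. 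Maximality forces $M+u$ to contain some hyperedge $K\in\cK$, and independence of $M$ forces $u\in K$, so $u\in K\cap I$. Then $K-u\subseteq M$ and $I-u\subseteq M$ give $(K\cup I)-u\subseteq M$, which is independent, so no $K'\in\cK$ fits inside $(K\cup I)-u$; thus \eqref{e-pimp} fails for this $K$ and $u$.

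I expect the only delicate point to be the bookkeeping in the first direction: one must confirm that the maximal independent set extending $(K\cup I)-u$ genuinely omits $u$ and contains all of $I-u$, so that it witnesses $|I\setminus M|=1$ rather than a larger difference. This hinges on the single observation that reinstating $u$ recreates the hyperedge $K$, contradicting independence. Everything else is a routine unwinding of the definitions of independence, maximality, and the identity $\cM=\cK^{dc}$.
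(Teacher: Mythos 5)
Your proof is correct and follows essentially the same route as the paper's: both reduce the definition via $\cM=\cK^{dc}$ to the criterion that no maximal independent set $M$ satisfies $|I\setminus M|=1$, and both directions proceed by the same contrapositive arguments (extending the independent set $(K\cup I)-u$ to a maximal independent set, and conversely extracting a hyperedge $K\subseteq M+u$ with $u\in K$). The only difference is that you make explicit the bookkeeping the paper leaves implicit, namely verifying $u\notin M$ and $I-u\subseteq M$ so that $I\setminus M=\{u\}$ exactly.
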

\begin{proof}
We first assume that $I$ is not  a potential implicate set for $\cK$, i.e., $\Phi_{\{ I\}}(S)=0$ for some $S \in \cM$. Then there exists a variable $u$ such that $I \setminus S=\{u\}$. By the definition of $\cM=\cK^{dc}$, there exists a set $K \in \cK$ such that $K - u \subseteq S$ and $u \in K$. For these $I$, $u$, and $K$, we have $(K \cup I)-u \subseteq S$, which implies that no $K' \in \cK$ satisfies the inclusion in  (\ref{e-pimp}) since $S\in\cK^{dc}$.  

To show the other direction, let us assume that no set in $\cK$ is contained in $(K \cup I )-u$ for some $K \in \cK$ and $u \in K \cap I$. Then by $\cM=\cK^{dc}$, some $S \in \cM$ contains $ (K \cup I )-u$. This implies  that $\Phi_{\{ I\}}(S)=0$, since $I \setminus S=\{u\}$, and thus the claim follows. \end{proof}

By definition, condition \eqref{e-pimp} only deals with sets $K \in \cK$ with $K \cap I\not=\emptyset$. We note that condition \eqref{e-pimp} can be checked in $O(|I|+\|\cK\|^2)$ time for a given set $I\subseteq V$, where $\|\cK\|=\sum_{K\in \cK}|K|$.  

\begin{lemma}\label{unique-max-pis}
Let $\cK$ be a Sperner hypergraph. For any set $S \subseteq V$, there exists a unique maximal potential implicate set contained in $S$. 
\end{lemma}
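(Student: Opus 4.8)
The plan is to deduce the statement directly from the union-closedness of $\cP(\cK)$ established in the paragraph preceding Lemma~\ref{lemma-xop2}. First I would recall that $\cP(\cK)$ is closed under taking union. Consequently the subfamily $\cP_S=\{I\in\cP(\cK)\mid I\subseteq S\}$ of potential implicate sets contained in $S$ is itself closed under union: the union of two subsets of $S$ is again a subset of $S$, and it remains a potential implicate set by the closedness of $\cP(\cK)$. Since $V$ is finite, this closure under pairwise union extends to arbitrary unions of members of $\cP_S$.

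Next I would verify that $\cP_S$ is nonempty. The empty set is a potential implicate set, because $\Phi_{\{\emptyset\}}$ is the empty conjunction, i.e.\ the constant-$1$ function, and hence $\cM\subseteq\cT(\Phi_{\{\emptyset\}})=2^V$. As $\emptyset\subseteq S$, we have $\emptyset\in\cP_S$, so the family is nonempty.

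Finally, I would set $I^\ast=\bigcup_{I\in\cP_S}I$. Because $\cP_S$ is a nonempty, finite, union-closed family, $I^\ast\in\cP_S$; and by construction $I^\ast$ contains every member of $\cP_S$. Hence $I^\ast$ is a potential implicate set contained in $S$ that in fact contains every other such set, so it is the unique maximal one, as claimed.

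There is no genuine obstacle here: the entire content of the lemma is carried by the union-closedness of $\cP(\cK)$, and the only points requiring a line of care are the nonemptiness of $\cP_S$ and the passage from pairwise to arbitrary unions, both of which are immediate once we use that $V$ is finite. (Alternatively, one could argue the same conclusion from the explicit characterization of $\cP(\cK)$ in Lemma~\ref{l-pimp-iff}, but the union-closedness route is shorter and self-contained.)
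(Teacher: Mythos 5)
Your proof is correct and follows exactly the paper's argument: the paper likewise derives the lemma from the nonemptiness of the family (via $\emptyset\in\cP(\cK)$) together with the union-closedness of $\cP(\cK)$ noted just before Lemma~\ref{lemma-xop2}. You merely spell out the details (restriction to subsets of $S$, finiteness, taking the union of all members) that the paper's one-line proof leaves implicit.
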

\begin{proof}
Since $\emptyset$ is a potential implicate set for $\cK$, the lemma follows by $\cP(\cK)$ being closed under taking union.
\end{proof}

Let us now describe an algorithm for computing such  potential implicate sets. For a subset $S \subseteq V$, let $\PP_\cK(S)$ denote the maximal potential implicate set for $\cK$ included in $S$, and define
\[\PP^1_\cK(S) ~=~ \{u\in S\mid (S+K)-u \in \cK^+ \mbox{ for all } K \in \cK \mbox{ with } u\in K\}.  \]
By definition, a subset $I\subseteq V$ is a potential implicate set for $\cK$ if and only if $I=\PP_\cK(I)=\PP^1_\cK(I)$. 

\begin{algorithm}[ht!]
\caption{\textsc{Potential Implicate Set Computation}\label{alg:pis}}
\begin{algorithmic}[1]
\Statex \textbf{Input}: A Sperner hypergraph $\cK\subseteq 2^V$ and a set $S \subseteq V$.
\Statex \textbf{Output}: A unique maximal potential implicate set for $\cK$  contained in $S$.
\State \textbf{Initialize}: $\PP^0_\cK(S)\coloneqq S$
\For{$k=1,2, \dots , |S|$}
\State Compute $\PP^k_\cK(S)=\PP^1_\cK(\PP^{k-1}_\cK(S))$ 
\EndFor
\State \textbf{return} $\PP_\cK(S)=\PP^{|S|}_\cK(S)$
\end{algorithmic}
\end{algorithm}

\begin{lemma}\label{um-pis-computation}
For a Sperner hypergraph $\cK \subseteq 2^V$ and a set $S \subseteq V$, the unique maximal potential implicate set for $\cK$ contained in $S$ can be obtained in $O(|S|{\cdot}\|\cK\|^2)$ time. 
\end{lemma}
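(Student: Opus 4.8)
The plan is to analyze Algorithm~\ref{alg:pis} by first establishing correctness and then bounding the running time. Correctness hinges on two facts about the operator $\PP^1_\cK$. First, I would verify that $\PP^1_\cK(S)\subseteq S$ always holds, and that $\PP^1_\cK(S)=S$ if and only if $S$ is itself a potential implicate set (this is exactly the characterization recorded just before the algorithm, using Lemma~\ref{l-pimp-iff}). Second, I would show the operator is \emph{monotone in the sense needed here}: if $I\subseteq S$ is a potential implicate set for $\cK$, then $I\subseteq \PP^1_\cK(S)$. For this, take any $u\in I$ and any $K\in\cK$ with $u\in K$; since $I$ is a potential implicate set, by Lemma~\ref{l-pimp-iff} there is some $K'\in\cK$ with $K'\subseteq (K\cup I)-u\subseteq (K\cup S)-u$, so $(S+K)-u\in\cK^+$, giving $u\in\PP^1_\cK(S)$. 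This containment is the crucial step, as it guarantees that no element of a genuine potential implicate set is ever discarded.

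With these two facts in hand, correctness follows by the same telescoping argument used for Algorithm~\ref{alg:core} in Theorem~\ref{l-IS-computation}. The sets $S=\PP^0_\cK(S)\supseteq\PP^1_\cK(S)\supseteq\PP^2_\cK(S)\supseteq\cdots$ form a decreasing chain, so after at most $|S|$ iterations the chain stabilizes; the stable set $I^*$ satisfies $\PP^1_\cK(I^*)=I^*$ and is therefore a potential implicate set. The monotonicity fact shows that every potential implicate set contained in $S$ is contained in each $\PP^k_\cK(S)$, hence in $I^*$, so $I^*$ is the unique maximal one guaranteed by Lemma~\ref{unique-max-pis}. Thus the algorithm returns $\PP_\cK(S)$ correctly.

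For the running time, I would bound the cost of a single application of $\PP^1_\cK$ and then multiply by the number of iterations. Computing $\PP^1_\cK(X)$ requires, for each candidate $u\in X$, testing whether $(X+K)-u\in\cK^+$ for every $K\in\cK$ containing $u$. Checking membership in $\cK^+$ amounts to testing whether some hyperedge of $\cK$ is contained in a given set, and the remark following Lemma~\ref{l-pimp-iff} records that the relevant condition~\eqref{e-pimp} can be verified in $O(|I|+\|\cK\|^2)$ time. Aggregating over all $u$ and all relevant $K$, a single pass costs $O(\|\cK\|^2)$, and with at most $|S|$ passes the total is $O(|S|{\cdot}\|\cK\|^2)$, as claimed.

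The main obstacle I anticipate is not the telescoping argument, which is routine once the operator's properties are isolated, but rather pinning down the running time bound cleanly: one must be careful that the per-iteration cost is genuinely $O(\|\cK\|^2)$ rather than something larger, since naively each element $u$ could trigger a membership test against many hyperedges $K$. The efficient bound relies on reusing the membership-in-$\cK^+$ computation across the test for different $u$, exactly as quantified in the remark after Lemma~\ref{l-pimp-iff}, so the careful accounting there is what makes the stated complexity go through.
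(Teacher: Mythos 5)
Your proposal is correct and follows essentially the same route as the paper: a fixed-point iteration argument for correctness, followed by a per-pass cost bound multiplied by the number of iterations. In fact your correctness half is more explicit than the paper's, which merely says ``similarly to the operator $\IS$'' where you spell out, via Lemma~\ref{l-pimp-iff}, the monotonicity fact that a potential implicate set $I\subseteq S$ survives each application of $\PP^1_\cK$; that is exactly the right justification.

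One detail of the running-time accounting diverges from the paper and is not quite right as you state it. You claim a single evaluation of $\PP^1_\cK$ costs $O(\|\cK\|^2)$ and attribute this to ``reusing the membership-in-$\cK^+$ computation across different $u$.'' But for the algorithm as written, a pass must at least scan every $u\in X$, costing $\Omega(|X|)$, and the elements of $S\setminus\bigl(\bigcup_{K\in\cK}K\bigr)$ --- which trivially survive every pass, since the defining condition is vacuous for them --- can make $|X|$ far larger than $\|\cK\|^2$; over up to $|S|$ iterations this threatens an $O(|S|^2)$ term not covered by the claimed bound. The paper's proof handles precisely this point by setting $A=S\setminus\bigl(\bigcup_{K\in\cK}K\bigr)$, running the iteration only on $B=S\setminus A$ (so each pass involves at most $\|\cK\|$ elements, hence $O(\|\cK\|)$ pairs $(u,K)$, each testable against $\cK^+$ in $O(\|\cK\|)$ time), and returning $\PP^{|B|}_\cK(B)\cup A$ at the end; this preprocessing, not reuse of membership tests, is the mechanism that makes each pass genuinely $O(\|\cK\|^2)$. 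Your correctness argument is untouched by this and the fix is one line, but the ``reuse'' justification alone does not deliver the stated complexity.
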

\begin{proof}
Similarly to the operator $\IS$, we have $\PP^1_\cK(S)\subseteq S$, where equality holds if and only if $S$ is a potential implicate set for $\cK$. Furthermore, $\PP^k_\cK(S)=\PP^{k-1}_\cK(S)$ implies that $\PP^\ell_\cK(S)=\PP^{k-1}_\cK(S)$ for all $\ell \geq k-1$. Since $\emptyset$ is a potential implicate set, it follows from Lemma~\ref{unique-max-pis} that Algorithm~\ref{alg:pis} outputs the unique maximal potential implicate set for $\cK$ contained in $S$.  

As for the time complexity, let $A=S\setminus (\bigcup_{K \in \cK}K)$ and $B=S-A$, and consider the slightly different version of Algorithm~\ref{alg:pis} which is obtained by first replacing $S$ by $B$ and then modifying the output by $\PP^{|B|}_\cK(B) \cup A$. Then it is not difficult to see that this modification still computes a desired potential implicate set and each for-loop can be executed in $O(\|\cK\|^2)$ time. Therefore, in total, the algorithm requires $O(|S|{\cdot}\|\cK\|^2)$ time, which completes the proof.  
\end{proof}

By Lemma~\ref{l-key-Horn-iff}, the key realization problem can be solved by generating all potential implicate sets. However, in general, $|\cP(\cK)|$ may be exponentially larger than $|\cK|$ and $|V|$. Algorithm~\ref{alg:key} has similar characteristics to the hypergraph Horn recognition algorithm, and constructs only polynomially many potential implicate sets for $\cK$ that satisfy Lemma~\ref{lemma-xop1}\ref{xop1:ii}. 

\begin{algorithm}[ht!]
\caption{\textsc{Key Realization}\label{alg:key}}
\begin{algorithmic}[1]
\Statex \textbf{Input}: A Sperner hypergraph $\cK \subseteq 2^V$.
\Statex \textbf{Output}: A hypergraph $\cH\subseteq 2^V$ with $\cK(\Phi_\cH)=\cK$ if $\cK$ is realizable as the key set of a hypergraph Horn function, and ``\textsc{No}" otherwise.
\State \textbf{Initialize}: $\cH\coloneqq \emptyset$
\For{\textbf{each}  $J$ in  $\cK$}
\While{$\FC_{\Phi_\cH}(J)\not=V$}\label{line4---a}
\State $T\coloneqq  \FC_{\Phi_\cH}(J)$
\For{\textbf{each} $v \in V \setminus T$}
\If{$v\in  \PP_\cK(T+v)$}\label{line6q}
\State $\cH\coloneqq  \cH\cup \{ \PP_\cK(T+v)\}$\label{line7q}
\EndIf
\EndFor
\If{$T=\FC_{\Phi_\cH}(J)$}
\State \textbf{return} ``\textsc{No}"\label{line10q}
\EndIf
\EndWhile
\EndFor
\State \textbf{return} $\cH$\label{line14q}
\end{algorithmic}
\end{algorithm}

\begin{theorem}\label{t-comp-keyrealization}
For a Sperner hypergraph $\cK \subseteq 2^V$, we can decide in $O(|V|^3{\cdot}|\cK|{\cdot}\|\cK\|^2)$ time if $\cK$ is realizable as the key set of a hypergraph Horn function, and if yes, construct a hypergraph $\cH\subseteq 2^V$ such that $\cK=\cK(\Phi_\cH)$ and $|\cH|\leq |V|{\cdot}|\cK|$. 
\end{theorem}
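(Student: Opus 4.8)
The plan is to show that Algorithm~\ref{alg:key} meets the three requirements of the statement: it decides realizability correctly, it outputs a hypergraph of size at most $|V|\cdot|\cK|$, and it runs in the stated time. The whole argument will run parallel to the proof of Theorem~\ref{t-Q1-solution}, with the core operator $\IS_\Psi$ replaced by the potential-implicate-set operator $\PP_\cK$ of Algorithm~\ref{alg:pis}, and Characterization~\ref{it:d} of Theorem~\ref{t-MAIN-hH} replaced by the realizability criteria of Lemmas~\ref{lemma-xop1} and~\ref{lemma-xop2}.

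First I would record the invariant that every hyperedge placed into $\cH$ at line~\ref{line7q} is a potential implicate set for $\cK$, so that $\cH\subseteq\cP(\cK)$ holds throughout the run. Since each potential implicate set $I$ satisfies $\cM\subseteq\cT(\Phi_{\{I\}})$ and $\cT(\Phi_\cH)=\bigcap_{I\in\cH}\cT(\Phi_{\{I\}})$, condition~\ref{xop1:i} of Lemma~\ref{lemma-xop1} holds for $\Phi_\cH$ automatically. Next I would treat the case in which the algorithm returns $\cH$ at line~\ref{line14q}. Then every while-loop terminated with $\FC_{\Phi_\cH}(J)=V$; as forward chaining is monotone in the (only growing) hypergraph $\cH$, this persists for the final $\cH$, so every $J\in\cK$ is a key. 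Consequently any $W\in\cK^+\setminus\{V\}$ contains some $J\in\cK$, whence $\FC_{\Phi_\cH}(W)\supseteq\FC_{\Phi_\cH}(J)=V$, so $\FC_{\Phi_\cH}(W)=V\neq W$ and $W$ is a false set, giving condition~\ref{xop1:ii}. Lemma~\ref{lemma-xop1} then yields $\cK(\Phi_\cH)=\cK$, and $\Phi_\cH$ is hypergraph Horn by definition.

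Then I would justify the negative answer and, simultaneously, the progress made in the feasible case. The algorithm reaches line~\ref{line10q} only when $T=\FC_{\Phi_\cH}(J)\neq V$ and the for-loop enlarged nothing, i.e.\ $v\notin\PP_\cK(T+v)$ for every $v\in V\setminus T$; by the maximality in Lemma~\ref{unique-max-pis} this is exactly the statement that no potential implicate set $I$ has $|I\setminus T|=1$. Since $T\supseteq J\in\cK$ we have $T\in\cK^+\setminus\{V\}$, so if some hypergraph Horn $h$ realized $\cK$, Lemma~\ref{lemma-xop2} would supply $I\in\cI(h)\subseteq\cP(\cK)$ with $|I\setminus T|=1$, a contradiction; hence ``\textsc{No}'' is correct. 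Conversely, whenever $T\neq V$ and such an $I$ exists for the element $v$ with $I\setminus T=\{v\}$, the set $\PP_\cK(T+v)$ contains $v$, so line~\ref{line6q} fires, the clause making $T\overset{\Phi_\cH}{\to}v$ is added, and $\FC_{\Phi_\cH}(J)$ strictly grows.

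Finally I would bound size and time. For a fixed $J$, every productive while-iteration brings at least one new element into $\FC_{\Phi_\cH}(J)$ and each hyperedge added accounts for a distinct new element, so at most $|V|$ hyperedges are added and at most $|V|$ iterations occur per $J$; summing over $\cK$ gives $|\cH|\leq|V|\cdot|\cK|$. Each iteration runs its inner for-loop over at most $|V|$ candidates $v$, so Algorithm~\ref{alg:pis} is called $O(|V|^2\cdot|\cK|)$ times, each call costing $O(|V|\cdot\|\cK\|^2)$ by Lemma~\ref{um-pis-computation}, for a total of $O(|V|^3\cdot|\cK|\cdot\|\cK\|^2)$; the forward-chaining closure computations are dominated, exactly as in Theorem~\ref{t-Q1-solution}. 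I expect the main obstacle to be the negative-answer step: one must argue that the purely \emph{local} failure at line~\ref{line10q}, detected only through $\PP_\cK$ (i.e.\ over all of $\cP(\cK)$ rather than over the implicate sets of the partially built $\Phi_\cH$), genuinely certifies \emph{global} non-realizability. This is where Lemma~\ref{lemma-xop2}, together with the inclusion $\cI(h)\subseteq\cP(\cK)$ valid for every realizing $h$, does the essential work.
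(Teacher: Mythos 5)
Your proposal is correct and follows exactly the route the paper intends: the paper's own proof of Theorem~\ref{t-comp-keyrealization} consists of the single remark that Algorithm~\ref{alg:key} works analogously to Theorem~\ref{t-Q1-solution}, and your argument spells out precisely that analogy, substituting the operator $\PP_\cK$ (Lemmas~\ref{unique-max-pis} and~\ref{um-pis-computation}) for $\IS_\Psi$ and the realizability criteria of Lemmas~\ref{lemma-xop1} and~\ref{lemma-xop2} for characterization \ref{it:d} of Theorem~\ref{t-MAIN-hH}. Indeed, your write-up supplies the details the paper leaves implicit, notably that the local failure at line~\ref{line10q} certifies global non-realizability via the inclusion $\cI(h)\subseteq\cP(\cK)$ valid for every realizing $h$, and the per-key progress argument yielding both termination and the bound $|\cH|\leq |V|{\cdot}|\cK|$.
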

\begin{proof}
Algorithm~\ref{alg:key} satisfies the statement of the theorem; the proof is analogous to that of Theorem~\ref{t-Q1-solution}.  
\end{proof}

\subsection{Implicate set generation}

Finally, we consider the problem of generating implicate sets of a given definite Horn CNF. Note that  a definite Horn function might have exponentially many implicate sets in the size of its CNF representation (see e.g., Example~\ref{ex-is-beyond-union}).  It is customary to measure the complexity of such generation problems in both the size of the input and output, see e.g.,~\cite{LLK1980,Val1979}. A generation algorithm is called \emph{polynomial delay} if the time intervals between the start and the first output, between consecutive outputs, and between the last output and the halt are all bounded by a polynomial in the input size. 

In order to have such an algorithm for generating implicate sets, we apply the so-called flashlight search discussed e.g., in~\cite{BGKM2001}, where flashlight search explores the space of all possible solutions by checking if the current partial solution can be extended to a complete solution. More precisely, our algorithm solves the following problem as a subroutine. 
 
\decprob{Implicate Set Extension($\Psi, X,Y$)}{A definite Horn CNF $\Psi$ and disjoint subsets $X$ and $Y$ in $V$.}{Does $\Psi$ has an implicate set $I$ such that $X \subseteq I$ and $I \cap Y=\emptyset$?}  

\begin{lemma}\label{t-Q3-solution}
Let $X$ and $Y$ be disjoint subsets in $V$, and let $\Psi$ be a definite Horn CNF. Then $\Psi$ has an implicate set $I$ such that $X \subseteq Y$ and $I \cap Y=\emptyset$  if and only if $\IS_\Psi(V\setminus Y)\supseteq X$. Hence {\sc Implicate Set Extension}$(\Psi, X,Y)$ can be solved in $O(|V|^2{\cdot}\|\Psi\|)$ time. 
\end{lemma}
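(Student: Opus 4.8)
The plan is to reduce the existence question to a single core computation, exploiting that implicate sets form a union-closed family with a unique maximal representative inside any ground set. (I read the statement with the obvious correction $X\subseteq I$ matching the problem definition.) First I would observe that the constraint $I\cap Y=\emptyset$ is equivalent to $I\subseteq V\setminus Y$, so the question asks precisely whether $\Psi$ has an implicate set $I$ with $X\subseteq I\subseteq V\setminus Y$.

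Next I would invoke the core operator. Recall that $\IS_\Psi(V\setminus Y)$ is, by definition, the unique maximal implicate set contained in $V\setminus Y$; its existence and uniqueness follow from Lemma~\ref{l-implicate-sets-are-union-closed}, since the union of any two implicate sets contained in $V\setminus Y$ is again an implicate set contained in $V\setminus Y$. For the forward direction, if such an $I$ exists then $I$ is an implicate set contained in $V\setminus Y$, so $I\subseteq \IS_\Psi(V\setminus Y)$ by maximality, and combined with $X\subseteq I$ this yields $X\subseteq \IS_\Psi(V\setminus Y)$. For the reverse direction, if $X\subseteq \IS_\Psi(V\setminus Y)$ then $I\coloneqq \IS_\Psi(V\setminus Y)$ is itself an implicate set that is contained in $V\setminus Y$ (hence $I\cap Y=\emptyset$) and contains $X$, which is exactly the required witness. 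This settles the equivalence.

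For the running time, I would simply run Algorithm~\ref{alg:core} on the instance $(\Psi,\, V\setminus Y)$ to produce $\IS_\Psi(V\setminus Y)$ and then test the containment $X\subseteq \IS_\Psi(V\setminus Y)$. By Theorem~\ref{l-IS-computation} the core computation runs in $O(|V\setminus Y|^2{\cdot}\|\Psi\|)=O(|V|^2{\cdot}\|\Psi\|)$ time, and the final containment check is dominated by this cost, giving the claimed bound.

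There is no genuine obstacle here: the entire content is that the union-closure of implicate sets collapses a potentially exponential search over candidate witnesses into one deterministic core computation, so no flashlight-style branching is needed to answer the extension query. The only point requiring care is to identify $\IS_\Psi(V\setminus Y)$ correctly as the \emph{maximal implicate set} inside $V\setminus Y$ and not to confuse it with the closure operator $\FC_\Psi$; once that is fixed, both implications are immediate.
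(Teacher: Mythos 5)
Your proof is correct and follows exactly the paper's argument: both rest on the single fact that $\IS_\Psi(V\setminus Y)$ is the unique maximal implicate set disjoint from $Y$ (via union-closedness), with the complexity bound imported from Theorem~\ref{l-IS-computation}. You simply spell out the two implications that the paper leaves implicit, and you rightly read the statement's typo $X\subseteq Y$ as $X\subseteq I$.
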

\begin{proof}
The claim follows by the fact that $\IS_h(V\setminus Y)$ is the unique maximal implicate set of $h$ that is disjoint from $Y$. The complexity  of Problem {\sc  Implicate Set Extension}$(\Psi, X,Y)$ follows by Theorem~\ref{l-IS-computation}.
\end{proof}

Therefore we get the following theorem. 

\begin{theorem}\label{t-Q2-solution}
For a given definite Horn CNF $\Psi$, we can generate implicate sets of $\Psi$ in polynomial delay. 
\end{theorem}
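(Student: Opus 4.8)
The plan is to apply the flashlight (binary partition) search over the ground set $V$, using {\sc Implicate Set Extension} from Lemma~\ref{t-Q3-solution} as the extension oracle. Fix an arbitrary ordering $v_1,\dots,v_n$ of $V$. The search traverses a binary tree in DFS order: a node at depth $i$ is described by a pair $(X,Y)$ of disjoint sets with $X\cup Y=\{v_1,\dots,v_i\}$, where $X$ collects the variables already committed to lie inside the implicate set and $Y$ those committed to lie outside. At such a node we branch on $v_{i+1}$, forming the two candidate children $(X+v_{i+1},Y)$ and $(X,Y+v_{i+1})$, and we descend into a child only if the corresponding {\sc Implicate Set Extension} instance answers ``yes''. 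At a leaf (depth $n$) we have $Y=V\setminus X$, so the surviving committed-in set $X$ is output as an implicate set.

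First I would argue correctness. By Lemma~\ref{t-Q3-solution}, the instance {\sc Implicate Set Extension}$(\Psi,X,Y)$ answers ``yes'' exactly when $\IS_\Psi(V\setminus Y)\supseteq X$, i.e.\ precisely when some implicate set $I$ of $\Psi$ satisfies $X\subseteq I$ and $I\cap Y=\emptyset$. Hence a child is entered iff its partial assignment admits a completion to a genuine implicate set, which guarantees by induction up the tree that every node reached has at least one implicate set in its subtree and that every leaf $X$ indeed satisfies $\IS_\Psi(X)=X$, i.e.\ $X\in\cI(\Psi)$ by~\eqref{e-implicate-set-of-haa}. Conversely, any implicate set $I$ is produced exactly once, namely at the unique leaf whose committed-in set equals $I$, since the path of extension tests it induces never fails. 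Thus the algorithm enumerates $\cI(\Psi)$ without repetition.

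Next I would bound the delay. Because a child is entered only after a successful extension test, every visited node has a solution in its subtree, so every leaf is an output and the tree has depth exactly $n=|V|$. In a DFS of such a tree, moving from one output leaf to the next traverses $O(|V|)$ tree edges (up to the lowest common ancestor and back down), and at each visited node we perform at most two extension tests, each costing $O(|V|^2{\cdot}\|\Psi\|)$ time by Lemma~\ref{t-Q3-solution}. Consequently the times before the first output, between consecutive outputs, and after the last output are all $O(|V|^3{\cdot}\|\Psi\|)$, which is polynomial in the input size; this establishes polynomial delay.

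The only real ingredient beyond the generic flashlight template is the polynomial-time extension oracle, and this has already been secured in Lemma~\ref{t-Q3-solution} via the core computation of Theorem~\ref{l-IS-computation}; note that the union-closedness of $\cI(\Psi)$ (Lemma~\ref{l-implicate-sets-are-union-closed}) is what makes the maximal $Y$-avoiding implicate set $\IS_\Psi(V\setminus Y)$ a single well-defined object, and hence makes the oracle correct. Given this, the main point to verify carefully is the pruning invariant ``every entered node has a completion,'' which is exactly what forces all leaves to be solutions and yields the $O(|V|)$ bound on the number of nodes between consecutive outputs; I do not anticipate any serious obstacle beyond checking this invariant and the routine complexity accounting.
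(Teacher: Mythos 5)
Your proposal is correct and is exactly the paper's argument: the paper proves Theorem~\ref{t-Q2-solution} by combining the extension oracle of Lemma~\ref{t-Q3-solution} with the standard flashlight (binary partition) search, which is precisely the scheme you describe. You have merely spelled out the details the paper delegates to the cited flashlight-search reference, including the pruning invariant and the $O(|V|^3{\cdot}\|\Psi\|)$ delay bound, and these details check out.
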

\begin{proof}
The claim follows from Lemma~\ref{t-Q3-solution} and the flashlight search, see e.g.,~\cite{BGKM2001}.
\end{proof}

Note that $\cI(\Psi)$ is closed under union and $\IS_h(V\setminus Y)$ is the unique maximal implicate set of $h$ that is disjoint from $Y$. Hence, by utilizing such properties, implicate sets can be generated faster than the standard flashlight search, although we do not discuss it in this paper. 

\subsection{Checking implicate-duality}\label{subsec: i-duality}

Deciding the computational complexity of checking implicate-duality remains an intriguing open question. In this problem, we are given two definite Horn CNFs $\Psi$ and $\Gamma$, and the goal is to decide if $\Psi=\Gamma^i$. The problem is open even when $\Psi=\Gamma$, that is, one would like to check if a definite Horn CNF $\Psi$ represents a self $i$-dual Horn function or not. We call a definite Horn function $h$ \textit{self $i$-dual} if $h^i=h$. We know by Lemma \ref{t-f^i-is-hH} that $h$ must be hypergraph Horn in this case.  

As initial steps toward understanding these problems, we show that $\Psi \geq \Gamma^i$ can be checked in polynomial time and that the problem of deciding $\Psi=\Psi^i$ belongs to co-NP.  

\begin{theorem}\label{t-Q4-solution}
Let $\Psi$ and $\Gamma$ be two definite Horn CNFs. Then $\Psi\geq \Gamma^i$ can be checked in $O(|V|^2{\cdot}|\Psi|{\cdot}\|\Gamma\|)$ time. 
\end{theorem}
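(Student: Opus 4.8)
The goal is to check the relation $\Psi \geq \Gamma^i$ efficiently. First I would unfold what this inequality means semantically: $\Psi \geq \Gamma^i$ holds if and only if every true set of $\Psi$ is a true set of $\Gamma^i$, i.e., $\cT(\Psi) \subseteq \cT(\Gamma^i)$. By the defining equation of the implicate-dual~\eqref{e-f->f^i}, we have $\cT(\Gamma^i) = \cI(\Gamma)^c$, so the condition becomes: for every true set $T$ of $\Psi$, the complement $\bar{T}=V\setminus T$ is an implicate set of $\Gamma$. Equivalently, $\Psi \geq \Gamma^i$ fails precisely when there exists a true set $T$ of $\Psi$ whose complement $\bar T$ is \emph{not} an implicate set of $\Gamma$.

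The plan is to turn this into a local check. Rather than enumerating all true sets of $\Psi$ (exponentially many), I would exploit that $\Psi$ is definite Horn, so $\cT(\Psi)$ is generated by the forward-chaining closure operator $\FC_\Psi$, and that failure of the implicate-set property of a set is witnessed by a single variable via~\eqref{e-implicate-set-of-h}. Concretely, $\bar T \notin \cI(\Gamma)$ means there is a variable $u \in \bar T$ with $(\bar T - u)\overset{\Gamma}{\nrightarrow} u$, i.e.\ $u \notin \FC_\Gamma(\bar T - u)$. So the negation we must detect is: there exist a true set $T$ of $\Psi$ and a variable $u \notin T$ such that $u \notin \FC_\Gamma(\bar T - u) = \FC_\Gamma(V \setminus (T+u))$. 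I would therefore iterate over the candidate "missing" variable $u$ and over a candidate implicate clause. The key reduction is that it suffices to examine, for each clause $A \to v$ of the target structure, the smallest true set of $\Psi$ containing the relevant set; since among true sets $T$ with a fixed $u\notin T$ the \emph{largest} such $T$ minimizes $\bar T - u$ and hence makes $\FC_\Gamma(\bar T - u)$ smallest, the extremal true set to test is the largest true set of $\Psi$ avoiding $u$, which (by intersection-closedness, Lemma~\ref{l-E-intersection-closed}) is well defined and computable.

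Making this precise, I would fix each variable $u \in V$, let $T_u$ be the largest true set of $\Psi$ not containing $u$ (the complement of $\FC_\Psi$-type construction, or equivalently the union of all true sets avoiding $u$, which is again a true set since $\cT(\Psi)$ is closed under the relevant operation once we track that avoiding $u$ is preserved), and then test whether $u \in \FC_\Gamma(V \setminus (T_u + u))$. If for some $u$ this membership fails, we have produced a witness $T_u \in \cT(\Psi)$ with $\overline{T_u} \notin \cI(\Gamma)$, so $\Psi \not\geq \Gamma^i$; otherwise the relation holds. Each $\FC_\Gamma$ computation costs $O(\|\Gamma\|)$ and each largest-avoiding-true-set computation costs $O(\|\Psi\|)$, looped over $O(|V|)$ choices of $u$ and $O(|\Psi|)$ clauses, giving the claimed $O(|V|^2 \cdot |\Psi| \cdot \|\Gamma\|)$ bound.

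The main obstacle I anticipate is the extremality argument: justifying rigorously that testing only the single largest true set $T_u$ avoiding $u$ suffices, rather than all true sets avoiding $u$. The subtlety is that enlarging $T$ shrinks $\bar T - u$, which can only shrink $\FC_\Gamma(\bar T - u)$, so if $u \notin \FC_\Gamma(\overline{T_u} - u)$ for the \emph{largest} $T_u$ we get the strongest witness, and conversely if $u \in \FC_\Gamma(\overline{T_u}-u)$ for the largest one, monotonicity does \emph{not} immediately give it for smaller true sets---so I must argue in the correct direction, namely that the largest $T_u$ yields the smallest complement and thus is the \emph{hardest} case for $\bar T$ to be an implicate set, making it the right extremal witness to check. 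I would verify that the "largest true set avoiding $u$" is exactly the complement of the minimal implicate set containing $u$ in the dual picture, tying the computation back to a forward-chaining call on $\Psi$, and confirm that this single call per $u$ captures all potential violations.
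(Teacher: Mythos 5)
There is a genuine gap, and it occurs in your very first line. Under the paper's convention, $\Psi\geq\Gamma^i$ means $\Gamma^i(X)\leq\Psi(X)$ for all $X$, i.e.\ $\cT(\Gamma^i)\subseteq\cT(\Psi)$: every complement of an implicate set of $\Gamma$ must be a \emph{true} set of $\Psi$. You unfold it the other way, as $\cT(\Psi)\subseteq\cT(\Gamma^i)$, i.e.\ ``$\overline{T}\in\cI(\Gamma)$ for every $T\in\cT(\Psi)$'', which is the semantic content of $\Psi\leq\Gamma^i$. So your entire algorithm targets the reverse inequality --- and that reverse problem, deciding $\Psi\leq\Gamma^i$, is precisely the first open question in the paper's concluding section, so a routine polynomial-time test for it should have set off alarm bells.

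Even taken as an attempt at the flipped problem, the proposal breaks at the extremality step, exactly where you flagged discomfort. There is in general no ``largest true set $T_u$ of $\Psi$ avoiding $u$'': by Lemma~\ref{l-E-intersection-closed} the family $\cT(\Psi)$ is closed under \emph{intersection}, not union, and your parenthetical claim that the union of all true sets avoiding $u$ is again a true set is false. For the single clause $\{a,b\}\to u$ on $V=\{a,b,u\}$, the maximal true sets avoiding $u$ are $\{a\}$ and $\{b\}$; there is no unique maximum. Monotonicity does propagate violations upward (if $u\notin\FC_\Gamma(\overline{T}-u)$ and $T\subseteq T'$ with $u\notin T'$, then $u\notin\FC_\Gamma(\overline{T'}-u)$), so checking all \emph{maximal} true sets avoiding $u$ would suffice --- but there can be exponentially many of these, so the check does not localize to one forward-chaining call per $u$. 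The paper's proof of the stated direction works because the unique-extremal-witness structure you were reaching for exists on the other side: $\Psi\not\geq\Gamma^i$ iff some false set $F$ of $\Psi$ has $\overline{F}\in\cI(\Gamma)$; false sets of a CNF are witnessed clause-by-clause ($A\subseteq F$, $v\notin F$ for some clause $A\to v$ of $\Psi$), and since $\cI(\Gamma)$ \emph{is} union-closed (Lemma~\ref{l-implicate-sets-are-union-closed}), the core $\IS_\Gamma(V\setminus A)$ is the unique maximal implicate set disjoint from $A$, reducing the whole test to one membership query $v\in\IS_\Gamma(V\setminus A)$ per clause (Lemma~\ref{t-Q3-solution}, with the bound from Theorem~\ref{l-IS-computation}). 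In short: the correct proof pivots on the union-closedness of implicate sets of $\Gamma$, whereas your argument needs a union-closedness of true sets of $\Psi$ that definite Horn functions simply do not have.
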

\begin{proof}
Note that  $\Psi\not\geq \Gamma^i$ holds if and only if there exists a false set $F$ of $\Psi$ such that $V\setminus F$ is an implicate set of $\Gamma$. This is equivalent to the existence of a clause $A \to v$ in $\Psi$ such that the output of {\sc Implicate Set Extension}($\Gamma, \{v\},A$) is ``Yes''. Therefore, by Lemma~\ref{t-Q3-solution}, $\Psi\geq \Gamma^i$ can be checked in $O(|V|^2{\cdot}|\Psi|{\cdot}\|\Gamma\|)$ time. 
\end{proof}

We show next that self implicate duality of $h$ can be characterized as a property of its complete implicate set $\cI(h)$. This characterization then allows us to show that the problem of deciding $h=h^i$ for a function $h$ represented by definite Horn CNF belongs to co-NP.

\begin{theorem}
\label{t-self-i-dual}
A definite Horn function $h$ is self $i$-dual if the hypergraph  $\cH=\cI(h)$ satisfies the following properties.
\begin{enumerate}[label={\rm (\alph*)}]\itemsep0em
    \item \label{idual:a} $|H\cap H'|\neq 1$ for all $H,H'\in\cH$.
    \item \label{idual:b} $\cH$ is a maximal subfamily of $2^V$ with respect to property \ref{idual:a}.
\end{enumerate}
\end{theorem}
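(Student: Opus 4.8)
The statement characterizes self $i$-duality via two properties of $\cH = \cI(h)$: the ``no singleton intersection'' property \ref{idual:a} and the maximality property \ref{idual:b}. The natural strategy is to translate self $i$-duality $h = h^i$ into a statement purely about $\cI(h)$, and then show that this statement is exactly the conjunction of \ref{idual:a} and \ref{idual:b}. The plan is to use the defining equality $\cT(f^i) = \cI(f)^c$ together with the characterizations of true and implicate sets in Corollary~\ref{c-imp-true}, which hold because $h$ is hypergraph Horn (guaranteed here since self $i$-duality forces $h = h^\circ$ by Lemma~\ref{t-f^i-is-hH} and $h^{ii} = h^\circ = h$). First I would unfold what $h = h^i$ means: since both $h$ and $h^i$ are definite Horn, they are determined by their true sets, so $h = h^i$ is equivalent to $\cT(h) = \cT(h^i) = \cI(h)^c$, i.e.
\[
\cT(h) ~=~ \cI(h)^c.
\]
Taking complements and using $\cH = \cI(h)$, this says $\cH^c = \cT(h)$, equivalently $\cH = \cT(h)^c = \cF(h)$ — so the self-dual condition is precisely that the family of implicate sets equals the complementary family of true sets.

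\textbf{Key steps.}
I would next substitute the two characterizations from Corollary~\ref{c-imp-true}. Property~\ref{idual:a}, namely $|H \cap H'| \neq 1$ for all $H, H' \in \cH$, should capture the statement that $\cH$ is ``internally consistent'' as a family of implicate sets relative to the complements in $\cH$: writing $T = V \setminus H'$ for $H' \in \cH$, the complement $T$ is a candidate true set, and the condition $|H \setminus T| = |H \cap H'| \neq 1$ is exactly the condition in \eqref{e2-implicate-set-of-h} that guarantees each $H \in \cH$ genuinely remains an implicate set when the true sets are taken to be $\cH^c$. So I would show that \ref{idual:a} is equivalent to $\cH \subseteq \cI(h')$ where $h'$ is the function with $\cT(h') = \cH^c$; that is, \ref{idual:a} ensures no element of $\cH$ is ``killed'' by a complement in $\cH$. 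Then property~\ref{idual:b}, maximality with respect to \ref{idual:a}, should force the reverse inclusion: if some implicate set of $h'$ were missing from $\cH$, one could add it while preserving \ref{idual:a}, contradicting maximality. Combining, \ref{idual:a} and \ref{idual:b} together are equivalent to $\cH = \cI(h')$ with $\cT(h') = \cH^c$, which is exactly the self-dual equation $\cT(h) = \cI(h)^c$ derived above. I would carry out both inclusions explicitly using Corollary~\ref{c-imp-true}, being careful that $\cH = \cI(h)$ is union-closed (Lemma~\ref{l-implicate-sets-are-union-closed}) and contains $\emptyset$, which keeps the family well-behaved under the complement/transversal manipulations.

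\textbf{Main obstacle.}
The delicate point, and the step I expect to be hardest, is the maximality direction \ref{idual:b}: proving that maximality with respect to the purely combinatorial property \ref{idual:a} coincides with maximality as a family of implicate sets. The subtlety is that property \ref{idual:a} constrains only pairwise intersections of members of $\cH$, whereas $\cI(h)$ is defined by the more global condition \eqref{e2-implicate-set-of-h} quantifying over \emph{all} true sets. I would need to verify that, under union-closure, the pairwise condition \ref{idual:a} among the members of $\cH$ is enough to recover the full implicate-set condition against the complement family $\cH^c$ — that is, that checking $|H \setminus T| \neq 1$ only for $T$ ranging over $\cH^c$ (rather than over the larger $\cT(h)$) suffices. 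The resolution should come from the fact that $h$ is hypergraph Horn, so by Corollary~\ref{c-imp-true} the true sets are fully determined by $\cI(h)$ via \eqref{e2-true-set-of-h}, closing the loop between the pairwise statement and the global one. I would also double-check the edge cases $\emptyset, V \in \cH$ to ensure the trivial members do not break either property.
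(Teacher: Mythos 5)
Your overall translation is in the spirit of the paper's argument: the paper likewise reduces everything to Lemma~\ref{l-E-true-set-implicate-set-basics}, matching property \ref{idual:a} with ``every member of $\cH$ survives as an implicate set against the candidate true-set family $\cH^c$'' (this is the content of Lemma~\ref{l-self-i-dual-majorant}) and property \ref{idual:b} with the reverse inclusion. However, there is a genuine gap at exactly the point you flag as the crux. In the direction the theorem asserts (\ref{idual:a} and \ref{idual:b} imply self $i$-duality), you justify the hypergraph-Horn-ness of $h$ --- needed both to invoke \eqref{e2-true-set-of-h} for $h$ and for the final bridge --- by saying that self $i$-duality forces $h = h^\circ$. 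That is circular: self $i$-duality is the conclusion, not a hypothesis, in this direction. Moreover the circularity is not a removable formality. Carried out correctly, your equivalence chain yields only $\cI(h^i) = \cH = \cI(h)$, hence $h^i = h^{ii} = h^\circ$ by Lemma~\ref{t-ii=o}; that is self $i$-duality of $h^\circ$, and to conclude $h = h^i$ you still need $h = h^\circ$, which does \emph{not} follow from \ref{idual:a} and \ref{idual:b} alone. Concretely, take Example~\ref{ex-fi=f} and delete the true set $\{1\}$: the family $\cT(h) = \{V,\, \{1,2,3\},\, \{4,5\},\, \{2\},\, \{3\},\, \emptyset\}$ is intersection-closed and contains $V$, so $h$ is definite Horn; a direct check shows $\cI(h)$ is still the same hypergraph $\cH$ of the example (which satisfies \ref{idual:a} and \ref{idual:b}), yet $h \neq h^\circ = h^i$, since $\{1\} \in \cT(h^i) \setminus \cT(h)$. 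The paper's proof sidesteps this by working with $\Phi_\cH$ (i.e., with $h^\circ$) throughout; your write-up attempts to prove the statement for $h$ itself, and with your tools the loop cannot be closed --- you would have to add the hypothesis that $h$ is hypergraph Horn (equivalently $h = \Phi_{\cI(h)}$) rather than derive it.

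Two smaller points. First, in the maximality step you assert that any $S \in \cI(h^i) \setminus \cH$ can be adjoined to $\cH$ while preserving \ref{idual:a}; since \ref{idual:a} allows $H = H'$, you must also rule out $|S| = 1$, because a singleton can never be adjoined. This is fixable: if \ref{idual:b} holds then $V \in \cH$ (otherwise $V$ itself could be adjoined, as $|V \cap H| = |H| \neq 1$ by the self-pairs in \ref{idual:a}), hence $\emptyset \in \cT(h^i)$ and $|\{u\} \setminus \emptyset| = 1$ kills every singleton, so no singleton lies in $\cI(h^i)$. (The paper's own proof of the non-maximality direction glosses over the same self-pair issue when it adjoins $V \setminus S$.) Second, your identity $\cH = \cT(h)^c = \cF(h)$ is a slip: the complementary family $\cT(h)^c$ is not the family of false sets $\cF(h)$; your subsequent use of $\cT(h') = \cH^c$ is the correct reading, and the identification with $\cF(h)$ should be deleted.
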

Let us remark that since $H=H'$ is possible in \ref{idual:a}, $|H|\geq 2$ follows for all $H\in\cI(h)$ for a self $i$-dual Horn function $h$.

Before proving this theorem, we need a technical lemma that maybe of interest on its own.

\begin{lemma}\label{l-self-i-dual-majorant}
Given a hypergraph $\cH\subseteq 2^V$, we have $\Phi_\cH \geq \Phi^i_\cH$ if and only if $\cH$ satisfies condition \ref{idual:a} of Theorem \ref{t-self-i-dual}.
\end{lemma}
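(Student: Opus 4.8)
The plan is to prove both directions of the biconditional $\Phi_\cH \geq \Phi^i_\cH \iff \cH$ satisfies \ref{idual:a} by translating the inequality $\Phi_\cH \geq \Phi^i_\cH$ into a statement about how true sets and implicate sets interact, and then recognizing that interaction as exactly the ``no intersection of size one'' condition. Recall that $\Phi_\cH$ is hypergraph Horn, so by definition $\cI(\Phi_\cH) = \cI(\cH)^\cup \supseteq \cH$, and by Lemma~\ref{t-f^i-is-hH} the function $\Phi^i_\cH$ is also hypergraph Horn. The key translation device is the definition $\cT(f^i) = \cI(f)^c$ together with Corollary~\ref{c-imp-true}.

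First I would unfold the inequality at the level of true sets: $\Phi_\cH \geq \Phi^i_\cH$ means $\cT(\Phi_\cH) \supseteq \cT(\Phi^i_\cH) = \cI(\Phi_\cH)^c$. Taking complements (using that ``$c$'' is an involution), this is equivalent to saying that for every $I \in \cI(\Phi_\cH)$, the complementary set $V \setminus I$ is a false set of $\Phi_\cH$ whenever $I$ fails to be a true set — more usefully, I would restate it as: every implicate set of $\Phi_\cH$ whose complement is a true set must itself be a true set. The cleanest formulation comes from applying Corollary~\ref{c-imp-true} directly: $\Phi_\cH \geq \Phi^i_\cH$ fails precisely when there is some $I \in \cI(\Phi_\cH)$ with $V \setminus I \in \cT(\Phi^i_\cH)$ but $V\setminus I \notin \cT(\Phi_\cH)$, i.e. some implicate set $I$ and some true set $T$ of $\Phi_\cH$ with $|I \setminus T| = 1$ where $T = V\setminus I$ is itself complementary to an implicate set.

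The heart of the argument is then a two-element overlap computation. I expect the condition $\Phi_\cH \geq \Phi^i_\cH$ to be equivalent to: there do not exist two implicate sets $I, J \in \cI(\Phi_\cH)$ with $|I \setminus (V\setminus J)| = |I \cap J| = 1$. That is, the inequality holds iff no two members of $\cI(\Phi_\cH)$ meet in exactly one element. Since $\cH \subseteq \cI(\Phi_\cH) = \cH^\cup$ and any intersection of unions can be analyzed through the generating hyperedges, I would argue that $\cI(\Phi_\cH)$ has the ``no size-one intersection'' property if and only if $\cH$ itself does: the forward direction is immediate because $\cH \subseteq \cI(\Phi_\cH)$, and the reverse requires checking that if all generators pairwise avoid size-one intersections, then so do arbitrary unions of them. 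This union-closure step is where I expect to invoke Lemma~\ref{l-implicate-sets-are-union-closed} and a direct set-theoretic check: if $|(\bigcup_a H_a) \cap (\bigcup_b H'_b)| = 1$, the single common element lies in some $H_a \cap H'_b$, which would force that pairwise intersection to have size exactly one (it cannot be empty, and any second common element would already violate the size-one count), contradicting \ref{idual:a} for the generators.

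The main obstacle I anticipate is the precise bookkeeping in that union-closure step — specifically, ruling out the case where a single shared element between two unions arises even though each individual generator pair intersects in size zero or size at least two. One must verify that a lone intersection element between two unions pins down a pair of generators intersecting in exactly that one element, with no larger overlap hiding elsewhere; this needs care because an element common to the two unions may come from multiple generator pairs simultaneously. I would handle this by choosing the shared element $w$, picking generators $H_a \ni w$ and $H'_b \ni w$, and observing that $H_a \cap H'_b \subseteq (\bigcup H_a) \cap (\bigcup H'_b) = \{w\}$ forces $H_a \cap H'_b = \{w\}$, directly contradicting \ref{idual:a}. Conversely, translating the failure of \ref{idual:a} (a size-one intersection among generators) back into a violation of the inequality is the easier direction, since the offending pair already lies in $\cI(\Phi_\cH)$ and Corollary~\ref{c-imp-true} exhibits the required true set and implicate set witnessing $\Phi_\cH \not\geq \Phi^i_\cH$.
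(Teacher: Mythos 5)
Your reduction of $\Phi_\cH \geq \Phi^i_\cH$ to the statement ``no two members of $\cI(\Phi_\cH)$ intersect in exactly one element'' is correct and coincides with the first half of the paper's proof, as does your observation that necessity of \ref{idual:a} is immediate from $\cH\subseteq\cI(\Phi_\cH)$. The gap is in your sufficiency direction: you assert $\cI(\Phi_\cH)=\cH^\cup$ and then verify the no-size-one-intersection property only for unions of hyperedges. That equality is false in general: the implicate sets of $\Phi_\cH$ form a union-closed family that can strictly contain $\cH^\cup$, and the paper makes a point of this in Example~\ref{ex-is-beyond-union}. There, with $\cH=\{\{i,i'\}\mid i=1,\dots,n\}\cup\{\{1,\dots,n\}\}$, the set $\{1',\dots,n'\}$ is an implicate set of $\Phi_\cH$ (forward chaining recovers a deleted $i'$ via the clauses $j'\to j$, then $(\{1,\dots,n\}-i)\to i$, then $i\to i'$), yet it contains no hyperedge of $\cH$ and so lies outside $\cH^\cup$. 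Your union bookkeeping (choosing $H_a\ni w$ and $H'_b\ni w$ and forcing $H_a\cap H'_b=\{w\}$) is correct as far as it goes, but it only rules out bad pairs inside $\cH^\cup$ and says nothing about implicate sets arising from genuine multi-step derivations, which is exactly where the content of the lemma lies.

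The paper closes this gap by a different device, exploiting that $\FC_{\Phi_\cH}$ is checked clause-by-clause against $\cH$: if $I,I'\in\cI(\Phi_\cH)$ had $|I\cap I'|=1$, then $|I\setminus(V\setminus I')|=1$ shows $V\setminus I'$ is not a true set, and since truth can only fail at a clause of the representation $\Phi_\cH$, there must exist a hyperedge $H\in\cH$ with $|H\setminus(V\setminus I')|=1$, i.e., $|H\cap I'|=1$; one more application of the same step (now $|I'\setminus(V\setminus H)|=1$ makes $V\setminus H$ not true) produces $H'\in\cH$ with $|H'\cap H|=1$, contradicting \ref{idual:a}. In other words, the correct descent from arbitrary implicate sets to hyperedges of $\cH$ goes through the representation and two uses of ``not closed $\Rightarrow$ some hyperedge sticks out by exactly one element,'' not through the union structure, because $\cI(\Phi_\cH)$ is generated by derivations rather than by unions of $\cH$. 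To repair your proof you would need to replace the union-closure step with an argument of this kind.
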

\begin{proof}
Let us introduce $h=\Phi_\cH$. By definition $\cT(h^i)=\cI(h)^c$. Thus we have $\cT(h^i)\subseteq \cT(h)$ if and only if $\cI(h)^c\subseteq \cT(h)$. By Lemma \ref{l-E-true-set-implicate-set-basics} we have that a subset $S\subseteq V$ is a true set of $h$ if and only if there exists no implicate set $I\in\cI(h)$ such that $|I\setminus S|=1$. Thus, we have $\cI(h)^c\subseteq \cT(h)$ if and only if for all $S\in\cI(h)^c$ and $I\in\cI(h)$ we have $|I\setminus S|\neq 1$, or equivalently, for all $I,I'\in\cI(h)$ we have $|I\cap I'|\neq 1$. 
	
We claim that this latter condition holds if and only if $\cH$ satisfies property \ref{idual:a} of Theorem \ref{t-self-i-dual}. Since $\cH\subseteq \cI(h)$, the necessity of condition \ref{idual:a} is implied. For the sufficiency, assume for a contradiction that there exist implicate sets $I,I'\in\cI(h)$ such that $|I\cap I'|=1$, i.e., $|I\setminus (V\setminus I')|=1$. This relation then implies that $\FC_h(V\setminus I')\neq (V\setminus I')$. Since $\FC_h=\FC_{\Phi_\cH}$, there exists a hyperedge $H\in \cH$ such that $|H\setminus (V\setminus I')|=1$, or equivalently, $|H\cap I'|=1$. We can write this equivalently, as $|I'\setminus (V\setminus H)|=1$ from which $\FC_h(V\setminus H)\neq (V\setminus H)$ follows, implying the existence of an $H'\in\cH$ such that $|H'\setminus (V\setminus H)|=1$. 
Thus these $H$ and $H'$ would violate condition \ref{idual:a},  completing the sufficiency.  
\end{proof}

\begin{proof}[Proof of Theorem \ref{t-self-i-dual}]
Since $\Phi_\cH\geq \Phi^i_\cH$ must hold for self-$i$-dual functions, property \ref{idual:a} of  Theorem \ref{t-self-i-dual} is necessary by Lemma \ref{l-self-i-dual-majorant}. Assume first that $\cH$ satisfies that $\Phi_\cH > \Phi_\cH^i$. Then  we have a subset $S\subseteq V$ such that $\Phi_\cH^i(S)=0$ and $\Phi_\cH(S)=1$. The equality $\Phi_\cH^i(S)=0$ implies $(V\setminus S)\notin \cH$ by the definition of $i$-duality. The equality $\Phi_\cH(S)=1$ implies that for all $H\in \cH$ we have $|H\setminus S|\neq 1$, or equivalently that $|H\cap (V\setminus S)|\neq 1$. 
This implies that $\cH \cup \{V \setminus S\}$ satisfies \ref{idual:b}, that is, $\cH$ is not  maximal with respect to \ref{idual:a}.

For the converse direction, assume that $\cH$ satisfies \ref{idual:a}, but not maximal with respect to \ref{idual:a}, i.e., there exists a subset $S\subseteq V$ such that $S\notin \cH$ and $|S\cap H|\neq 1$ for all $H\in \cH$. 
The latter condition implies $\Phi_\cH(V\setminus S)=1$ by Lemma \ref{l-E-true-set-implicate-set-basics}.
On the other hand, since $\cH=\cI(h)$, 
we have $\cT(\Phi_\cH^i)=\cH^c$. 
Thus $S\notin \cH$ implies that $\Phi_\cH^i(V\setminus S)=0$, 
which concludes that 
$\Phi_\cH > \Phi^i_\cH$.  
\end{proof}

The characterization given above is analogous to the characterization of monotone duality, see \cite{bioch1995complexity} for further details.

\begin{corollary}\label{c-self-i-dual-co-NP}
The decision problem $h=h^i$ for functions $h$ represented by definite Horn CNFs belongs to co-NP.
\end{corollary}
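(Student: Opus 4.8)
The goal is to show that deciding $h = h^i$ for a function $h$ given by a definite Horn CNF $\Psi$ lies in co-NP. The plan is to use Theorem~\ref{t-self-i-dual}, which characterizes self $i$-duality of $h$ in terms of two conditions on the hypergraph $\cH = \cI(h)$: condition \ref{idual:a} (no two members, possibly equal, meet in exactly one element) and condition \ref{idual:b} (maximality with respect to \ref{idual:a}). To place the problem in co-NP I must exhibit, for every NO-instance (i.e.\ whenever $h \neq h^i$), a polynomial-size certificate verifiable in polynomial time. By the theorem, $h \neq h^i$ happens precisely when $\cI(h)$ violates \ref{idual:a} or fails to be maximal with respect to it.

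First I would split into the two failure modes. If condition \ref{idual:a} fails, there exist implicate sets $I, I' \in \cI(h)$ with $|I \cap I'| = 1$; the certificate is the pair $(I, I')$. To verify it in polynomial time one checks that each of $I$ and $I'$ is genuinely an implicate set of $\Psi$ — equivalently, by \eqref{e-implicate-set-of-haa} and Algorithm~\ref{alg:core}, that $\IS_\Psi(I) = I$ and $\IS_\Psi(I') = I'$, which runs in polynomial time by Theorem~\ref{l-IS-computation} — and that $|I \cap I'| = 1$. If instead \ref{idual:a} holds but maximality \ref{idual:b} fails, then by the proof of Theorem~\ref{t-self-i-dual} there is a set $S \subseteq V$ with $S \notin \cI(h)$ yet $|S \cap H| \neq 1$ for all $H \in \cI(h)$; equivalently, $\Phi_\cH(V \setminus S) = 1$ while $\Phi_\cH^i(V \setminus S) = 0$. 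Here the certificate is the single set $S$ (or equivalently the false-set witness $V \setminus S$). Verification requires checking that $S$ is not an implicate set of $h$ — i.e.\ $\IS_\Psi(S) \neq S$ — and that $V \setminus S$ is nevertheless a true set of $\Phi_{\cI(h)} = h^\circ$, which by Corollary~\ref{c-imp-true}, equation~\eqref{e2-true-set-of-h}, amounts to verifying that no implicate set meets $S$ in exactly one element; but this last condition is exactly $\FC_\Psi(V\setminus S) = V \setminus S$, a polynomial-time closure computation, together with $h(V\setminus S)=1$ checked directly on $\Psi$.

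The cleanest way to unify both cases is to observe that $h \neq h^i$ is equivalent to the existence of a set $S \subseteq V$ witnessing $h(S) \neq h^i(S)$. Since $h \leq h^\circ = h^{ii}$ and, for self $i$-duality, one needs $h = h^i = h^{ii}$, a discrepancy is certified by a single set $S$ on which the two functions disagree. Given such $S$, both $h(S)$ and $h^i(S)$ can be evaluated in polynomial time: $h(S) = \Psi(S)$ is immediate, and $h^i(S) = 1$ iff $V \setminus S \in \cI(h)$ iff $\IS_\Psi(V \setminus S) = V \setminus S$, computable by Algorithm~\ref{alg:core} in $O(|V|^2 \cdot \|\Psi\|)$ time by Theorem~\ref{l-IS-computation}. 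Thus a NO-instance admits the polynomial-size certificate $S$, checkable in polynomial time, placing the complement in NP and hence the problem $h = h^i$ in co-NP.

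The main obstacle is conceptual rather than computational: one must be careful that $h^i$ is \emph{not} assumed to be presented by a CNF, so its values cannot be read off directly and must instead be recovered through the core operator via the defining identity $\cT(h^i) = \cI(h)^c$. The key enabling fact is precisely that membership $Z \in \cI(h)$ is polynomial-time decidable from $\Psi$ by core computation (Theorem~\ref{l-IS-computation}), which turns the otherwise implicitly-defined $h^i$ into a function whose individual values are efficiently testable. Once this is in place, the co-NP membership is routine: guess the disagreeing set $S$ and verify the disagreement in polynomial time.
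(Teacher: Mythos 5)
Your final, unified argument is correct, and it takes a genuinely simpler route than the paper. The paper proves co-NP membership through Theorem~\ref{t-self-i-dual}: a NO-instance is certified either by two implicate sets $I,I'\in\cI(h)$ with $|I\cap I'|=1$, or by a set $S\notin\cI(h)$ with $|S\cap I|\neq 1$ for all $I\in\cI(h)$, and the latter (universally quantified) condition is verified \emph{exactly} by testing, for each $v\in S$, whether $v\in\IS_\Psi((V\setminus S)+v)$, using the polynomial-time core computation of Theorem~\ref{l-IS-computation}. You instead observe that a single disagreement point suffices: guess $S$ with $h(S)\neq h^i(S)$, evaluate $h(S)=\Psi(S)$ directly, and evaluate $h^i(S)$ via the chain $h^i(S)=1 \iff V\setminus S\in\cI(h) \iff \IS_\Psi(V\setminus S)=V\setminus S$, which uses only the defining identity \eqref{e-f->f^i} together with \eqref{e-implicate-set-of-haa} and core computation. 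This bypasses Theorem~\ref{t-self-i-dual} entirely --- the only nontrivial ingredient is that membership in $\cI(h)$ is polynomial-time decidable from $\Psi$ --- whereas the paper's certificates are more structural, exhibiting which of the two conditions \ref{idual:a}/\ref{idual:b} fails. Both schemes are sound and complete; yours is the more economical one.

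One caveat about your first paragraph, which your unified argument supersedes: the claim that ``no implicate set meets $S$ in exactly one element'' is \emph{exactly} the condition $\FC_\Psi(V\setminus S)=V\setminus S$ overstates an implication as an equivalence. Closedness of $V\setminus S$ under $\Psi$ implies that no $I\in\cI(h)$ satisfies $|I\cap S|=1$, but the converse is precisely property \ref{it:d} of Theorem~\ref{t-MAIN-hH} and can fail when $h$ is not hypergraph Horn --- and NO-instances of self implicate-duality need not be hypergraph Horn. So that test is sound but not exact, and a valid witness $S$ in the paper's sense could be rejected by it; this is exactly why the paper verifies the condition through the cores $\IS_\Psi((V\setminus S)+v)$ for $v\in S$ rather than through the closure of $V\setminus S$. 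Since your final certificate is the disagreement set rather than the two-mode certificate, this slip does not affect the correctness of the proposal as submitted.
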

\begin{proof}
To prove $h\neq h^i$ it is enough by Theorem \ref{t-self-i-dual} to present either two implicate sets $I,I'\in\cI(h)$ such that $|I\cap I'|=1$ or a subset $S\subseteq V$ such that $S\notin\cI(h)$ and $|S\cap I|\neq 1$ for all $I\in\cI(h)$.

In the first case we can test memberships $I,I'\in \cI(h)$ in polynomial time using any Horn CNF representation of $h$. 

In the second case we can also test $S\notin\cI(h)$ in polynomial time using the Horn CNF representing $h$. Furthermore, we can test for all $v\in S$ if $v\in I=\IS_h((V\setminus S)+v)$ in polynomial time by Theorem \ref{l-IS-computation}. If the answer is ``No'' for all $v\in S$, then we must have $|S\cap I|\neq 1$ for all $I\in\cI(h)$. 
\end{proof}

\section{Conclusions and open problems}

In the present paper, we introduced the notion of hypergraph Horn functions, a subclass of Boolean functions with rich structural properties that generalizes matroids and equivalence relations. We gave several equivalent characterizations of hypergraph Horn functions, introduced a new duality that generalizes matroid duality, and studied algorithmic problems including the recognition problem, key realization, and implicate set generation. 

We conclude our paper with listing a few open problems. Assume that $\Psi$ and $\Gamma$ are definite Horn CNFs, and all questions are about polynomial time computability.

\begin{qu}
Does $\Psi \leq \Gamma^i$ hold?
\end{qu}

\begin{qu}
Does $\Psi=\Psi^i$ hold?
\end{qu}

\begin{qu}
Is there a Sperner hypergraph $\cH\subseteq 2^V$ such that $\Psi=\Phi_\cH$?
\end{qu}

\medskip
\paragraph{Acknowledgements.} This  work  was  supported  by  the  Research  Institute  for  Mathematical  Sciences,  an  International Joint Usage/Research Center located in Kyoto University. The first author was supported by the Lend\"ulet Program of the Hungarian Academy of Sciences -- grant number LP2021-1/2021 and by the Hungarian National Research, Development and Innovation Office -- NKFIH, grant number FK128673. The third author was supported by JSPS KAKENHI Grant Numbers JP20H05967, JP19K22841, and JP20H00609.

\bibliographystyle{abbrv}
\bibliography{hypergraph_horn} 

\end{document}